\numberwithin{equation}{section}
\newtheorem{theorem}{Theorem}[section]
\newtheorem{lemma}[theorem]{Lemma}
\theoremstyle{definition}
\newtheorem{remark}[theorem]{Remark}
\newcommand{\rset}{\mathbb{R}}
\newcommand{\Rd}{{\rm d}}
\newcommand{\bB}{\bar{B}} %
\newcommand{\bA}{\bar{A}} %
\newcommand{\dD}{{\rm D}}
\newcommand{\Si}{\mathrm{Si}}
\newcommand{\prt}[1]{\left( #1\right)}
\title[Langevin dynamics from quantum mechanics]{Classical Langevin dynamics derived from quantum mechanics} 
\author{H{\aa}kon Hoel}
\address{
Computer, Electrical and Mathematical Science and Engineering Division, 
King Abdullah University of Science and Technology, Thuwal}
\email{haakonah1@gmail.com, hakon.hoel@kaust.edu.sa}
\author{Anders Szepessy}
\address{Institutionen f\"or Matematik, Kungl. Tekniska H\"ogskolan, 100 44 Stockholm, Sweden}
\email{szepessy@kth.se}
\thanks{The research was supported by
Swedish Research Council 621-2014-4776} 
\subjclass[2010]{82C31, 82C10, 65C30, 60H10}
\date{} 
\begin{document}

\begin{abstract}
The classical work by Zwanzig [J. Stat. Phys. 9 (1973) 215-220] derived Langevin dynamics from a Hamiltonian system of a heavy particle coupled to a heat bath. This work extends Zwanzig's model to a quantum system and formulates a more general coupling between a particle system and a heat bath. The main result proves that {\it ab initio} Langevin molecular dynamics, with a certain rank one friction matrix determined by the coupling, approximates for any temperature canonical quantum observables, based on the system coordinates,  more accurately than any Hamiltonian system in these coordinates, for large mass ratio between the system and the heat bath nuclei. 
\end{abstract}

\maketitle
\tableofcontents

\section{Langevin molecular dynamics}

Langevin dynamics for (unit mass) particle systems with position coordinates 
$X_L:[0,\infty)\to\rset^N$ and momentum coordinates $P_L:[0,\infty)\to \rset^N$ , 
defined by 
\begin{equation}\label{Ito_langevin2}
\begin{split}
{\rm d}X_L(t) &= P_L(t){\rm d}t\\
{\rm d}P_L(t) &= -\nabla \lambda\big(X_L(t)\big){\rm d}t - \bar\kappa P_L(t){\rm d}t  + (2 \bar\kappa T)^{1/2}\ {\rm d} W(t)\, ,
\end{split}
\end{equation}
is used for instance to simulate molecular dynamics in the canonical ensemble
of constant temperature $T$, volume and number of particles, where $W$ denotes the standard  
Wiener process with $N$ independent components.
The purpose of this work is 
to precisely determine both the potential $\lambda:\mathbb R^N\to \mathbb R$ and the friction matrix  
$\bar\kappa\in \mathbb R^{N\times N}$ in this equation, from a quantum mechanical
model of a molecular system including weak coupling to a heat bath.

Molecular systems are described by the Schr\"odinger equation
with a potential based on Coulomb interaction of all nuclei and electrons in the system. 
This quantum mechanical model
is complete in the sense that no unknown parameters enter - the observables in the canonical ensemble
are determined from the Hamiltonian and the temperature. The classical limit of the quantum formulation,
yields an accurate approximation of the observables based on the nuclei only,
for large nuclei--electron mass ratio $M$. Ab initio molecular dynamics based on the electron ground state eigenvalue can be used when the temperature is low compared to the first electron eigenvalue gap. 
A certain weighted average of different ab initio dynamics, corresponding to each electron eigenvalue, approximates quantum observables for any temperature, see \cite{KSS}, also in the case of  
observables including time correlation and many particles.
The elimination of the electrons provides a substantial
computational reduction, making it possible to simulate large molecular systems, cf. \cite{marx}.

 In molecular dynamics simulations one often wants to determine properties of a large macroscopic system with many particles, say $N\sim 10^{23}$. Such large particle systems cannot yet be simulated on a computer and one may then ask for a setting where a smaller system has similar properties as the large. Therefore, we seek an equilibrium density that has the property that the marginal distribution for a subsystem has the same density as the whole system. In \cite{KSS} it is motivated
 how this assumption leads to the Gibbs measure, i.e. the canonical ensemble; this is also the motivation to
 use the canonical ensemble for the composite system in this work, although some studies on heat bath models
 use the microcanonical ensemble for the composite whole system.

Langevin dynamics is often introduced
to sample initial configurations from the
Gibbs distribution and to avoid to simulate the dynamics of all heat bath particles.
 The friction/damping parameter in the Langevin equation is then typically set 
small enough to not perturb the dynamics too much and large enough to avoid long sampling times. 

The purpose of this work is to show that  Langevin molecular dynamics, for the non heat bath nuclei, with a certain friction/damping parameter determined from the Hamiltonian,
approximates  the quantum system, in the canonical ensemble for any temperature, 
more accurately than any Hamiltonian dynamics (for the non heat bath particles)
in the case the system is 
weakly coupled to a heat bath of many fast particles.

Our heat bath model is based on the assumption of weak coupling 
- in the sense that the  perturbation in the system from the heat bath is small and vice versa -
which we show leads to  Zwanzig's model for nonlinear generalized Langevin equations in \cite{zwanzig_1973}, with a harmonic oscillator heat bath.
Zwanzig also derives a pure Langevin equation: he assumes first a continuous Debye distribution
of the eigenvalues of the heat bath potential energy quadratic form; in the next step he lets  
 the coupling to the heat bath have a special form, so that the integral kernel for the friction term in the generalized Langevin equation becomes
a Dirac-delta measure. Our derivation uses a heat bath based on
nearest neighbour interaction on an infinite 
cubic lattice, so that the continuum distribution of eigenvalues is rigorously obtained by
considering a difference operator with an infinite number of nodes.
Our convergence towards a pure Langevin equation is not based on a Dirac-delta measure for the integral kernel
but obtained
from the time scale separation of the fast heat bath particles and the slower system particles,
which allows a general coupling and a bounded covariance matrix for the fluctuations. 
The fast heat bath dynamics is provided either from light heat bath particles or a stiff heat bath potential.
By a stiff heat bath we mean that the
smallest eigenvalue of
the Hessian of the heat bath potential energy is of the order $\chi^{-1}$, where $\chi\ll 1$. We show that
the friction/damping coefficient  in the pure Langevin equation is determined by the 
derivative of the forces on the heat bath particles with respect to the system particle positions.
We also prove that the observables of the system coordinates in the system-heat bath quantum model
can be approximated using this Langevin dynamics with accuracy $\mathcal O(m\log m^{-1} +(mM)^{-1})$,
where $M\gg 1$ is the system nuclei--electron mass ratio and $m\ll 1$ is the heat bath nuclei -- system nuclei mass ratio;
the approximation  by a Hamiltonian system
yields the corresponding larger error estimate $\mathcal O(m^{1/2} +(mM)^{-1})$, in the case of light heat bath particles. 
In this sense, our Langevin equation is a better approximation.
The case with a stiff heat bath system has the analogous error estimate 
$\mathcal O(\chi^{2\delta}\log\chi^{-1} +M^{-1})$ with the correct friction/damping parameter, while a
Hamiltonian system gives the larger error $\mathcal O(\chi^{2\delta-1/2} +M^{-1})$, where $\chi^\delta\ll \chi^{1/4}$
measures the coupling between the heat bath and the system. 
Our main assumptions are:
\begin{itemize}
\item the coupling between the system and the heat bath is weak and localized,  
\item either the heat bath particles are much lighter than the system particles or the heat bath is stiff,
\item the harmonic oscillator heat bath
is constructed from nearest neighbour interaction on an infinite cubic lattice in dimension three,  
\item the heat bath particles are initially randomly Gibbs distributed (conditioned by the system particle coordinates), and 
\item the system potential energy and the observables are sufficiently regular. 
\end{itemize}

The system and the heat bath is modelled by  a Hamiltonian where the system potential energy is perturbed by
$\bar V(x,X)$ with system particle positions $X\in\mathbb R^N$ and heat bath particle positions $x\in\mathbb R^n$.
Our assumption of weak coupling is formulated as the requirement that the potential $\bar V$ satisfies
\[
\min_{x\in \rset^n} \bar V(x,X)=\bar V\big(a(X),X\big)=0\, .
\]
Theorem \ref{thm1} proves that the obtained minimizer $a(X)\in \mathbb R^n$ determines the friction matrix $\bar\kappa$ in \eqref{Ito_langevin2} by the rank one $N\times N$ matrix
\[
\bar\kappa_{\ell\ell'}=\bar c m^{1/2}\langle \bar V''(a,X)\partial_{X_\ell} a,\bar V''(a,X)\partial_{X_{\ell'}}a\rangle\, ,
\]
where $\bar V$ and $a$ have limits as $n\to \infty$; here $\bar V''$ denotes the Hessian of $\bar V$ with respect to $x$, the brackets $\langle\cdot,\cdot\rangle$ denote  the  scalar product in $\mathbb R^n$ and
$\bar c$ is a constant (related to the density of states). 

In a setting when the temperature is small compared to the difference of the two smallest eigenvalues of the Hamiltonian symbol, with given nuclei coordinates, it is well known that ab initio molecular dynamics
is based on the ground state electron eigenvalue as the potential $\lambda$ in \eqref{Ito_langevin2}, cf. \cite{marx}.
When the temperature is larger, excited electron states influence the nuclei dynamics. The work \cite{KSS}
derives a molecular dynamics approximation of quantum observables, 
including time correlations, as a certain weighted average of
ab initio observables in the excited states, and these ideas are put into the context 
of this work in Section \ref{sec2}.

The analysis of particles in a heat bath has a long history, starting with the work by Einstein and Smoluchowski.
The Langevin equation was introduced in \cite{langevin} to study Brownian motion mathematically, 
before the Wiener process was available. 
Early results on the elimination of the heat bath degrees of freedom to obtain a Langevin equation are \cite{ford_kac1, ford_kac2} and \cite{zwanzig_1973},\cite[Section 9.3]{zwanzig_book}.
The work \cite{ford_kac1, ford_kac2} include in addition a derivation of a quantum Langevin equation, which is based on an operator version of the classical Langevin equation. 
Our study of the classical Langevin equation from quantum mechanics is not related to this quantum Langevin equation.
We start with an ab initio quantum model of the system coupled to a heat bath in the canonical ensemble and use its classical limit to eliminate the electron degrees of freedom. Then the classical system coupled to the heat bath is analyzed by separation of time scales.
The separation of time scales of light bath particles
and heavier system particles was first used in \cite{lebowitz_rubin}, see also 
\cite{Mazur-Oppenheim,Oppenheim-1996}
to determine a Fokker-Planck equation for the system particle,
from the Liouville equation of the coupled system using a formal expansion in the small mass ratio.
Section 8 in \cite{pavliotis}
presents a proof, and several references to related work, where the Langevin equation is derived from the generalized Langevin equation, using exponential decay of the kernel in the memory term of the generalized Langevin equation.

Our contribution employs the separation of time scales approach previously used in~\cite{lebowitz_rubin,Mazur-Oppenheim,Oppenheim-1996}, but a novelty
is that we here present mathematical proofs for the weak convergence rate of 
Langevin dynamics towards quantum mechanics.
Our work also differs from~\cite[Section 8]{pavliotis}, wherein the kernel of the generalized Langevin equation is assumed to be such that by adding a finite dimensional variable the system becomes Markovian and the kernel tends to a point mass. For instance, our kernel vanishes as $m\to 0^+$ and we use the density of  heat bath states to determine the kernel. Using the precise information from the density of states in the case of heat bath nearest neighbor interaction in a cubic lattice we obtain a positive definite friction matrix $\bar\kappa$, while if the nearest neighbor interaction would be related to a lattice in dimension four the friction matrix would vanish, see Remark \ref{rem_vanish}.

The main new ideas in our work are the first principles formulation from quantum mechanics,  the weak coupling condition as a minimization, 
the precise use of the density of heat bath states,
that  
the error estimate uses stability of the Kolmogorov backward equation for the Langevin equation
evaluated along the dynamics of the coupled system, and formulation of numerical schemes and numerical results related to Langevin dynamics approximation of particles systems.

Section~\ref{sec_model} formulates a classical model for the system and the heat bath, including the weak coupling, and derives the corresponding generalized Langevin equation, based on a memory term with a specific integral kernel, in a classical molecular dynamics setting. The generalized Langevin equation is analyzed in Section \ref{sec3}, with subsections related to the dissipation, fluctuations and approximation by pure Langevin dynamics in the case of light heat bath particles. The main result of Section \ref{sec3} is Theorem \ref{thm1}, where a certain Langevin dynamics is shown to approximate
a classical system weakly coupled to a heat bath.
Section \ref{chi_sec} extends the analysis to the case of stiff heat baths and derives a  corresponding approximation result in Theorem \ref{thm2}. Section \ref{sec2} relates the classical model to a quantum formulation and provides background and error estimates on quantum observables in the canonical ensemble approximated by classical molecular dynamics. 
The main result of the work is Theorem \ref{kombin}, which proves that for any temperature canonical quantum observables based on the system coordinates can be accurately approximates by the Langevin dynamics obtained in Theorems \ref{thm1} and \ref{thm2}. Section \ref{sec_num} includes numerical results of the system particle density and autocorrelation of the coupled system and heat bath approximated by the particle density and autocorrelation for the Langevin dynamics.

\section{The model of the system and the heat bath}\label{sec_model}
We consider in this section a classical model of
a molecular system, with position coordinates  $X\in\rset^N$ and momentum coordinates $P\in \rset^N$, coupled to a heat bath, with position and momentum coordinates $x\in \rset^n$ and $p\in\rset^n$, respectively, represented by the Hamiltonian
\[
\frac{|P|^2}{2}  + \frac{|p|^2}{2m} +\lambda(X)+\bar V(x,X)\, ,
\]
where $\lambda:\rset^N\to \mathbb R$ is the potential energy for the system
and $\bar V:\rset^n\times\rset^{N}\to\mathbb R$ is the potential energy for the heat bath including the coupling to the
system. The parameter $m$ is the mass ratio between  heat bath nuclei and system nuclei.
We have set the time scale so that the system nuclei mass is one.
In Section \ref{sec2} we show that this model is the classical limit of a quantum model and study the accuracy
of the classical approximation, also in the case with system nuclei  that have different masses.

We study small perturbations of
the equilibrium bath state $x=a(X)\in \rset^n$ where 
\[
\min_{x\in \rset^n} \bar V(x,X) = 
\bar V\big(a(X),X\big)\, .
\]
Taylor expansion around the equilibrium yields
\[
\bar V(x,X)=\bar V\big(a(X),X\big) +\frac{1}{2}\big\langle x-a(X),
\bar V''_{xx}\big(a(X)+\xi(x-a(X)),X\big)\big(x-a(X)\big)\big\rangle
\]
for some $\xi\in[0,1]$, where $\bar V_{xx}$ is the Hessian matrix  in $\mathbb R^{n\times n}$, with respect to the $x$ coordinate,
and the notation $\langle \cdot, \cdot\rangle$ is 
the standard scalar product in $\mathbb C^n$.
We assume that the coupling between the system and the heat bath is weak, which means that the perturbation in the system from the heat bath must be small. As the perturbation $|x-a(X)|\to 0$ we therefore require that
\begin{equation}\label{2}
\bar V(a(X),X)=0\, .
\end{equation}
Weak coupling also means that the perturbation in the heat bath from the system
is small, so that the influence on the Hessian $\bar V''_{xx}$ from $X$ is negligible,
and we assume therefore that
\begin{equation}\label{3}
\bar V_{xx}''\big(a(X)+\xi(x-a),X\big)=C
\end{equation}
where $C$ is constant, symmetric and positive definite.  Hence the vibration frequencies 
for $x$ are assumed to be constant for all $x$ and $X$. We use the notation $C=\bar V''$ below.

The assumptions 
\eqref{2} and \eqref{3} of weak coupling lead to the 
Hamiltonian $H:\rset^N\times\rset^N\times\rset^n\times\rset^n\to \rset$ 
\begin{equation}\label{star}
H(X,P,x,p)= 
\frac{|P|^2}{2}  + \frac{|p|^2}{2m} + \lambda(X) 
+ \frac{1}{2} \langle \bar V'' \big(x-a(X)\big), \, x-a(X) \rangle\, .
\end{equation}
This model \eqref{star} is of the same form as the model of interaction with a heat bath introduced and analysed by Zwanzig in the seminal work \cite{zwanzig_1973}, although here the motivation with weak coupling and several system particles is different.

The Hamiltonian \eqref{star} yields the dynamics
\[
\begin{split}
\dot X_t &= P_t\\
\dot P_t &= -\nabla\lambda(X_t) + \langle \bar V''\big(x_t-a(X_t)\big), \nabla a(X_t)\rangle \\
\dot x_t &= p_t/m\\
\dot p_t &= -\bar V''\big(x_t-a(X_t)\big)\, 
\end{split}
\]
and the change of variables $(m\bar V'')^{1/2} r_t:=p_t$ 
implies
\[
\begin{split}
\dot x_t &= m^{-1/2}\bar V''^{1/2} r_t\\
\dot r_t &= -m^{-1/2}\bar V''^{1/2} \big(x_t-a(X_t)\big)\, .
\end{split}
\]
Define $\varphi_t:=x_t-a(X_t)+ {\rm i} r_t$ to obtain
\begin{equation}\label{1.2}
\begin{split}
\dot X_t &= P_t\\
\dot P_t &= -\nabla \lambda(X_t) + {\rm Re}\langle\bar V''\varphi_t,\nabla a(X_t)\rangle\\
{\rm i}\dot\varphi_t &= m^{-1/2}\bar V''^{1/2} \varphi_t -{\rm  i} \dot X_t\cdot \nabla a(X_t)\, ,
\end{split}
\end{equation}
where 
the third equation uses the notation $v\cdot w$ 
for the standard scalar product in $\mathbb R^N$.

Assume that $x_0$ and $p_0$ are Gaussian with the distributions provided by the
marginals of the Gibbs density
\[
\frac{e^{-H(X,P,x,p)/T}}{\int_{\rset^{2N+2n}} e^{-H(X,P,x,p)/T} {\rm d}X{\rm d}P{\rm d}x{\rm d}p}\, ,
\]
that is, the momentum $p$ is multivariate normal distributed with mean zero and covariance matrix $mT\, {\rm I}$ and independent of
$x$, which is multivariate normal distributed with mean $a(X)$ and covariance matrix $T(\bar V'')^{-1}$.
Consequently the initial data can be written
\begin{equation}\label{phi0}
\varphi_0=\sum_{k=1}^n \gamma_k\nu_k'
\end{equation}
using the orthogonal eigenvectors $\nu_m'$, normalized as $\langle\nu'_k,\nu'_k\rangle=1$, and eigenvalues $\mu_m$ of $\bar V''$
\begin{equation}\label{egenvekt}
\begin{split}
\bar V''\nu_m' &= \mu_m\nu_m'\, ,\quad
\nu_m'\in \rset^n \mbox{ and } \mu_m\in\rset_+
\end{split}
\end{equation}
and the definition
\begin{equation}\label{gamma0}
\gamma_m:=\gamma_{m}^r+{\rm i}\gamma_{m}^i
\end{equation}
with $\gamma_{m}^r$ and $\gamma_m^i, \ m=1,\ldots, n$, independent and normal distributed real scalar random numbers with mean zero and variance $T/\mu_m$.

Duhamel's principle shows that
\[
\varphi_t= e^{-{\rm i} \bar V''^{1/2} m^{-1/2} t}\varphi_0 - \int_0^t e^{-{\rm i}\bar V''^{1/2}m^{-1/2}(t-s)} \dot a(X_s) {\rm d}s
\]
which implies a form of Zwanzig's generalized  Langevin equation
\begin{equation}\label{1.5}
\begin{split}
\ddot X_t&= -\nabla \lambda(X_t) - \int_0^t \langle
\bar V'' \cos\big(\frac{(t-s) \bar V''^{1/2}}{m^{1/2}}\big) \dot a(X_s), \nabla a(X_t)\rangle{\rm d} s\\
&\quad + {\rm Re}\langle \bar V'' e^{-{\rm i} \bar V''^{1/2} m^{-1/2}t}\varphi_0, \nabla a(X_t)\rangle\, ,
\end{split}
\end{equation}
with non Markovian friction term given by the integral and a noise term including the stochastic initial data $\varphi_0$.
We study two different cases:
\begin{itemize}
\item either the mass ratio $m\ll 1$ 
is small, or
\item the smallest eigenvalue of $\bar V''$ is large of the order $\chi^{-1}$ while the coupling
derivative $\|\nabla a\|$ is small of size $\chi^\delta$ with $\delta>1/4$.
\end{itemize}
In these cases,
both the friction and the noise terms are based on highly oscillatory functions,
which will make these contributions small, as explained in the next section. To simplify the analysis, we assume also
that 
\begin{equation}\label{Da_const}
\mbox{$\nabla a(\cdot)$ is constant.}
\end{equation}

\section{Analysis of the generalized Langevin equation for $m\ll 1$}\label{sec3}
In this section we first study the 
dissipation term $\int_0^t \langle\bar V'' \cos\big(\frac{(t-s)\bar V''^{1/2}}{m^{1/2}}\big) \dot a(X_s), \nabla a\rangle
{\rm d} s$ and the fluctuation term ${\rm Re}\langle \bar V'' e^{-{\rm i} \bar V''^{1/2} m^{1/2} t}\varphi_0, \nabla a\rangle$ in
\eqref{1.5},
as the number of bath particles tend to infinity and the mass ratio, $m$, between light heat bath nuclei and
heavier system nuclei is small.
Then in Section \ref{system_dyn} 
we prove an priori estimate of $\sup_{s<t}\mathbb E[|X_s|^2+|\dot X_s|^2+|\ddot X_s|^2]$.
Section \ref{du}   uses the 
limit terms  to construct a
Langevin equation, namely the It\^{o} stochastic differential equation
\begin{equation}\label{Ito_langevin}
\begin{split}
{\rm d}X_L(t) &= P_L(t){\rm d}t\\
{\rm d}P_L(t) &= -\nabla \lambda\big(X_L(t)\big){\rm d}t - m^{1/2}\kappa P_L(t){\rm d}t  + (2 m^{1/2}\kappa T)^{1/2}\ {\rm d} W(t)\, ,
\end{split}
\end{equation}
with a certain symmetric friction  matrix $\kappa\in \mathbb R^{N\times N}$ and a Wiener process $W:[0,\infty)\times\Omega\to \mathbb R^N$, with $N$ independent components; here $\Omega$ is the set of outcomes for the process $(X,P):[0,\infty)\times \Omega\to\mathbb R^{2N}$.
Finally, we use the solution of the Kolmogorov backward equation for the Langevin dynamics
along a solution path $(X_t,P_t)$ of \eqref{1.5} to derive an error estimate of the approximation, namely
 $\mathbb E[g(X_t,P_t)]-\mathbb E[g(X_L(t),P_L(t)]=\mathcal O(m\log m^{-1})$, for  any
given smooth bounded observable $g:\mathbb R^{2N}\to \mathbb R$ and equal initial data $(X(0),P(0))=
(X_L(0),P_L(0))$.

\subsection{The dissipation term and a precise heat bath}
The change of variables
\[
\frac{t-s}{m^{1/2}} =\tau
\]
yields
\[
\begin{split}
&\int_0^t \langle \bar V'' \cos\big(\frac{(t-s)V''^{1/2}}{m^{1/2}}\big) \dot a(X_s), \partial_{X^\ell} a\rangle{\rm d} s\\
&= m^{1/2}\int_0^{t/\sqrt{m}}  \langle\bar V'' \cos(\tau \bar V''^{1/2}) \dot a(X_{t-m^{1/2}\tau}),  \partial_{X^\ell}a\rangle {\rm d} \tau\\
&=\sum_{\ell'}
m^{1/2}\int_0^{t/\sqrt{m}}  \langle\bar V'' \cos(\tau\bar V''^{1/2}) \partial_{X^{\ell'}}a, \partial_{X^{\ell}} a \rangle  \dot X^{\ell'}_{t-\sqrt{m}\tau} {\rm d}\tau
\, .
\end{split}
\]
We require this dissipation term to be small, so that the coupling to the heat bath yields a small perturbation
of the dynamics for $X$ and $P$. If $\bar V''$ and $\nabla a$ are of order one,
the mass $m$ needs to be small, or if $\nabla a$ is small we can have $\bar V''$ large.
The case with small $m$ is studied in this section and the case with large $\bar V''$ is in
Section \ref{chi_sec}.

A small mass 
also requires the integrand to decay as $\tau\to\infty$.
We will use Fourier analysis to study the decay of the kernel
\[
K_n^{\ell\ell'}:= \langle\bar V'' \cos(\tau \bar V''^{1/2}) \partial_{X^{\ell'}}a , \partial_{X^{\ell}} a\rangle \, ,
\]
by writing  $\lim_{n\to\infty} K_n$ as an integral, which is the next step in the analysis.

 The kernel $K_n$ 
is based on the equilibrium heat bath position derivative $\partial_{X^\ell}a$ and this derivative
is determined by the
derivative of the force on bath particle $x_j$, with respect to $X^\ell$, by \eqref{star} as
\[
F'_{\ell,j}:= -\partial_{X^\ell}\partial_{x_j}\bar V(x,X)
=(\bar V''\partial_{X^\ell} a)_j\, ,
\]
which, for fixed $\ell$ and $j$, by assumptions \eqref{3} and \eqref{Da_const} is independent of $x$ and $X$.
We assume that $\bar V$ is constructed so that this force derivative is localized in the sense
\begin{equation}\label{F_cond}
\begin{split}
&\sum_{j} |F'_{\ell,j}|(1+|j-\hat \ell|^2)=\mathcal O(1)\, , \quad \mbox{ as $n\to\infty$, for $\hat\ell:={\rm argmin}_{j}|X_\ell-x_j|$} \\
&\mbox{ $\lim_{n\to\infty}F'_{\ell,j}$ exists.}
\end{split}
\end{equation}
%

Let $\nu_k(j):=\nu_k'(j)/\max_j|\nu_k'(j)|$ be the set of orthogonal eigenvectors of $\bar V''$
that are normalized to one in the maximum norm.The eigenvalue representation of $\bar V''$ in \eqref{egenvekt}
and the definition
\[
\omega_k^2={\mu_k}
\]
yields
\begin{equation}\label{f_def}
\begin{split}
\beta_{\ell,k} &:= \sum_j  F'_{\ell,j} \nu_k^*(j)=\langle \nu_k,F'_{\ell\cdot}\rangle\, ,\\
 F'_{\ell,j}&=(\bar V'' \partial_{X^\ell} a)_j  = \sum_k \frac{\beta_{\ell,k}}{\|\nu_k\|^2_2} \nu_k(j)\, ,\\
\partial_{X^\ell} a_j &=\sum_k \frac{\beta_{\ell,k}}{\omega_k^2\,  \|\nu_k\|^2_2} \nu_k(j)\, ,\\
\end{split}
\end{equation}
%
%
where $\|\nu_k\|_2:=(\sum_j|\nu_k(j)|^2)^{1/2}$. Consequently we obtain
\begin{equation}\label{K_n_res}
\begin{split}
K_n^{\ell\ell'}(\tau) &= \langle \bar V'' \cos(\tau \bar V''^{1/2}) \partial_{X^\ell}a, \partial_{X^{\ell'}}a\rangle\\
&= \sum_{k} \cos(\tau\omega_k) \frac{\beta_{\ell,k}^* \beta_{\ell',k}}{\omega_k^2\, \|\nu_k\|_2^2}\, .\\
\end{split}
\end{equation}

If $n$ is finite and we make a tiny perturbation of $\bar V''$ so that all $\omega_k$ become rational, the function $K_n$ will be periodic in $\tau$
and consequently it will not decay for large $\tau$. To obtain a decaying kernel we will therefore consider a heat bath with infinite number of particles, $n=\infty$. The next step is consequently to study the limit of $K_n$, as $n\to\infty$, which requires a more precise formulation of the heat bath.

\subsubsection{A precise heat bath}
We use the periodic lattice \[
E_{\bar n}:=
\{-\bar n/2, -\bar n/2+1, \ldots, \bar n/2-1\}^3\subset \rset^3\]
in dimension three to form the equilibrium positions for the heat bath. The position deviation from the equilibrium, namely  $\bar{x}_j:=x_j-a_j(X)\in \rset$ for each particle $j\in E_{\bar n}$, 
then determines the potential by nearest neighbour interaction in the lattice 
\begin{equation}\label{lattice_def}
\begin{split}
\langle \bar V'' \bar{x}, \bar{x}\rangle &= c^2\sum_{i=1}^3 \sum_{j\in E_{\bar n}}
|\bar{x}_{j+e_i}-\bar{x}_j|^2 
+\eta_{n}^2\sum_{i=1}^3 \sum_{j\in E_{\bar n}}
|\bar{x}_{j+e_i}|^2\\
&= c^2\sum_{i=1}^3 \sum_{j\in E_{\bar n}}(-\bar{x}_{j+e_i}+2\bar{x}_{j}-\bar{x}_{j-e_i}) \bar{x}_{j}
+\eta_{n}^2\sum_{i=1}^3 \sum_{j\in E_{\bar n}}
|\bar{x}_{j+e_i}|^2\\
\end{split}
\end{equation}
with periodic boundary conditions $\bar{x}_{j+\bar ne_i}=\bar{x}_j$,
where $j=(j_1,j_2,j_3), \ e_1=(1,0,0), e_2=(0,1,0), e_3=(0,0,1)$,
$ \bar n^3=n$ and $c$ is a positive constant independent of $n$. The small positive constant $\eta_n$ is introduced to make the potential strictly convex for finite $n$, while it vanishes asymptotically and satisfies
\begin{equation}\label{xi_n}
n^{1/2}\eta_n\to \infty
\mbox{ and } \eta_n\to 0^+\  \mbox{ as } n\to\infty\, .
\end{equation}
We will see in \eqref{f_kappa} that the zero limit in \eqref{xi_n} 
is needed in our model to obtain non zero friction matrices $\kappa$.
The lower bound in \eqref{xi_n} implies by \eqref{f_def} that $\|\partial_{X^\ell}a\|_{\ell^\infty}$ remains bounded, provided $\|\beta_{\ell,\cdot}\|_{\ell^1}$ is bounded. 
We note that
$\bar V''$ consists of the standard  finite difference matrix with mesh size one related to the Laplacian in $\mathbb R^3$ and a small positive definite perturbation $\eta_n^2\, {\rm I}$, where ${\rm I}$ is the corresponding identity matrix. The minimum of the potential is obtained for the position deviations $\bar{x}_j=0, \ j\in E_{\bar n}$, which we may view as the $n$ heat bath particles located on the $n$ different lattice points in $E_{\bar n}$. This heat bath model can be extended to positions $\bar{x}_j\in\rset^3$, see Remark \ref{heatbath_extend}.

In each coordinate direction the discrete Laplacian is a circulant matrix so that the eigenvectors and eigenvalues can be written
\begin{equation}\label{hb_model}
\begin{split}
\nu_k(j) &= e^{2\pi {\rm i} j\cdot k/\bar n}\quad,\,  j_i=-\bar n/2,\ldots, \bar n/2-1,\mbox{ and } \ k_i=-\bar n/2, \ldots, \bar n/2-1\, ,\\
\omega_k^2&=c^2\sum_{i=1}^3 2\big(1-\cos(\frac{2\pi k_i}{\bar n})\big)+ \eta_n^2\, ,\\
\end{split}
\end{equation}
and $\|\nu_k\|_2^2=\sum_{i=1}^3 \sum_{j_i=-\bar n/2}^{\bar n/2-1} |\nu_k(j)|^2=\bar n^3$. Let $\mathbf r:=(\frac{k_1}{\bar n}, \frac{k_2}{\bar n},\frac{k_3}{\bar n})$ and $r:=\sqrt{k_1^2+k_2^2+k_3^2}/\bar n$. 
We have for $k=\bar n\mathbf r$ by \eqref{xi_n}
\[
\begin{split}
\nu_k(j) &= e^{2\pi {\rm i} j\cdot \mathbf r}=:\nu(\mathbf r,j)\, ,\\
\lim_{n\to\infty}\omega_k^2&=c^2\sum_{i=1}^3 2\big(1-\cos(2\pi r_i)\big)=: \big(\omega(\mathbf r)\big)^2\, ,\\
\end{split}
\]
and we assume that the derivative of the force
has a limit 
\begin{equation}\label{f_def2}
\lim_{n\to\infty} F'_{\ell,j}=\bar F_{\ell,j}\, ,
\end{equation}
which implies 
\begin{equation}\label{beta_00}
\begin{split}
\lim_{{\tiny \begin{array}{c}
n\to\infty\\
\frac{k}{\bar n}=\mathbf r
\end{array}}} \beta_{\ell,k} &=\sum_{j\in E_\infty} \bar F_{\ell,j}\nu^*(\mathbf r,j)=:\beta_{\ell}(\mathbf r)\, ,\\
\beta_\ell(0) &= \sum_{j\in E_\infty} \bar F_{\ell,j}\, ,
\end{split}\end{equation}
that is,  the function 
$\beta_\ell:[-1/2,1/2]^3\to \rset$ has the Fourier  coefficients $\bar F_{\ell,j}$. The value $\beta_\ell(0)$ will be used in \eqref{f_kappa} to determine the friction matrix $\kappa$.

\begin{remark}\label{heatbath_extend}
The heat bath model \eqref{lattice_def} can be extended to have $\bar x_j=x_j-a_j(X)\in\rset^3$, with $\bar x_j=({\bar x_j^1},{\bar x_j^2},{\bar x_j^3})$. In the case 
\[
\langle\bar V'' \bar x, \bar x\rangle
=\sum_{k=1}^3\sum_{i=1}^3 \sum_{j\in E_{\bar n}}c_k^2(-{\bar x_{j+e_i}^{k}}+2{\bar x_{j}^k}-{\bar x_{j-e_i}^{k}})  {\bar x_{j}^k}
+\eta_{n}^2\sum_{i=1}^3 \sum_{j\in E_{\bar n}}
|\bar x_{j+e_i}|^2\]
we still have the same eigenvalues if $c_k^2=c^2$, so the model does not change in principle. If on the other hand the constants $c_k$ are different for the different components of $\bar x_j$ the spectrum may change and we obtain a different heat bath model.
\end{remark}
\subsubsection{The limit friction matrix}
We can  take the limit as $\bar n\to \infty$ in \eqref{K_n_res}, while $\mathbf r=k/\bar n$ is constant,
to obtain an integral
\begin{equation}\label{K_infty_bdd}
\begin{split}
\lim_{n\to\infty} K_n^{\ell\ell'}(\tau)&=
\lim_{n\to\infty} \sum_{k} \cos(\tau\omega_k) \frac{\beta_{\ell,k}^* \beta_{\ell',k}}{\omega_k^2\|\nu_k\|_2^2}\\
&= \int_{[-\frac{1}{2},\frac{1}{2}]^3} \cos\big(\tau\omega(\mathbf r)\big) 
\frac{\beta_{\ell}^*(\mathbf r) \beta_{\ell'}(\mathbf r)}{
\big(\omega(\mathbf r)\big)^2}{\rm d}r_1 {\rm d}r_2{\rm d}r_3\\
&=:K_\infty^{\ell\ell'}(\tau)\, .
\end{split}
\end{equation}
The change of variables $\omega_i=2^{1/2} c \sqrt{1-\cos(2\pi r_i)}\, {\rm sgn}(r_i)$ yields, with the spherical coordinate
$\boldsymbol\omega=(\cos\alpha\cos\theta,\sin\alpha \cos\theta, \sin\theta)\omega$,
\[
\begin{split}
&\lim_{n\to\infty} \sum_{k} \cos(\tau\omega_k) \frac{\beta_{\ell,k}^* \beta_{\ell',k}}{\omega_k^2\|\nu_k\|_2^2}\\
&= \int_{{\boldsymbol\omega}([-\frac{1}{2},\frac{1}{2}]^3)} \cos(\tau\omega) 
\beta_{\ell}^*(\mathbf r(\boldsymbol\omega)) \beta_{\ell'}(\mathbf r(\boldsymbol\omega))\prod_{i=1}^3 
\Big(\frac{\sqrt{1-\cos\big(2\pi r_i(\omega_i)\big)}}{2^{1/2}\pi c\sin\big(2\pi r_i(\omega_i)\big)}\Big)
\frac{{\rm d}\boldsymbol\omega}{
|\omega|^2}\\
&=\int_{{\boldsymbol\omega}([-\frac{1}{2},\frac{1}{2}]^3)} \cos(\tau\omega) 
f(\boldsymbol\omega,\ell,\ell')\frac{{\rm d}\boldsymbol\omega}{|\omega|^2}\\
&=\int_{0}^{\pi}\int_0^{2\pi}\int_0^\infty
\cos(\tau\omega) 
f(\boldsymbol\omega,\ell,\ell')
\sin\theta{\rm d}\omega {\rm d}\alpha  {\rm d}\theta\\
\end{split}
\]
where
\begin{equation}\label{eq:fOmegaDef}
f(\boldsymbol{\omega},\ell,\ell') := 
\begin{cases}
\beta_{\ell'}(\boldsymbol\omega) \beta_{\ell}^*(\boldsymbol\omega)\prod_{i=1}^3 
\pi^{-1} \big(4c^2 - {\omega_i^2}\big)^{-1/2}\, ,
& -2c<\omega_i<2c\, ,\\
0\, , & \mbox{ otherwise}\, .\\
\end{cases}
\end{equation}
with $\beta(\boldsymbol \omega):= \beta(r(\boldsymbol\omega))$.
The term $(4c^2 - \omega_i^2)^{-1/2}$ is unbounded (but integrable)
at the boundary where $\omega_i=\pm 2c$.
For the purpose of simplifying later proofs we will 
assume that $\beta_\ell(\boldsymbol \omega)$ is two times differentiable 
and that it vanishes at the boundary of ${\boldsymbol\omega}([-\frac{1}{2},\frac{1}{2}]^3) = [-2c,2c]^3$ 
as follows:
\begin{equation}\label{beta_assum}
\beta(\boldsymbol \omega) =0 \mbox{ for  }  \boldsymbol \omega \in [-2c, 2c]^3 \mbox{ satisfying } 
\omega = |\boldsymbol \omega|>c \, .
\end{equation}
We also note that the constant $c$ and the density of states are related by
\begin{equation}\label{dof}
|\frac{\Rd \mathbf r(0)}{\Rd \boldsymbol{\omega}}|=
\prod_{i=1}^3 \frac{\Rd r_i}{\Rd \omega_i}\big|_{\omega_i=0}
= \prod_{i=1}^3 \pi^{-1} \big(4c^2 - \omega_i^2\big)^{-1/2} \Big|_{\omega_i=0}
= (2 c \pi)^{-3} \, .
\end{equation}

We are now ready to formulate the limit as $n\to \infty$ in the friction term based on the constant $N\times N$ 
friction matrix $\kappa$.
\begin{lemma}\label{fric_lem} 
Let 
\[
\kappa_{\ell \ell'}:=
\frac{1}{4\pi c^3}\big(\sum_{j\in E_\infty}  \bar F_{\ell j}\big)
\big(\sum_{j\in E_\infty} \bar F_{\ell' j}\big)
\]
and assume that \eqref{2},\eqref{3}, \eqref{phi0}, \eqref{Da_const}, \eqref{F_cond}, \eqref{lattice_def}, \eqref{beta_assum} hold and for each $t>0$ 
there is a constant $C$ such that 
\[
\begin{split}
\sup_{0\le s\le t}(\mathbb E[|\dot P_s|^2])^{1/2} 
+\sup_{0\le s\le t}(\mathbb E[|\dot X_s|^2])^{1/2}\le C\, ,
\end{split}
\]
then for any  function $h\in L^\infty(\rset^{2N})$ 
\begin{equation}\label{convRateH}
\begin{split}
&\lim_{n\to\infty}
\mathbb E[h(X_t,P_t)\int_0^t \langle \bar V'' \cos\big(\frac{(t-s)\sqrt{\bar V''}}{\sqrt{m}}\big) 
\dot{a}(X_s), \partial_{X^\ell} a\rangle {\rm d} s]\\
&=\mathbb E[m^{1/2}h(X_t,P_t)\kappa \dot X_t] + \mathcal O(m\log m^{-1})\, .
\end{split}
\end{equation}
\end{lemma}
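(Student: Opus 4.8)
\emph{Proof proposal.} The plan is to change time variable to $\tau=(t-s)/\sqrt m$, pass to the $n\to\infty$ limit inside the expectation so that the kernel $K_n$ of \eqref{K_n_res} is replaced by the limit $K_\infty^{\ell\ell'}$ of \eqref{K_infty_bdd}, then split off the leading contribution of $K_\infty^{\ell\ell'}$ as a ``point mass at $\tau=0$'' (which produces the matrix $\kappa$), and finally control the error by the decay of $K_\infty^{\ell\ell'}$ at large $\tau$. First, using that $\nabla a$ is constant, the substitution already carried out above writes the integral on the left of \eqref{convRateH} as $m^{1/2}\sum_{\ell'}\int_0^{t/\sqrt m}K_n^{\ell\ell'}(\tau)\dot X^{\ell'}_{t-\sqrt m\tau}\,{\rm d}\tau$. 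To replace $K_n$ by $K_\infty$, note from \eqref{K_n_res} and $|\nu_k(j)|=1$ that $|K_n^{\ell\ell'}(\tau)|\le n^{-1}\big(\sum_j|F'_{\ell j}|\big)\big(\sum_j|F'_{\ell' j}|\big)\sum_k\omega_k^{-2}$, where by \eqref{F_cond} the force sums are $\mathcal O(1)$, while $n^{-1}\sum_k\omega_k^{-2}$ is a Riemann sum for $\int_{[-1/2,1/2]^3}\big(c^2\sum_{i=1}^3 2(1-\cos 2\pi r_i)\big)^{-1}{\rm d}r$, which converges precisely because the lattice is three-dimensional (the $|r|^{-2}$ singularity at the origin being integrable in $\rset^3$), and the $k=0$ term contributes $n^{-1}\eta_n^{-2}\to 0$ by \eqref{xi_n}; hence $C_0:=\sup_n\sup_\tau|K_n^{\ell\ell'}(\tau)|<\infty$. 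Since $K_n^{\ell\ell'}(\tau)\to K_\infty^{\ell\ell'}(\tau)$ pointwise by \eqref{K_infty_bdd}, Fubini, dominated convergence on the finite interval $[0,t/\sqrt m]$, $\|h\|_\infty<\infty$, and the a priori bound $\sup_{s\le t}\mathbb E[|\dot X_s|]\le C$ show that $m^{1/2}\sum_{\ell'}\mathbb E\big[h(X_t,P_t)\int_0^{t/\sqrt m}(K_n^{\ell\ell'}-K_\infty^{\ell\ell'})(\tau)\dot X^{\ell'}_{t-\sqrt m\tau}{\rm d}\tau\big]\to 0$ as $n\to\infty$, using only the $n$-uniform bounds above; it therefore suffices to prove \eqref{convRateH} with $K_n$ replaced by $K_\infty$.

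Next I would split $\dot X^{\ell'}_{t-\sqrt m\tau}=\dot X^{\ell'}_t-\int_{t-\sqrt m\tau}^t\ddot X^{\ell'}_s\,{\rm d}s$ (with $t-\sqrt m\tau\ge 0$ for $\tau\le t/\sqrt m$). The $\dot X^{\ell'}_t$-part yields $m^{1/2}\sum_{\ell'}\mathbb E[h(X_t,P_t)\dot X^{\ell'}_t]\int_0^{t/\sqrt m}K_\infty^{\ell\ell'}(\tau)\,{\rm d}\tau$. By \eqref{K_infty_bdd} and the displays preceding \eqref{eq:fOmegaDef} we have $K_\infty^{\ell\ell'}(\tau)=\int_0^\infty\cos(\tau\omega)g_{\ell\ell'}(\omega)\,{\rm d}\omega$, where $g_{\ell\ell'}(\omega)$ is the solid-angle integral of $f(\boldsymbol\omega,\ell,\ell')$ over $\{|\boldsymbol\omega|=\omega\}$; by \eqref{beta_assum} and the assumed two times differentiability of $\beta$, $g_{\ell\ell'}\in C^2([0,\infty))$ with support in $[0,c]$ (on that range all factors $(4c^2-\omega_i^2)^{-1/2}$ are smooth) and $g_{\ell\ell'}(0)=4\pi f(0,\ell,\ell')$. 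Thus $\int_0^{t/\sqrt m}K_\infty^{\ell\ell'}(\tau)\,{\rm d}\tau=\int_0^\infty\frac{\sin(t\omega/\sqrt m)}{\omega}g_{\ell\ell'}(\omega)\,{\rm d}\omega$, and writing $g_{\ell\ell'}(\omega)=g_{\ell\ell'}(0)+\omega\psi_{\ell\ell'}(\omega)$ with $\psi_{\ell\ell'}\in C^1([0,c])$, the identity $\int_0^{Lc}\frac{\sin u}{u}\,{\rm d}u=\frac\pi2+\mathcal O(L^{-1})$ together with an integration by parts in $\int_0^c\sin(L\omega)\psi_{\ell\ell'}(\omega)\,{\rm d}\omega=\mathcal O(L^{-1})$ gives, with $L=t/\sqrt m$, that this integral equals $\frac\pi2 g_{\ell\ell'}(0)+\mathcal O(\sqrt m)$. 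Using \eqref{eq:fOmegaDef}, \eqref{dof} and \eqref{beta_00}, $\frac\pi2 g_{\ell\ell'}(0)=2\pi^2 f(0,\ell,\ell')=\frac{1}{4\pi c^3}\beta_\ell(0)\beta_{\ell'}(0)=\frac{1}{4\pi c^3}\big(\sum_{j\in E_\infty}\bar F_{\ell j}\big)\big(\sum_{j\in E_\infty}\bar F_{\ell' j}\big)=\kappa_{\ell\ell'}$, so the $\dot X^{\ell'}_t$-part equals $\mathbb E[m^{1/2}h(X_t,P_t)(\kappa\dot X_t)_\ell]+\mathcal O(m)$, the $\mathcal O(m)$ coming from $m^{1/2}\cdot\mathcal O(\sqrt m)\cdot\|h\|_\infty\sup_s\mathbb E[|\dot X_s|]$.

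It remains to bound the remaining part, which in absolute value is at most $\|h\|_\infty m^{1/2}\sum_{\ell'}\int_0^{t/\sqrt m}|K_\infty^{\ell\ell'}(\tau)|\,\mathbb E\big[\int_{t-\sqrt m\tau}^t|\ddot X^{\ell'}_s|\,{\rm d}s\big]\,{\rm d}\tau$; since $\ddot X=\dot P$ and $\sup_{s\le t}(\mathbb E[|\dot P_s|^2])^{1/2}\le C$, we have $\mathbb E\big[\int_{t-\sqrt m\tau}^t|\ddot X^{\ell'}_s|\,{\rm d}s\big]\le C\sqrt m\,\tau$, so this is at most $C\|h\|_\infty\, m\sum_{\ell'}\int_0^{t/\sqrt m}\tau|K_\infty^{\ell\ell'}(\tau)|\,{\rm d}\tau$. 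Integrating by parts twice in $\omega$ in $K_\infty^{\ell\ell'}(\tau)=\int_0^\infty\cos(\tau\omega)g_{\ell\ell'}(\omega)\,{\rm d}\omega$ — the boundary contributions at $\omega=0$ and at the support edge $\omega=c$ being controlled by $\sin 0=0$ and by $g_{\ell\ell'}(c)=g_{\ell\ell'}'(c)=0$ — gives $K_\infty^{\ell\ell'}(\tau)=-g_{\ell\ell'}'(0)\tau^{-2}-\tau^{-2}\int_0^\infty\cos(\tau\omega)g_{\ell\ell'}''(\omega)\,{\rm d}\omega$, hence $|K_\infty^{\ell\ell'}(\tau)|\le(|g_{\ell\ell'}'(0)|+\|g_{\ell\ell'}''\|_{L^1})\tau^{-2}$. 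Combined with $|K_\infty^{\ell\ell'}|\le C_0$ on $[0,1]$ this yields $\int_0^{t/\sqrt m}\tau|K_\infty^{\ell\ell'}(\tau)|\,{\rm d}\tau=\mathcal O(1)+\mathcal O(\log(t/\sqrt m))=\mathcal O(\log m^{-1})$, so the remainder is $\mathcal O(m\log m^{-1})$. Adding the contributions of the two parts and letting $n\to\infty$ proves \eqref{convRateH}.

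The decisive step is the large-$\tau$ estimate $|K_\infty^{\ell\ell'}(\tau)|=\mathcal O(\tau^{-2})$: this is exactly where the regularity of $\beta$ and the cutoff \eqref{beta_assum} (which removes the endpoint singularities $(4c^2-\omega_i^2)^{-1/2}$ of the density of states from the relevant range) are needed, and it is what produces the logarithmic loss, since $\int^{t/\sqrt m}\tau\cdot\tau^{-2}\,{\rm d}\tau\sim\log m^{-1}$. The $n\to\infty$ interchange is routine once the three-dimensionality of the lattice is used to bound $n^{-1}\sum_k\omega_k^{-2}$, and the identification of $\kappa$ is the Dirichlet-kernel computation of the second step.
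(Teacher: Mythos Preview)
Your argument is correct and uses the same three ingredients as the paper: the decay $|K_\infty(\tau)|=\mathcal O(\tau^{-2})$ obtained by two integrations by parts in $\omega$ (enabled by \eqref{beta_assum}), the linearization $\dot X_{t-\sqrt m\tau}=\dot X_t-\int_{t-\sqrt m\tau}^t\ddot X_s\,{\rm d}s$, and the Dirichlet-kernel/Fourier-inversion identification $\int_0^\infty K_\infty^{\ell\ell'}(\tau)\,{\rm d}\tau=\kappa_{\ell\ell'}$. The only organizational difference is that the paper introduces an intermediate cutoff $\tau_*\in(1,t/\sqrt m)$, bounds the tail $\int_{\tau_*}^{t/\sqrt m}$ by $C/\tau_*$ via the decay, bounds the head correction on $[0,\tau_*]$ by $Cm^{1/2}\log\tau_*$, and then optimizes $\tau_*=m^{-1/2}$; you instead linearize on the full interval and bound $\int_0^{t/\sqrt m}\tau|K_\infty(\tau)|\,{\rm d}\tau=\mathcal O(\log m^{-1})$ in one stroke, which is slightly cleaner and avoids the optimization. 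You also spell out the $n\to\infty$ interchange (uniform bound on $K_n$ via the three-dimensional Riemann sum $n^{-1}\sum_k\omega_k^{-2}$) more carefully than the paper, which simply asserts the limit \eqref{K_infty_bdd}.
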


\begin{remark}[Dirchlet boundary condition]If we replace the periodic boundary conditions in the heat bath model \eqref{hb_model} 
with homogenous Dirichlet conditions, 
we have instead 
\[
\begin{split}
\nu_k(j) &= \prod_{i=1}^3 \sin(\frac{\pi k_i}{2} + \frac{\pi j_ik_i}{\bar n+1}) \quad,\,  j_i=-\bar n/2,\ldots, \bar n/2-1,\mbox{ and } \ k_i=1,\ldots, \bar n\, ,\\
\omega_k^2&=c^2\sum_{i=1}^3 2\big(1-\cos(\frac{\pi k_i}{\bar n+1})\big) + \eta_n^2\, ,\\
&\lim_{\bar n\to\infty}\frac{\bar n^3}{\|\nu_k\|_2^2}=8\, .\\
\end{split}
\]
We can write the eigenvectors as functions of $k_i/(\bar n+1)$ since
\[
\begin{split}
\sin(\frac{\pi k_i}{2} + \frac{\pi j_ik_i}{\bar n+1}) = \left\{
\begin{array}{cc}
\sin(\frac{\pi j_ik_i}{\bar n+1}) & \mbox{ if }\mod(k_i,4)=0\, ,\\
\cos(\frac{\pi j_ik_i}{\bar n+1}) & \mbox{ if }\mod(k_i,4)=1\, ,\\
-\sin(\frac{\pi j_ik_i}{\bar n+1}) & \mbox{ if }\mod(k_i,4)=2\, ,\\
-\cos(\frac{\pi j_ik_i}{\bar n+1}) & \mbox{ if }\mod(k_i,4)=3\, ,\\
\end{array}\right.
\end{split}
\]
and split the sum over $k_i$ into one sum over odd $k_i$, where the eigenvector is based on cosine functions, 
and one sum over even $k_i$, where the eigenvector is based on sine functions.
With these changes, the derivation of $\kappa$ follows as in the case with periodic boundary conditions.
 \end{remark}

\begin{proof}[Proof of Lemma \ref{fric_lem}]
To study the decay  as $\tau\to \infty$ of the kernel $K_\infty$, we use \eqref{beta_assum} and the shorthand $f(\boldsymbol{\omega}):=
f(\boldsymbol{\omega},\cdot, \cdot)$ and integrate by parts
\[
\begin{split}
K_\infty(\tau)&=\int_{\mathbb R^3} \cos(\tau\omega) f(\boldsymbol{\omega}) \frac{{\rm d}\boldsymbol{\omega}}{\omega^2}\\
&= \int_{0}^{\pi}\int_0^{2\pi}\int_0^\infty
\cos(\tau\omega) f(\boldsymbol{\omega}) 
\sin\theta{\rm d}\omega {\rm d}\alpha  {\rm d}\theta\\
&=-\int_{0}^{\pi}\int_0^{2\pi}\int_0^\infty
\frac{\sin(\tau\omega)}{\tau} \partial_\omega f(\boldsymbol{\omega}) 
\sin\theta{\rm d}\omega {\rm d}\alpha  {\rm d}\theta\\
&=-\int_{0}^{\pi}\int_0^{2\pi}
\frac{1}{\tau^2} \partial_\omega f(\boldsymbol{\omega})\big|_{\omega=0} 
\sin\theta  {\rm d}\alpha  {\rm d}\theta\\
&\quad -\int_{0}^{\pi}\int_0^{2\pi}\int_0^\infty
\frac{\cos(\tau\omega)}{\tau^2} \partial^2_\omega f(\boldsymbol{\omega}) 
\sin\theta{\rm d}\omega {\rm d}\alpha  {\rm d}\theta\, .\
\end{split}
\]
Since  $f\big(\boldsymbol\omega \big)$ has its support in $\omega =|\boldsymbol \omega|\le c$, cf.~\eqref{beta_assum}, 
and the second derivative $\partial^2_\omega f(\boldsymbol{\omega})$ is bounded, we obtain
\begin{equation}\label{tau_est}
\|K_\infty(\tau)\|=
\|
\int_{\mathbb R^3} \cos(\tau\omega) f(\boldsymbol{\omega}) \frac{{\rm d}\boldsymbol{\omega}}{\omega^2}\|
=\mathcal O\big((1+\tau)^{-2}\big)\, .
\end{equation}

For a given $t>0$ and bounded function $h:\rset^N\times\rset^N\to\rset$ we next study the expectation of 
\[
\lim_{n\to \infty} h(X_t,P_t) \int_0^{t/\sqrt{m}} \langle \bar V'' \cos(\tau \bar V''^{1/2}) \dot a(X_{t-\sqrt{m}\tau}), \partial_{X^\ell}a\rangle
{\rm d} \tau.
\]
In the case $t/\sqrt m>1$, we split the integral with $1<\tau_*<t/\sqrt m$
\begin{equation}\label{4}
\begin{split}
& h(X_t,P_t) \int_0^{t/\sqrt{m}} \langle \bar V'' \cos(\tau \bar V''^{1/2}) \dot a(X_{t-\sqrt{m}\tau}), \partial_{X^\ell}a\rangle
{\rm d} \tau\\
&= h(X_t,P_t)\int_0^{\tau_*}\langle \bar V'' \cos(\tau \bar V''^{1/2}) \dot a(X_{t-\sqrt{m}\tau}), \partial_{X^\ell} a\rangle {\rm d} \tau\\
&\quad + h(X_t,P_t) \int_{\tau_*}^{t/\sqrt{m}}\langle\bar V'' \cos(\tau \bar V''^{1/2}) \dot a(X_{t-\sqrt{m}\tau}), \partial_{X^\ell} a\rangle {\rm d} \tau. 
\end{split}
\end{equation}
The magnitude of the second integral is bounded in expectation using \eqref{K_n_res}, \eqref{K_infty_bdd} and \eqref{tau_est}:
\[
\begin{split}
&\mathbb E[|h(X_t,P_t) \int_{\tau_*}^{t/\sqrt{m}}\langle \bar V'' \cos(\tau \bar V''^{1/2}) \dot a(X_{t-\sqrt{m}\tau}),\partial_{X^\ell} a\rangle {\rm d} \tau|]\\
&\le | \int_{\tau_*}^{t/\sqrt{m}}\frac{C \mathbb E[|h(X_t,P_t)||\dot X_{t-m^{1/2}\tau}|]}{\tau^2}{\rm d} \tau\\
&\le 
\frac{C\sup_{0\le s\le t}(\mathbb E[|\dot X_s|^2])^{1/2}(\mathbb E[|h(X_t,P_t)|^2])^{1/2}}{\tau_*}\, .
\end{split}
\]
%

The expectation 
and the limit $n\to \infty$ of the first integral in the right hand side of~\eqref{4}
can be written as
\begin{multline*}
\mathbb E[h(X_t,P_t)\int_0^{\tau_*} 
\int_{0}^{\pi}\int_0^{2\pi}\int_0^\infty
\cos(\tau\omega) 
f(\boldsymbol\omega)
\sin\theta{\rm d}\omega {\rm d}\alpha  {\rm d}\theta\,  \dot X_{t}\, {\rm d}\tau]\\
- 
\mathbb E[h(X_t,P_t)\int_0^{\tau_*} 
\int_{0}^{\pi}\int_0^{2\pi}\int_0^\infty
\cos(\tau\omega) 
f(\boldsymbol\omega)
\sin\theta{\rm d}\omega {\rm d}\alpha  {\rm d}\theta\,  (\dot X_{t} -\dot X_{t - \tau m^{1/2}})\, {\rm d}\tau],
\end{multline*}
where, using~\eqref{tau_est},
\[
\begin{split}
&\mathbb E[h(X_t,P_t)\int_0^{\tau_*} 
\int_{0}^{\pi}\int_0^{2\pi}\int_0^\infty
\cos(\tau\omega) 
f(\boldsymbol\omega)
\sin\theta{\rm d}\omega {\rm d}\alpha  {\rm d}\theta\,  (\dot X_{t} -\dot X_{t - \tau m^{1/2}})\, {\rm d}\tau]\\
&=\mathbb E[h(X_t,P_t)\int_0^{\tau_*} 
\int_{0}^{\pi}\int_0^{2\pi}\int_0^\infty
\cos(\tau\omega) 
f(\boldsymbol\omega)
\sin\theta{\rm d}\omega {\rm d}\alpha  {\rm d}\theta\, \int^0_{-\tau m^{1/2}} \ddot X_{t+s}{\rm d}s
\, {\rm d}\tau]\\
&\le Cm^{1/2} \int_0^{\tau_*} \frac{\tau}{1+\tau^2}{\rm d}\tau
\sup_{0\le s\le t}(\mathbb E[|\ddot X_s|^2])^{1/2}(\mathbb E[|h(X_t,P_t)|^2])^{1/2}\\
&\le  Cm^{1/2}\log \tau_*
\sup_{0\le s\le t}(\mathbb E[|\ddot X_s|^2])^{1/2}(\mathbb E[|h(X_t,P_t)|^2])^{1/2}
\, .
%
%
\end{split}
\] 
We prove in Section \ref{system_dyn} that the expected value $\sup_{0\le s\le t}(\mathbb E[|\dot X_s|^2+|\ddot X_s|^2])$
 is bounded, and by assumption $\mathbb E[|h(X_t,P_t)|^2]$
is bounded. Therefore
we have  the error estimate
\begin{equation}\label{6}
\begin{split}
&|\mathbb E\big[h(X_t,P_t)\big( \sum_{\ell'} \int_0^{t/m^{1/2}}K_\infty^{\ell,\ell'}(\tau)
\dot X_{t-\tau m^{1/2}}^{\ell'} {\rm d}\tau
-\sum_{\ell'} \int_0^{\infty}K_\infty^{\ell,\ell'}(\tau)
\dot X_{t}^{\ell'} {\rm d}\tau\big)\big]|\\
& =
\frac{C}{\tau_*} + C m^{1/2}\log \tau_*\, ,
\end{split}
\end{equation}
so that with $\tau_*=m^{-\frac{1}{2}}$ the error in \eqref{convRateH}
is bounded by 
\begin{equation}\label{77}
m^{1/2}\min_{\tau_*}(\frac{C}{\tau_*} + C m^{1/2}\log\tau_*)=\mathcal O(m\log m^{-1})\, .
\end{equation}
The Fourier transform of $f(\boldsymbol{\omega})$
with respect to $\omega$ is integrable and since also $f(\boldsymbol{\omega})$ is continuous,
we have by the Fourier inversion property and \eqref{beta_00}
\begin{equation}\label{f_kappa}
\begin{split}
&\lim_{m\to 0^+}\lim_{n\to\infty}
m^{-1/2}\int_0^t \langle \bar V'' \cos\big(\frac{(t-s)\sqrt{\bar V''}}{\sqrt{m}}\big) \dot a(X_s), \partial_{X^\ell} a\rangle {\rm d} s\\
&=
\sum_{\ell'}\int_0^\infty
\int_{\mathbb R^3} \cos(\tau\omega) f(\boldsymbol{\omega},\ell,\ell') 
\frac{{\rm d}\boldsymbol{\omega}}{\omega^2}{\rm d}\tau \dot X_t^{\ell'}\\
&=
\sum_{\ell'}\int_0^\infty \int_{0}^\infty \int_0^{2\pi}\int_{0}^{\pi}
 \cos(\tau\omega) f(\boldsymbol{\omega},\ell,\ell')   \sin\theta{\rm d}\theta
{\rm d}\alpha {\rm d}\omega {\rm d}\tau\, \dot X_t^{\ell'}\\
&=\frac{\pi}{2}\sum_{\ell'} \lim_{\omega\to 0} \int_0^{2\pi}\int_{0}^{\pi} f(\boldsymbol\omega,\ell,\ell')   \sin\theta{\rm d}\theta{\rm d}\alpha\, \dot X_t^{\ell'}\\
&=2\pi^2 \sum_{\ell'}f(0,\ell,\ell')\dot X_t^{\ell'}\\
&=\frac{1}{4\pi c^3}\sum_{\ell'}\big(\sum_{j\in E_\infty}  \bar F_{\ell j}\big)
\big(\sum_{j\in E_\infty} \bar F_{\ell' j}\big)\dot X_t^{\ell'}\\
&=\kappa\dot X_t\, .
\end{split}
\end{equation}
That is, the friction  matrix, $\kappa$, in the Langevin equation is determined by 
the $\partial_{X^\ell}$-derivative 
of the sum of forces
on all bath particles and we have proved Lemma \ref{fric_lem}.
\end{proof}

 \begin{remark}[Vanishing friction]\label{rem_vanish}  
 
 If the heat bath model is modified to have nearest neighbor  interactions in a lattice in dimension $d$,
we obtain as in \eqref{f_kappa}
  \[
  \kappa_{\ell \ell'}= S_d \lim_{\omega\to 0} f(\boldsymbol{\omega},\ell,\ell')\omega^{d-1}/\omega^2
  =  S_d \lim_{\omega\to 0} \beta_\ell(\boldsymbol{\omega}) \beta_{\ell'}^*(\boldsymbol{\omega}) \omega^{d-3}
  \]
 where $S_d$ is a positive constant related to the dimension $d$.
 We see that under the assumption $\beta_{\ell}(0^+) \beta_{\ell'}(0^+) >0$,
 it is only in dimension $d=3$ that this heat bath generates a positive definite friction matrix $\kappa$.
 For $d<3$ we obtain $\kappa=\infty$ and for $d>3$ we have $\kappa=0$.
%

 If we change the heat bath potential energy to
 be based on any  circulant matrix in each dimension, we have the requirement 
 \begin{equation}\label{req2}
 \lim_{\omega\to 0} \frac{r^2(\boldsymbol{\omega})\frac{{\rm d}r}{{\rm d}\omega}}{\omega^2}=\mbox{constant}
 \end{equation}
 to obtain a positive definite friction matrix $\kappa$, in dimension $d=3$.
 This limit for the eigenvalues $\omega^2$ implies that  $\bar V''$  becomes a difference quotient 
 approximation of the Laplacian with mesh size one.
 We conclude that \eqref{req2} leads to a choice of $\bar V''$ in \eqref{lattice_def} that is another 
 discretization of the Laplacian or
 discretizations that tends to the Laplacian as $n\to \infty$.
 \end{remark}

\subsection{The fluctuation term}\label{sec_fluc} The initial distribution  of 
$\varphi_0=\sum_k(\gamma_k^r+{\rm i}\gamma_k^i )\nu_k'$,
determined by the Gibbs distribution of the Hamiltonian system in \eqref{phi0} and \eqref{gamma0},
shows that all $\gamma_k^r$ and $\gamma_k^i$ are independent and
normal distributed with mean zero and variance $T/\omega^2_k$.
This initial data 
%
provides the fluctuation term in \eqref{1.5}, namely 
\begin{equation}\label{gamma_W}
\begin{split}
\zeta_t^\ell&= {\rm Re}\langle e^{-{\rm i}t\sqrt{\bar V''}/\sqrt{m}}\varphi_0, \bar V'' \partial_{X^\ell}a\rangle\\
&=\sum_k{\rm Re}( e^{{\rm i}t\omega_k/\sqrt{m}}\frac{\gamma_k^*}{\bar n^{3/2}} 
\langle \nu_k, \bar V'' \partial_{X^\ell}a\rangle)\\
& =\sum_k {\rm Re}( e^{{\rm i}t\omega_k/\sqrt{m}}\frac{\gamma_k^*}{ n^{1/2}}\beta_{k,\ell}) \, ,
\end{split}
\end{equation}
which has the special property that 
its covariance  satisfies 
\begin{equation}\label{fluc_diss}
\begin{split}
\mathbb E[\zeta_t^\ell\zeta_s^{\ell'}] 
&= \mathbb E[{\rm Re}\langle {e^{-{\rm i}t\sqrt{\bar V''}/\sqrt{m}}}\varphi_0, \bar V'' \partial_{X^\ell} a\rangle
\ {\rm Re}\langle e^{-{\rm i}s\sqrt{\bar V''}/\sqrt{m}}\varphi_0,
 \bar V'' \partial_{X^{\ell'}} a\rangle ]\\
&=T\langle \cos\big((t-s)\sqrt{\bar V''}/\sqrt{m})\bar V'' \partial_{X^\ell} a ,\partial_{X^{\ell'}}a\rangle 
\end{split}
\end{equation}
that is, the covariance is $T$ times the friction integral kernel, as observed in \cite{zwanzig_1973}.
We repeat a proof for our setting here, where we also show that the limit of $\zeta_t$ as $n\to\infty$ is well defined.
\begin{proof}[Proof of \eqref{fluc_diss}] Let
$e^{-{\rm i}t\sqrt{\bar V''}/\sqrt{m}}=:S_{t0}$ and write $e^{-{\rm i}t\sqrt{\bar V''}/\sqrt{m}}\varphi_0=
S_{ts}S_{s0}\varphi_0$. Since the operator $S_{s0}$ is unitary,
we have $S_{s0}\varphi_0=\sum_k\gamma_k'\nu_k'$
where $\{\nu_k'\}$ is the set of normalized real valued eigenvectors of $\bar V''$,
defined in \eqref{phi0}. The random coefficients 
$\gamma_k'=\gamma_{k,r}'+{\rm i}\gamma_{k,i}'$ are based on  $\gamma_{k,r}'$ and $\gamma_{k,i}'$
which are independent normal  distributioned with mean zero and variance $T/\omega_k^2$, for $k=(k_1,k_2,k_3)$ and $k_i=1,\ldots, \bar n$.
Use
the orthonormal real valued basis $\{\nu_k'\}$  to obtain
\[
\begin{split}
\mathbb E[\zeta_t^\ell\zeta_s^{\ell'}] &= \mathbb E\Big[\sum_{k,k'}
\big\langle {\rm Re}\big( S_{t0}(\gamma_k^r+{\rm i}\gamma_k^i)\nu_k'\big), \bar V''\partial_{X_\ell} a\big\rangle\ 
\big\langle {\rm Re}\big(S_{s0}
(\gamma_{k'}^r+{\rm i}\gamma_{k'}^i)
\nu_{k'}'\big), \bar V''\partial_{X_{\ell'}} a\big\rangle \Big]\\
 &=\mathbb E\Big[\sum_{k,k'}
\big\langle
{\rm Re}\big( S_{ts}(\gamma_{k,r}'+{\rm i}\gamma_{k,i}')\nu_k'\big), \bar V''\partial_{X_\ell} a\big\rangle\ 
\big\langle {\rm Re}\big(
(\gamma_{k',r}'+{\rm i}\gamma_{k',i}')
\nu_{k'}'\big),\bar V''\partial_{X_{\ell'}} a\big\rangle\Big] \\
&=\mathbb E\Big[\sum_{k,k'}
\big\langle
\omega_k^2 \Big(\cos\big((t-s)\omega_km^{-1/2}\big)\gamma_{k,r}' -\sin\big((t-s)\omega_km^{-1/2}\big)\gamma_{k,i}'\Big)\nu_k', \partial_{X_\ell} a\big\rangle\times\\
&\qquad\times\big\langle \gamma_{k',r}'
\nu_{k'}',\bar V''\partial_{X_{\ell'}} a\big\rangle\Big] \\
&=\sum_{k,k'} \Big(
\big\langle
\omega_k^2 \Big(\cos\big((t-s)\omega_km^{-1/2}\big)\mathbb E[\gamma_{k,r}' \gamma_{k',r}'] \\
&\quad -\sin\big((t-s)\omega_km^{-1/2}\big)\mathbb E[\gamma_{k,i}' \gamma_{k',r}']\Big)\nu_k', \partial_{X_\ell} a\big\rangle\big\langle
\nu_{k'}',\bar V''\partial_{X_{\ell'}} a\big\rangle\Big) \\
&=\sum_{k} T
\big\langle\cos\big((t-s)\sqrt{\bar V''}/\sqrt{m}\big) \nu_k',\partial_{X_\ell} a\big\rangle\big\langle \nu_k', \bar V'' \partial_{X_{\ell'}} a\big\rangle\\
&= T\langle
\cos\big((t-s)\sqrt{\bar V''}/\sqrt{m}\big) \partial_{X_{\ell}} a,\bar V'' \partial_{X_{\ell'}} a\rangle\, .\\
\end{split}
\]

We conclude that 
$\zeta$ is a Gaussian process with mean  zero and covariance \eqref{fluc_diss}.
The fluctuation-dissipation property \eqref{fluc_diss}, the limit \eqref{K_infty_bdd}  and the bound \eqref{tau_est} show that
the covariance matrix has a limit as $n\to\infty$:
\begin{equation}\label{cov_k}
\lim_{n\to\infty}\mathbb E[\zeta_t\zeta_s]=T K_\infty\big(\frac{t-s}{\sqrt m}\big)\, .
\end{equation}
Therefore $\zeta$ has a well defined limit, in the $L^2$-space with norm 
$({\int_0^\tau \mathbb E[|\zeta_t|^2]{\rm d} t})^{1/2}$, for any finite time $\tau$  as $n\to\infty$.
\end{proof}
\subsection{The system dynamics}\label{system_dyn}

We can write the system dynamics \eqref{1.2} and  \eqref{1.5} as
\begin{equation}\label{xp_ekv}
\begin{split}
X_t &= X_0 + \int_0^t P_s {\rm d} s\, ,\\
P_t &= P_0 -\int_0^t \nabla \lambda(X_s) {\rm d} s + \int_0^t {\rm Re} \langle \varphi_s, \bar V''\nabla a\rangle
 {\rm d} s\\
&= P_0 -\int_0^t \nabla \lambda(X_s) {\rm d} s - \int_0^t \int_0^s  {\rm Re}\langle
e^{-{\rm i} (s-\sigma)\bar V''m^{-1/2}}
 (P_{\sigma}\cdot\nabla) a , \bar V''\nabla a \rangle{\rm d}\sigma {\rm d} s\\
&\quad +\int_0^t \zeta_s{\rm d}s\, .
\end{split}
\end{equation}
We assume that 
\begin{equation}\label{d2lambda}
\begin{split}
&\mbox{$\lambda$ is three times continuously differentiable and}\\
&\mbox{$\sup_X\|{\dD}^2 \lambda(X)\| +\sup_X\|{\dD}^3 \lambda(X)\|$ is bounded}
\end{split}
\end{equation}
and apply Cauchy's inequality  on \eqref{xp_ekv} to obtain
\begin{equation}\label{XP_bdd}
\begin{split}
\frac{{\rm d}}{{\rm d}t} \int_0^t |X_s|^2+|P_s|^2 {\rm d} s &=|X_t|^2+|P_t|^2\\
& \le 
C\big(1+ (t+t^3)\int_0^t (|X_s|^2+ |P_s|^2) {\rm d} s + t\int_0^t |\zeta_s|^2 {\rm d} s\big)\, ,
\end{split}
\end{equation}
which implies
\[
\frac{{\rm d}}{{\rm d}t}\int_0^t  \mathbb E[|X_s|^2+|P_s|^2] {\rm d} s \le 
C\big(1+ (t+t^3)\int_0^t  \mathbb E[|X_s|^2+ |P_s|^2] {\rm d} s + t\int_0^t  \underbrace{\mathbb E[|\zeta_s|^2]}_{\le T{\rm trace}(K_\infty(0))} {\rm d} s\big)\, .
\]
Here ${\dD}^k \lambda$ denotes the set of all partial derivatives of order $k$ and if $k=1$ or $k=2$ we identify it with the gradient and the Hessian, respectivly. 
Integration yields the Gronwall inequality
\[
\int_0^t  \mathbb E[|X_s|^2+ |P_s|^2] {\rm d} s\le Ce^{Ct^4}\, ,
\]
which by \eqref{XP_bdd}  and the mean square of
\[
\dot P_t=
- \nabla \lambda(X_t)  +   \int_0^t {\rm Re}\langle e^{-{\rm i} (t-s)\bar V''m^{-1/2}}
 (P_{s}\cdot \nabla) a ,\bar V''\nabla a\rangle  {\rm d} s +\zeta_t
\]
establishes
\begin{lemma}\label{expected_values_lemma}
Suppose  that the assumptions in Lemma \ref{fric_lem} and \eqref{d2lambda} hold, then for each $t>0$ there is a constant $C$  such that
\[
\begin{split}
\sup_{0\le s\le t}  \mathbb E[|X_s|^2+ |P_s|^2+|\dot P_s|^2] &\le C(1+  e^{Ct^4})\, .\\
\end{split}
\]
\end{lemma}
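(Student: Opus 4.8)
The plan is to derive the stated bound on $\sup_{0\le s\le t}\mathbb{E}[|X_s|^2+|P_s|^2+|\dot P_s|^2]$ directly from the integral representation \eqref{xp_ekv}, following the energy-estimate argument already sketched in the paragraph preceding the statement. First I would bound $|X_t|^2+|P_t|^2$ pointwise (in $\omega$) by applying the Cauchy--Schwarz inequality to each of the four terms on the right-hand side of \eqref{xp_ekv}: the constant $X_0,P_0$ contribute $C$; the $\int_0^t\nabla\lambda(X_s)\,\mathrm{d}s$ term is controlled using the bound on $\mathrm{D}^2\lambda$ from \eqref{d2lambda} (so $|\nabla\lambda(X_s)|\le C(1+|X_s|)$), giving a contribution $\le C t\int_0^t(1+|X_s|^2)\,\mathrm{d}s$; the double-integral memory term is bounded using unitarity of $e^{-\mathrm{i}(s-\sigma)\bar V'' m^{-1/2}}$ together with $\nabla a$ constant and $\bar V''$ fixed, yielding $\le C t^3\int_0^t|P_\sigma|^2\,\mathrm{d}\sigma$ after two nested Cauchy--Schwarz steps (hence the $t+t^3$ weighting in \eqref{XP_bdd}); and the noise term gives $\le C t\int_0^t|\zeta_s|^2\,\mathrm{d}s$. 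This is exactly \eqref{XP_bdd}.

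Next I would take expectations in \eqref{XP_bdd}. The only stochastic ingredient is $\zeta_s$, and by the fluctuation-dissipation identity \eqref{fluc_diss} (with $s=t$) together with \eqref{cov_k} we have $\mathbb{E}[|\zeta_s|^2]=\sum_\ell \mathbb{E}[(\zeta_s^\ell)^2]\le T\,\mathrm{trace}(K_\infty(0))$, uniformly in $s$ and in $n$ (using \eqref{tau_est} for the $n\to\infty$ limit, or working at finite $n$ with a bound uniform in $n$). Substituting this into the expectation of \eqref{XP_bdd} and integrating in $t$ produces a Gronwall inequality for $\Phi(t):=\int_0^t\mathbb{E}[|X_s|^2+|P_s|^2]\,\mathrm{d}s$ of the form $\Phi'(t)\le C(1+(t+t^3)\Phi(t))$, so that $\Phi(t)\le Ce^{Ct^4}$. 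Feeding this back into the (expectation of the) pointwise bound \eqref{XP_bdd} upgrades it to $\sup_{0\le s\le t}\mathbb{E}[|X_s|^2+|P_s|^2]\le C(1+e^{Ct^4})$.

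Finally, for $\dot P_t$ I would use the explicit formula displayed just before the statement, $\dot P_t=-\nabla\lambda(X_t)+\int_0^t\mathrm{Re}\langle e^{-\mathrm{i}(t-s)\bar V''m^{-1/2}}(P_s\cdot\nabla)a,\bar V''\nabla a\rangle\,\mathrm{d}s+\zeta_t$, square it, take expectations, and apply Cauchy--Schwarz to each term: $\mathbb{E}[|\nabla\lambda(X_t)|^2]\le C(1+\mathbb{E}[|X_t|^2])$ is controlled by the previous step, the memory term is bounded by $Ct\int_0^t\mathbb{E}[|P_s|^2]\,\mathrm{d}s$ again via unitarity and the constancy of $\nabla a$, hence by $C(1+e^{Ct^4})$, and $\mathbb{E}[|\zeta_t|^2]\le T\,\mathrm{trace}(K_\infty(0))$ as above. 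Combining gives the asserted bound with a (possibly enlarged) constant $C$ depending on $t$.

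The main obstacle is a minor bookkeeping point rather than a deep one: one must be careful that all constants and in particular the bound on $\mathbb{E}[|\zeta_s|^2]$ are uniform in the number of heat bath particles $n$, so that the estimate survives the limit $n\to\infty$ used throughout Lemma \ref{fric_lem}; this is where \eqref{tau_est} and \eqref{cov_k} are essential. A secondary technical care point is tracking the time-dependence correctly so that the $t+t^3$ factor (and hence the $e^{Ct^4}$ growth) emerges from the correct pairing of powers of $t$ coming from the outer Cauchy--Schwarz over $[0,t]$ and the inner one over $[0,s]$ in the memory term; but this is routine once the structure of \eqref{xp_ekv} is laid out.
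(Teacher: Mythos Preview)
Your proposal is correct and follows essentially the same approach as the paper's own argument in Section~\ref{system_dyn}: apply Cauchy's inequality to the integral representation \eqref{xp_ekv} to obtain \eqref{XP_bdd}, take expectations using the uniform bound $\mathbb{E}[|\zeta_s|^2]\le T\,\mathrm{trace}(K_\infty(0))$, apply Gronwall to get $\int_0^t\mathbb{E}[|X_s|^2+|P_s|^2]\,\mathrm{d}s\le Ce^{Ct^4}$, feed this back into \eqref{XP_bdd}, and finally square the explicit formula for $\dot P_t$ to handle the last term. Your additional remark about uniformity in $n$ is a valid point of care that the paper leaves implicit.
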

\subsection{Approximation by Langevin dynamics}\label{du}
In this section we approximate the Hamiltonian dynamics \eqref{1.2} by the Langevin dynamics \eqref{Ito_langevin}
\begin{equation*} 
\begin{split}
{\rm d}X_L(t) &= P_L(t){\rm d}t\, ,\\
{\rm d}P_L(t) &= -\nabla \lambda\big(X_L(t)\big){\rm d}t - m^{1/2}\kappa P_L(t){\rm d}t  + (2 m^{1/2}\kappa T)^{1/2}\ {\rm d} W(t)\, ,\\
X_L(0) &=X(0)\, ,\\
P_L(0) &=P(0)\, .\\
\end{split}
\end{equation*}
To analyse the approximation we use that, 
\begin{equation}\label{g_smooth}
\mbox{for any infinitely differentiable function $g:\rset^{2N}\to\rset$ with compact support, }
\end{equation}
the expected value
\[
u(z,s)= \mathbb E[g\big(X_L(t_*), P_L(t_*)\big)\ |\ \big(X_L(s),P_L(s)\big)=z]\, , \quad s<t_* \mbox{ and } z\in \rset^{2N},
\]
is  well defined, since the Langevin equation \eqref{Ito_langevin}, for $Z_t:=\big(X_L(t),P_L(t)\big)$, has Lipschitz continuous drift and constant diffusion coefficient.
The assumption \eqref{d2lambda} implies that
the stochastic flows $\frac{\partial Z_t}{\partial Z_s}$ and $\frac{\partial^2 Z_t}{\partial Z_s^2}$
also are well defined, which implies that the function $u$ has bounded and continuous derivatives up to order two  in $z$ and to order one in $t$, see \cite{krylov} and \cite{harrier_jensen},
 and solves the corresponding Kolmogorov equation
\[
\begin{split}
&\partial_s u(X,P,s) + P\cdot \nabla_X u(X,P,s)
- \big(\nabla\lambda(X) + \sqrt{m} \kappa P\big)\cdot \nabla_Pu(X,P,s)\\
&\quad +\sum_{j,k=1}^N \sqrt{m}T\kappa_{jk} \partial_{P_jP_k}u(x,P,s)=0\, ,\quad s<t_*\, ,\\
&u(X,P,t_*) = g(X,P)\, .
\end{split}
\]
We have 
\begin{equation}\label{u_evol}
\begin{split}
&\mathbb E [ g\big(X(t_*),P(t_*)\big)\ |\ X_0,P_0]-
\mathbb E [ g\big(X_L(t_*),P_L(t_*)\big)\ |\ \big(X_L(0),P_L(0)\big)=(X_0,P_0)]\\
&= \mathbb E [u\big(X(t_*),P(t_*),t_*\big)-u\big(X(0),P(0),0\big)\ |\ X_0,P_0]\\
&=\int_0^{t^*} \mathbb E[{\rm d} u\big(X(t),P(t),t\big)\ |\ X_0,P_0]\\
&=\int_0^{t_*} \mathbb E[\partial_t u(X_t,P_t,t) + P_t\cdot \nabla_X u(X_t,P_t,t)
+\dot P_t\cdot \nabla_P u(X_t,P_t,t)\ |\ X_0,P_0]{\rm d} t\\
&=\int_0^{t_*} \mathbb E\big[(\dot P_t+\nabla\lambda(X_t)+\sqrt{m}\kappa P_t)\cdot
\nabla_Pu(X_t,P_t,t)\\
&\qquad - \sqrt{m} T \kappa  : {\dD}^2_{PP} u(X_t,P_t,t) \ |\ X_0,P_0\big]{\rm d} t\\
&= \int_0^{t_*} \mathbb E\big[\big({\rm Re}\langle \varphi_t,\bar V''\nabla a\rangle +\sqrt{m}\kappa P_t\big) \cdot
\nabla_P u(X_t,P_t,t)\\
&\qquad - \sqrt{m} T \kappa: {\dD}^2_{PP} u(X_t,P_t,t) \ |\ X_0,P_0\big] {\rm d} t\, ,\\
\end{split}
\end{equation}
where $\kappa:{\dD}^2_{PP}u := \sum_{\ell,\ell'}\kappa_{\ell,\ell'} \partial_{P_\ell}\partial_{P_{\ell'}}u$.

\begin{lemma}\label{E_estim1}
Suppose that the assumptions in Lemma \ref{expected_values_lemma} and \eqref{g_smooth} hold, then
\begin{equation}\label{a}
\begin{split}
&\lim_{n\to\infty}
\mathbb E[\int_0^t \langle \cos\big(\frac{(t-s)\sqrt{\bar V''}}{\sqrt{m}}\big) 
\dot{a}(X_s),\bar V''\nabla a\cdot\nabla_P u(X_t,P_t)\rangle{\rm d} s\ |\ X_0,P_0]\\
&=m^{1/2}\mathbb E[\kappa \dot X_t \cdot\nabla_P u(X_t,P_t)\ |\ X_0,P_0] + \mathcal O(m\log m^{-1})\, .\\
\end{split}\end{equation}
and
\begin{equation}\label{b}
\begin{split}
& \lim_{n\to\infty}
\mathbb E[\big\langle {\rm Re}(e^{-{\rm i}t V''^{1/2}/\sqrt{m}}\varphi_0) , \bar V''\nabla a\cdot\nabla_P u(X_t,P_t,t)\big\rangle \ |\ X_0,P_0] \\
&= m^{1/2}T\mathbb E[\kappa: {\dD}_{PP} u(X_t,P_t,t)\ |\ X_0,P_0] +
\mathcal O(m\log m^{-1})\, .\\
\end{split}
\end{equation}
\end{lemma}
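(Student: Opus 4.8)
The plan is to treat both identities \eqref{a} and \eqref{b} as conditional-expectation analogues of results already proved, with the new ingredient being the presence of the test function $\nabla_P u(X_t,P_t,t)$, respectively ${\dD}^2_{PP}u(X_t,P_t,t)$, inside the expectation. For \eqref{a}, I would first observe that, componentwise, the left-hand side is
\[
\sum_{\ell,\ell'}\lim_{n\to\infty}\mathbb E\Big[\partial_{P_\ell}u(X_t,P_t,t)\int_0^t\big\langle\bar V''\cos\big(\tfrac{(t-s)\sqrt{\bar V''}}{\sqrt m}\big)\partial_{X^{\ell'}}a,\partial_{X^\ell}a\big\rangle\dot X_s^{\ell'}\,{\rm d}s\ \big|\ X_0,P_0\Big],
\]
which is exactly the object bounded in Lemma \ref{fric_lem} with the bounded function $h$ taken to be (a smooth truncation of) the conditional functional $z\mapsto\partial_{P_\ell}u(z,t)$; note that $u$ has bounded derivatives in $z$ by the discussion around \eqref{u_evol}, and the a priori bounds on $\mathbb E[|\dot X_s|^2]$ and $\mathbb E[|\ddot X_s|^2]$ needed to invoke Lemma \ref{fric_lem} are supplied by Lemma \ref{expected_values_lemma}. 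Hence \eqref{a} follows directly from \eqref{convRateH}, the only subtlety being that Lemma \ref{fric_lem} is stated for an unconditional expectation, so I would either re-run its short proof with $\mathbb E[\cdot\mid X_0,P_0]$ throughout (all estimates there are pathwise or rely only on Cauchy--Schwarz and the kernel decay \eqref{tau_est}, so nothing changes) or note that the conditioning is harmless because $(X_0,P_0)$ is fixed data.

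For \eqref{b}, the plan is to expand the fluctuation term in the eigenbasis as in \eqref{gamma_W}, giving
\[
\zeta_t^\ell\cdot\partial_{P_\ell}u(X_t,P_t,t)=\sum_k{\rm Re}\big(e^{{\rm i}t\omega_k/\sqrt m}\tfrac{\gamma_k^*}{n^{1/2}}\beta_{k,\ell}\big)\,\partial_{P_\ell}u(X_t,P_t,t),
\]
and then to exploit the fluctuation--dissipation identity \eqref{fluc_diss}: the key point is that $\zeta_t$ is correlated with the future of the $X$-trajectory (hence with $u(X_t,P_t,t)$) only through the same oscillatory kernel $K_\infty$, so that after conditioning and using Duhamel's formula \eqref{xp_ekv} to write $X_t,P_t$ in terms of $\zeta$ and the initial data, the leading correlation of $\zeta_t$ with $\nabla_Pu$ produces precisely $m^{1/2}T\kappa:{\dD}^2_{PP}u$. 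Concretely I would use the Gaussian integration-by-parts (Stein/Malliavin) identity $\mathbb E[\zeta_t^\ell\, \Phi(\zeta)]=\int_0^t\mathbb E[\zeta_t^\ell\zeta_\sigma^{\ell'}]\,\mathbb E[\tfrac{\delta\Phi}{\delta\zeta_\sigma^{\ell'}}]\,{\rm d}\sigma$ applied to $\Phi(\zeta)=\nabla_Pu(X_t,P_t,t)$; the functional derivative $\delta(X_t,P_t)/\delta\zeta_\sigma$ is the stochastic flow of \eqref{xp_ekv}, which to leading order is the identity on the $P$-component, so $\tfrac{\delta}{\delta\zeta_\sigma^{\ell'}}\partial_{P_\ell}u\approx\partial_{P_{\ell'}}\partial_{P_\ell}u$, and $\int_0^t\mathbb E[\zeta_t^\ell\zeta_\sigma^{\ell'}]\,{\rm d}\sigma=T\int_0^t K_\infty^{\ell\ell'}(\tfrac{t-\sigma}{\sqrt m})\,{\rm d}\sigma=m^{1/2}T\kappa_{\ell\ell'}+\mathcal O(m\log m^{-1})$ by the same computation \eqref{f_kappa}--\eqref{77} used for the friction term together with the decay \eqref{tau_est}.

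The main obstacle is controlling the error terms in the Gaussian integration by parts for \eqref{b}: one must show that replacing the full stochastic flow $\delta(X_t,P_t)/\delta\zeta_\sigma$ by its leading identity part costs only $\mathcal O(m\log m^{-1})$, and that the $\sigma$-integral against $K_\infty$ can be extended from $[0,t]$ to $[0,\infty)$ and have its endpoint $\sigma$-dependence in $\dot X,\ddot X$ frozen at $t$, exactly as in \eqref{4}--\eqref{77}. For this I would split the $\sigma$-integral at $\sigma_*=t-m^{1/2}\tau_*$ with $\tau_*=m^{-1/2}$: on the far piece the kernel bound \eqref{tau_est} gives an $\mathcal O(1/\tau_*)$ contribution times a bounded moment of the flow, and on the near piece a Taylor expansion in $t-\sigma=\mathcal O(m^{1/2}\tau_*)$ together with boundedness of $\sup_s\mathbb E[|\ddot X_s|^2]$ (Lemma \ref{expected_values_lemma}) and of the second derivatives of $u$ gives the $\mathcal O(m^{1/2}\log\tau_*)$ term, for a net $\mathcal O(m\log m^{-1})$ after the overall $m^{1/2}$ prefactor; the flow-correction terms are estimated similarly once one notes that $\delta(X_t,P_t)/\delta\zeta_\sigma - (\text{identity on }P)$ is itself $\mathcal O(t-\sigma)$ small and enters multiplied by the decaying kernel. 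Finally, the $n\to\infty$ limit is justified as in the proof of \eqref{fluc_diss}, using \eqref{cov_k} and the $L^2$-convergence of $\zeta$.
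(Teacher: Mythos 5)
Your treatment of \eqref{a} is correct and essentially identical to the paper: apply Lemma \ref{fric_lem} with $h(X_t,P_t)=\partial_{P_\ell}u(X_t,P_t,t)$, which is legitimate since $u$ has bounded derivatives and the moment bounds required by that lemma are supplied by Lemma \ref{expected_values_lemma}. The conditioning is indeed harmless for the reason you give; this is exactly what the paper does (it simply says ``the limit \eqref{a} is verified in \eqref{6}'').

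For \eqref{b} you propose a genuinely different packaging of the same underlying idea. The paper carries out a discrete, mode-by-mode Gaussian integration by parts: it perturbs $\varphi_0$ by removing a single eigenmode $\gamma_k\nu_k'$, writes the resulting change in $(X_t,P_t)$ through an explicitly constructed Green's function $G(t,s)$ for the linearized generalized Langevin equation \eqref{g_ekv22}, exploits independence via $\mathbb E[\gamma_k^* h_k(Z_2)]=0$, and only then passes to $n\to\infty$ and $m\to 0$. You instead invoke the continuous Stein/Malliavin identity $\mathbb E[\zeta_t^\ell\Phi]=\int_0^t \mathbb E[\zeta_t^\ell\zeta_\sigma^{\ell'}]\,\mathbb E[\delta\Phi/\delta\zeta_\sigma^{\ell'}]\,{\rm d}\sigma$ on the limiting process, identify the functional derivative with the stochastic flow $G(t,\sigma)$, and read off $m^{1/2}T\kappa:{\dD}^2_{PP}u$ from the covariance $TK_\infty((t-\sigma)/\sqrt m)$. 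Conceptually these are the same (discrete versus continuous Gaussian IBP), and your observation that the flow is $G_{PP}(t,t)={\rm I}$, $G_{XP}(t,t)=0$ with $\mathcal O(t-\sigma)$ corrections matches the paper's use of $G(t,t-\sqrt m\tau)={\rm I}+\int_{t-\sqrt m\tau}^t\partial_rG\,{\rm d}r$. What your version buys is compactness: the formula immediately collapses the combinatorics that the paper spells out in \eqref{first_exp}--\eqref{dupG}. What it costs is that the Malliavin IBP must be justified for a Volterra-type integral equation driven by the non-Markovian Gaussian process $\zeta$ (not a Wiener process), including that the functional derivative is well-defined and belongs to the Cameron--Martin space; the paper sidesteps this by working at finite $n$, where the IBP is elementary finite-dimensional Gaussian calculus, and then uses the assumption $n^{1/2}\eta_n\to\infty$ from \eqref{xi_n} to control the residual dependence of the flow $G$ on $\gamma_k$ (the $G-G_2$ estimate). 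If you work instead with finite $n$ your argument reduces precisely to the paper's; if you work directly in the limit you should supply the Malliavin-calculus justification you are implicitly assuming. Either way the idea is sound and the error estimates you sketch (splitting the $\sigma$-integral at $\tau_*=m^{-1/2}$, kernel tail $\mathcal O(\tau_*^{-1})$, near piece $\mathcal O(m^{1/2}\log\tau_*)$ using the boundedness of $\sup_s\mathbb E[|\ddot X_s|^2]$ and of ${\dD}^2 u$) reproduce the paper's $\mathcal O(m\log m^{-1})$.

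One small caution: the factorization in your Stein identity, with the deterministic covariance $\mathbb E[\zeta_t^\ell\zeta_\sigma^{\ell'}]$ pulled outside the expectation of the functional derivative, is valid precisely because assumption \eqref{Da_const} makes $\nabla a$ constant, so the fluctuation covariance $TK_\infty$ does not depend on the system trajectory; without that assumption the IBP identity would couple the covariance and the flow inside a single expectation, and your argument (and the paper's) would need to be restructured. It is worth stating this explicitly, since it is the same place where the paper uses \eqref{Da_const} to keep $G$ free of the fast $\varphi$-time scale.
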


The limit \eqref{a} is verified in \eqref{6} and
we prove \eqref{b} below.
The lemma and the error estimates \eqref{6} and \eqref{77} inserted in \eqref{u_evol} imply 
\begin{theorem}\label{thm1}Provided the assumtions in Lemma \ref{E_estim1} hold, then
the Langevin dynamics \eqref{Ito_langevin} approximates the Hamiltonian dynamics \eqref{1.2} and \eqref{1.5},
with the error estimate 
\[
\Big|\mathbb E[ g(X_{t_*},P_{t_*})\, |\, X_0,P_0]
-\mathbb E[ g\big(X_L(t_*,P_L(t_*)\big)\, |\, \big(X_L(0),P_L(0)\big)=(X_0,P_0)] \Big|=
\mathcal O(m\log \frac{1}{m})\, , 
\]
where the rank one friction matrix is determined by the force $\bar F$ from \eqref{f_def} and \eqref{f_def2}
\begin{equation}\label{kap1}
\kappa_{\ell \ell'}= \frac{1}{4\pi c^3}\big(\sum_{j\in E_\infty} \bar F_{\ell j}\big)
\big(\sum_{j'\in E_\infty} \bar F_{\ell' j'}\big)
=\frac{1}{4\pi c^3}\sum_{j\in E_\infty} 
\big(\bar V''\partial_{X^{\ell}}a(X)\big)_j
\sum_{j'\in E_\infty} \big(\bar V''\partial_{X^{\ell'}}a(X)\big)_{j'}
\, .
\end{equation}
\end{theorem}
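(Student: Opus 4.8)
The plan is to assemble Theorem~\ref{thm1} from the identity \eqref{u_evol} together with the two limits in Lemma~\ref{E_estim1}, controlling every remaining term by the a priori bounds of Lemma~\ref{expected_values_lemma}. First I would rewrite the right-hand side of \eqref{u_evol} by splitting the force term: ${\rm Re}\langle\varphi_t,\bar V''\nabla a\rangle$ equals the convolution (dissipation) term $-\int_0^t\langle\bar V''\cos((t-s)\sqrt{\bar V''}/\sqrt m)\dot a(X_s),\nabla a\rangle\,{\rm d}s$ plus the fluctuation term $\zeta_t={\rm Re}\langle e^{-{\rm i}t\sqrt{\bar V''}/\sqrt m}\varphi_0,\bar V''\nabla a\rangle$, by \eqref{1.5}. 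So the integrand in \eqref{u_evol} becomes the sum of (i) the dissipation term paired against $\nabla_P u$, (ii) the fluctuation term paired against $\nabla_P u$, (iii) the added friction $+\sqrt m\kappa P_t\cdot\nabla_P u$, and (iv) the added diffusion correction $-\sqrt m T\kappa:{\dD}^2_{PP}u$. The point is that Lemma~\ref{E_estim1}\eqref{a} shows that $\mathbb E[$(i)$|X_0,P_0] = -m^{1/2}\mathbb E[\kappa\dot X_t\cdot\nabla_P u]+\mathcal O(m\log m^{-1})$, which cancels (iii) since $\dot X_t = P_t$; and Lemma~\ref{E_estim1}\eqref{b} shows $\mathbb E[$(ii)$] = m^{1/2}T\mathbb E[\kappa:{\dD}^2_{PP}u]+\mathcal O(m\log m^{-1})$, which cancels (iv). Hence the entire integrand in \eqref{u_evol} is $\mathcal O(m\log m^{-1})$ pointwise in $t\in[0,t_*]$, and integrating over the finite interval $[0,t_*]$ gives the stated $\mathcal O(m\log\frac1m)$ bound.

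To make this rigorous I would be careful about three things. Since $u$ has bounded derivatives up to order two in $z$ (stated just above \eqref{u_evol}, from \cite{krylov,harrier_jensen}), $\nabla_P u$ and ${\dD}^2_{PP}u$ play the role of the bounded function $h$ in Lemma~\ref{fric_lem} and in Lemma~\ref{E_estim1}; this is exactly why the lemma was phrased for arbitrary $h\in L^\infty$. The a priori bounds $\sup_{0\le s\le t}\mathbb E[|\dot X_s|^2+|\ddot X_s|^2]<\infty$ required by Lemma~\ref{fric_lem} are furnished by Lemma~\ref{expected_values_lemma} (noting $\dot X_s=P_s$ and $\ddot X_s=\dot P_s$), combined with \eqref{d2lambda}; so the hypotheses of Lemma~\ref{E_estim1} are met and nothing circular occurs. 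Finally, I would note that the error constant in $\mathcal O(m\log m^{-1})$ depends on $t_*$ through the Gronwall factor $e^{Ct_*^4}$ of Lemma~\ref{expected_values_lemma} and through $\sup\|{\dD}^2 g\|$, which is fine for fixed $t_*$.

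The one genuinely delicate step is the proof of Lemma~\ref{E_estim1}\eqref{b}, the fluctuation term, which is the only piece not already reduced to \eqref{6} and \eqref{77}. The heuristic is the fluctuation–dissipation identity \eqref{fluc_diss}: since $\zeta$ is Gaussian with mean zero and covariance $T K_n((t-s)/\sqrt m)$, the cross-correlation of $\zeta_t$ with the Itô-type increment of the dynamics reproduces, after the change of variables $\tau=(t-s)/\sqrt m$ and the decay estimate \eqref{tau_est}, the same integral $\int_0^\infty K_\infty(\tau)\,{\rm d}\tau$ that gave $\kappa$ in \eqref{f_kappa}, but now contracted against ${\dD}^2_{PP}u$ rather than against $\dot X$. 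Concretely I would use Itô's formula on $u$ along the true dynamics to express $\mathbb E[\zeta_t\cdot\nabla_P u(X_t,P_t,t)]$ in terms of $\int_0^t\mathbb E[\zeta_t\,\partial_{\sigma}(\text{drift}+\zeta)_\sigma\cdot{\dD}^2_{PP}u]\,{\rm d}\sigma$; the drift part has vanishing correlation with $\zeta_t$ in the relevant order (it involves $\zeta_\sigma$ only through an earlier-time integral, producing higher-order-in-$m$ contributions), while the $\zeta_\sigma$ part contributes $\int_0^t T K_n((t-\sigma)/\sqrt m)\,{\rm d}\sigma\cdot{\dD}^2_{PP}u$ by \eqref{fluc_diss}; substituting $\tau=(t-\sigma)/\sqrt m$, splitting at $\tau_*=m^{-1/2}$ exactly as in \eqref{4}–\eqref{77}, and freezing $(X_\sigma,P_\sigma)$ at $(X_t,P_t)$ up to an $\mathcal O(m^{1/2}\log\tau_*)$ error from $\int_0^{\tau_*}\tau(1+\tau^2)^{-1}\,{\rm d}\tau$, yields $m^{1/2}T\mathbb E[\kappa:{\dD}^2_{PP}u]+\mathcal O(m\log m^{-1})$. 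The bookkeeping of which cross-terms are genuinely negligible — in particular checking that the $\zeta$–$\zeta$ correlation at coincident times is handled by the same $K_\infty$ integral and that the $\nabla\lambda$ and memory-drift terms decouple from $\zeta_t$ to the needed order — is where the real work sits; everything else is an application of Lemmas~\ref{fric_lem}, \ref{expected_values_lemma} and the already-established estimates \eqref{6}, \eqref{77}, \eqref{tau_est}.
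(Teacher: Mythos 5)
Your assembly of Theorem~\ref{thm1} from the error representation \eqref{u_evol} and Lemma~\ref{E_estim1} is exactly the paper's proof. You correctly split the force ${\rm Re}\langle\varphi_t,\bar V''\nabla a\rangle$ via \eqref{rhs1} into dissipation and fluctuation parts, invoke \eqref{a} to cancel the $\sqrt m\kappa P_t\cdot\nabla_P u$ term (using $\dot X_t=P_t$) and \eqref{b} to cancel the $\sqrt m T\kappa:{\dD}^2_{PP}u$ term, note that the bounded derivatives of $u$ supply the $L^\infty$ function $h$ required in Lemma~\ref{fric_lem}, and check that Lemma~\ref{expected_values_lemma} furnishes the a priori moment bounds; integrating the pointwise $\mathcal O(m\log m^{-1})$ estimate over the fixed interval $[0,t_*]$ gives the theorem, with the $t_*$-dependence of the constant entering through the Gronwall factor. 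This matches the paper's argument step for step.

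Your side sketch of how Lemma~\ref{E_estim1}\eqref{b} might be proved diverges from the paper, and the specific tool you propose would not literally apply. The dynamics \eqref{1.5} is not an It\^o SDE: for each realization of the Gaussian initial data $\varphi_0$ it is a deterministic ODE, and $\zeta_t$ is a smooth, non-white, non-Markovian Gaussian process, so there is no It\^o correction term from which to extract the ${\dD}^2_{PP}u$ contraction. What the argument actually needs is a Gaussian integration-by-parts device: the correlation between $\zeta_t$ and $\nabla_P u(X_t,P_t,t)$ is governed by the covariance kernel $T K_n(\cdot/\sqrt m)$ weighted by the sensitivity of the path to the noise. The paper implements this in Steps~1--3 of the proof of \eqref{b}: for each mode $k$ it removes $\gamma_k\nu'_k$ from $\varphi_0$, producing a perturbed path $Z_2$ independent of $\gamma_k$; it uses the Green's function representation \eqref{perturb} of the linearized GLE (valid because $\nabla a$ is constant by \eqref{Da_const}, keeping $G$ free of the fast $\varphi$ scale) to expand $h_k(Z_1)$ around $h_k(Z_2)$; and it exploits $\mathbb E[\gamma_k^* h_k(Z_2)]=\mathbb E[\gamma_k^*]\mathbb E[h_k(Z_2)]=0$ to isolate the fluctuation--dissipation kernel contracted against ${\dD}^2 u$ via $G$. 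Your heuristic of ``freezing at $(X_t,P_t)$ plus $\mathcal O(m^{1/2}\log\tau_*)$'' is the right intuition, but the mode-removal perturbation is the concrete mechanism that replaces the It\^o calculus you were reaching for.
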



We see also that the alternative  dynamics, with any friction coefficient $\bar\kappa\ge 0$,
\[
\begin{split}
{\rm d}{\bar X}_t &= \bar P_t{\rm d}t\\
{\rm d}{\bar P}_t&= -\nabla\lambda(\bar X_t){\rm d}t -\bar\kappa \bar P_t {\rm d}t
+ (2\bar\kappa T)^{1/2}{\rm d}W_t
\end{split}
\]
approximates \eqref{1.2} and \eqref{1.5} with the larger error 
\begin{equation}\label{larger_err}
\big|\mathbb E[ g(X_{t_*},P_{t_*})\ |\ X_0,P_0] -\mathbb E[g(\bar X_{t_*},\bar P_{t_*})\ |\ \bar X_0=X_0, \bar P_0=P_0]\big|
=\mathcal O\big(\max(m^{1/2},\|\bar\kappa\|)\big)\, ,
\end{equation}
unless $\bar\kappa=m^{1/2}\kappa$.
\begin{proof}[Proof of \eqref{b}.]

We have 
\begin{equation}\label{rhs1}
{\rm Re}\langle\varphi_t , \bar V''\nabla a\rangle= \langle {\rm Re}(\underbrace{e^{-{\rm i}t\bar V''/m^{1/2}}\varphi_0}_{=:\varphi_1(t)}) , \bar V''\nabla a\rangle
-\int_0^t \langle\cos\big(\frac{t-s}{\sqrt{m}}\bar V''^{1/2}\big) \dot a(X_s), \bar V''\nabla a\rangle {\rm d} s
\end{equation}
and  \eqref{a} shows that the second term
cancels  $\sqrt{m}\kappa P_t\cdot
\nabla_P u(X_t,P_t,t)$ to leading order in \eqref{u_evol}. It remains to show that the scalar product of $\nabla_P u(X_t,P_t,t)$
and the first term in the right hand side \eqref{rhs1} 
 cancels the last term in \eqref{u_evol}.

To analyze 
the dependence between $\partial_Pu(X_t,P_t,t)$ and 
${\rm Re}\langle e^{-{\rm i}t\bar V''/m^{1/2}}\varphi_0 , \bar V''\nabla a\rangle$ from the initial data  $\varphi_0$ we will study how
a small perturbation of $\varphi_0$ influences $\partial_Pu(X_t,P_t,t)$.

The proof has three steps: 
\begin{itemize}
\item[(1)] 
to derive a representation of $Z_t=(X_t,P_t)$ in terms of perturbations of the initial data  $\varphi_0=\sum_k\gamma_k\nu'_k$ by removing one term, 
\item[(2)] for a given function $h$ to determine expected values $\mathbb E[\langle e^{-{\rm i}t\bar V''^{1/2}m^{-1/2}}\varphi_0,h(Z_t)\rangle]$  using Step (1), and
\item[(3)] to evaluate \eqref{b}, using the conclusion from Step (2) and the covariance result \eqref{fluc_diss}.
\end{itemize}

{\it Step 1.}
{\it Claim.}
Consider two different  functions: $\varphi_1(t)$ and $\varphi_2(t)=\varphi_1(t)+\epsilon v(t)$,
where $\epsilon\ll 1$, 
then the corresponding solution paths $Z_1=(X_1,P_1)$ and $Z_2=(X_2,P_2)$ of the dynamics 
\eqref{1.2} (which can be written as  
\eqref{1.5}) based on the functions $\varphi_1$ and $\varphi_2$, respectively,  
satisfy
\begin{equation}\label{perturb}
\begin{split}
Z_2(t)&=
Z_1(t)+\sum_{\ell=1}^N \begin{bmatrix}
\int_0^t G_{X_1 P_\ell}(t,s) {\rm Re} \langle
\epsilon v(s) , \bar V''\partial_{X_\ell} a\rangle {\rm d} s\\
\vdots\\
\int_0^t G_{X_N P_\ell}(t,s) {\rm Re} \langle
\epsilon v(s) , \bar V''\partial_{X_\ell} a\rangle {\rm d} s\\
\int_0^t G_{P_1 P_\ell}(t,s) {\rm Re} \langle
\epsilon v(s) , \bar V''\partial_{X_\ell} a\rangle {\rm d} s\\
\vdots\\
\int_0^t G_{P_N P_\ell}(t,s) {\rm Re} \langle
\epsilon v(s) , \bar V''\partial_{X_\ell} a\rangle {\rm d} s\\
\end{bmatrix}
\\
&=
Z_1(t)+ \sum_{\ell=1}^N \int_0^t G_{\cdot P_\ell}(t,s) {\rm Re} \langle
\epsilon v(s) , \bar V''\partial_{X_\ell} a\rangle {\rm d} s
\end{split}
\end{equation}
where 
\[
G(t,s) = 
\begin{bmatrix} G_{XX}(t,s) & G_{XP}(t,s)\\
G_{PX}(t,s) & G_{PP}(t,s)
\end{bmatrix} \in \rset^{2N\times 2N}
\]
solves the linear equation 
\begin{equation}\label{g_ekv22}
\begin{split}
\partial_t G_{X y}(t,s) &=G_{P y}(t,s)\, ,\\
\partial_t G_{P y}(t,s) 
&= -\int_0^1{\dD}^2\lambda\big(\tau X_1(t)+(1-\tau)X_2(t)\big){\rm d}\tau \,
G_{X y}(t,s)\\
&\quad -\int_s^t\langle \bar V''\cos\big(\frac{(t-r)\bar V''^{1/2}}{m^{1/2}}\big)\nabla a\cdot G_{P y}(r,s),\nabla a\rangle {\rm d}r\, ,\ t>s,\ y=X,P,\\
G(s,s) &= 
\left[\begin{array}{cc}
{\rm I} & 0\\
0 & {\rm I}\end{array}
\right] \, .
\end{split}
\end{equation}

{\it Proof of the claim.}
The linearized problem  corresponding to \eqref{1.5} becomes
\begin{equation}\label{linearized}
\begin{split}
    \frac{{\rm d}}{{\rm d}t}{\bar X}_t &= \bar P_t\, ,\\
    \frac{{\rm d}}{{\rm d}t} {\bar P}_t
    &= -\int_0^1{\dD}^2\lambda\big(\tau X_1(t)+(1-\tau)X_2(t)\big){\rm d}\tau \, \bar X_t\\
&\quad -\int_0^t\langle \bar V''\cos\big(\frac{(t-s)\bar V''^{1/2}}{m^{1/2}}\big)\nabla a\cdot \bar P_s,\nabla a\rangle {\rm d}s\\
&\quad +{\rm Re}\langle e^{-{\rm i}t\bar V''/m^{1/2}}\bar\varphi_0 , \bar V''\nabla a\rangle\, ,\ t>0.\\
\end{split}
\end{equation}
Consider $R(t):={\rm Re}\langle e^{-{\rm i}t\bar V''/m^{1/2}}\bar\varphi_0 , \bar V''\nabla a\rangle
= \langle \epsilon v(t),\bar V''\nabla a\rangle$
as a perturbation to the linearized equation \eqref{linearized} with $\bar Z(t):=(\bar X_t,\bar P_t)=Z_2(t)-Z_1(t)$ and 
$\bar R(t):=\big(0, R(t) \big) \in \mathbb{R}^{2N}$
and write \eqref{perturb} as
\begin{equation}\label{Z_ekv}
\bar Z(t)= \int_0^t G(t,s)\bar R(s){\rm d}s\, 
\end{equation}
and \eqref{g_ekv22} in abstract form as
\begin{equation}\label{G_ekv_abs}
\begin{split}
\partial_t G(t,s) &= A(t) G(t,s)
- \int_s^t L(t,v) G(v,s) {\rm d} v\, , \ t>s\, ,\\
G(s,s)&={\rm I}\, .
\end{split}
\end{equation}
Differentiation and change of the order of integration imply by \eqref{Z_ekv} and \eqref{G_ekv_abs}
\[
\begin{split}
\frac{{\rm d}}{{\rm d} t}\bar Z(t)
&= G(t,t) \bar R(t) + \int_0^t\partial_t G(t,s) \bar R(s){\rm d}s\\
&=\bar R(t)+A(t)\int_0^t G(t,s)\bar R(s){\rm d}s
-\int_0^t\int_s^t L(t,v) G(v,s) \bar R(s){\rm d}v{\rm d}s\\
&=\bar R(t)+A(t)\bar Z_t -\int_0^t L(t,v)\int_0^v G(v,s)\bar R(s){\rm d}s{\rm d}v\\
&=\bar R(t)+A(t)\bar Z(t)-\int_0^t L(t,v) \bar Z(v){\rm d}v\, ,
\end{split}
\]
which shows that \eqref{perturb} satisfies the linearized equation \eqref{linearized}, and the claim is proved.

The main reason
 we use assumption  \eqref{Da_const}, namely that $\nabla a(X)$ is constant,
 is to obtain this linearized equation for $G$, which otherwise would include the term 
 ${\rm Re}\langle \varphi, \bar V'' {\dD}^2a\rangle$ that would introduce the fast time scale of $\varphi$ in $G$, which we now avoid. 
 
{\it Step 2.} 
We will use $\varphi_1(t):=\sum_{j}e^{-{\rm i}t\bar V''m^{-1/2}} (\gamma_j^r +{\rm i}\gamma_j^i)\nu_j'$,
based on the orthonormal  eigenvectors $\{\nu_k'\}$ in \eqref{phi0}, and for a given $k$ define
$\varphi_2(t):=\sum_{j\ne k}e^{-{\rm i}t\bar V''m^{-1/2}} (\gamma_j^r +{\rm i}\gamma_j^i)\nu_j'$, that is
$\epsilon v(t)=-e^{-{\rm i}t\omega_km^{-1/2}}
(\gamma_k^r +{\rm i}\gamma_k^i)\nu_k'$, so that $Z_1=(X_1,P_1)$ is the path 
corresponding to $\varphi_1$ and $Z_2=(X_2,P_2)$ corresponds to $\varphi_2$ where $\gamma_k\nu_k'$ is removed from the sum in $\varphi_1$.
For any bounded function $h:\rset^{2N}\to \mathbb C^{n}$ the perturbation property \eqref{perturb}, with $\varphi_0=\sum_k\gamma_k\nu_k'$, implies
\begin{equation}\label{expand}
\begin{split}
&\mathbb E[\langle e^{-{\rm i} t\bar V''^{1/2}m^{-1/2}}\varphi_0, h\big(Z_1(t)\big)\rangle]\\
&=\sum_k\mathbb E[\gamma_k^* \underbrace{\langle 
\nu_k',  e^{-{\rm i} t\bar V''^{1/2}m^{-1/2}}h\big(Z_1(t)\big)
\rangle}_{=:h_k(Z_1(t))}]\\
&=\sum_k\mathbb E\Big[\gamma_k^*\Big(h_k(Z_2)\\
&\quad+\int_0^1\nabla h_k\big(\sigma Z_2+(1-\sigma)Z_1\big){\rm d}\sigma \cdot
\sum_\ell\int_0^t G_{\cdot P_\ell}(t,s){\rm Re}
\langle e^{-{\rm i}s\omega_km^{-1/2}}\gamma_k\nu_k' , \bar V''\partial_{X_\ell} a\rangle {\rm d} s\Big)\Big]\\
&=\sum_k\underbrace{\mathbb E[\gamma_k^*h_k(Z_2)]}_{=\mathbb E[\gamma_k^*]\mathbb E[(h_k(Z_2)]=0} +
\sum_k\mathbb E[\gamma_k^*\int_0^1\nabla h_k\big(\sigma Z_2+(1-\sigma)Z_1\big){\rm d}\sigma\\
&\qquad \cdot
\sum_\ell\int_0^t G_{\cdot P_\ell}(t,s){\rm Re}
 \langle e^{-{\rm i}s\omega_km^{-1/2}} \gamma_k\nu_k' , \bar V''\partial_{X_\ell} a\rangle {\rm d} s]\\
 &=\sum_k\mathbb E[\gamma_k^* \nabla h_k(Z')
\cdot
\sum_\ell\int_0^t G_{\cdot P_\ell}(t,s )
 {\rm Re}\langle e^{-{\rm i}s\omega_km^{-1/2}}\gamma_k\nu_k' , \bar V''\partial_{X_\ell} a\rangle {\rm d} s]\, ,\\
\end{split}
\end{equation}
where $Z'$ is between $Z_1(t)$ and $Z_2(t)$ and satisfies \[
\nabla h\big(Z'(t)\big)=\int_0^1\nabla h\Big( Z_2(t)+\sigma\big(Z_1(t)- Z_2(t)\big)\Big){\rm d}\sigma\, .\]
 We will use that the difference between $Z_1$ and $Z_2$
is small, namely
\begin{equation}\label{DZ}
\begin{split}
\Delta Z(t):=Z_2(t)-Z_1(t)&= \sum_\ell \int_0^t G_{\cdot P_\ell}(t,s) {\rm Re} \langle e^{-{\rm i}s\omega_km^{-1/2}}\gamma_k\nu_k' , \bar V''\partial_{X_\ell} a\rangle {\rm d} s\\
& = \sum_\ell \int_0^t G_{\cdot P_\ell}(t,s) \frac{T^{1/2}}{\omega_k  n^{1/2}}{\rm Re} ( e^{-{\rm i}s\omega_km^{-1/2}}\xi_k \beta_{k,\ell}) 
 {\rm d} s\, ,
 \end{split}
\end{equation}
where $\xi_k=\omega_kT^{-1/2}\gamma_k=\xi_{k,r}+{\rm i}\xi_{k,i}$, with $\xi_{k,r}$ and $\xi_{k,i}$
 independent and standard normal distributed with mean zero and variance one.
The eigenvalue representation \eqref{hb_model} and \eqref{xi_n} show that $\sqrt n\, \omega_k\to \infty$
as $n\to \infty$. Consequently we have $|Z_2-Z_1|\to 0$  as $n\to\infty$, uniformly for all realizations.


{\it Step 3.}
The dependence between $\partial_Pu(X_t,P_t,t)$ and 
${\rm Re}\langle e^{-{\rm i}t\bar V''/m^{1/2}}\varphi_0 , \bar V''\nabla a\rangle$ from the initial data  $\varphi_0$ 
with $Z' = (X',P')$ can by \eqref{expand} and \eqref{gamma_W}   be written 
\begin{equation}\label{first_exp}
\begin{split}
&\mathbb E[\langle {\rm Re}(e^{-{\rm i}t\sqrt{\bar V''}/\sqrt{m}} \varphi_0),
\bar V''\nabla a\cdot \nabla_P u(X_t,P_t,t)\rangle ] \\
&=
\sum_{k,\ell,\ell',\ell''}
\mathbb E\Big[ \langle {\rm Re}(\underbrace{e^{-{\rm i}t\sqrt{\bar V''}/\sqrt{m}}}_{=S_{t0}} \gamma_k\nu_k'),
\bar V''\partial_{X_\ell} a\rangle \times\\
&\qquad \times\int_0^t \underbrace{\Big({\dD}^2_{P_{\ell'}P_\ell} u(X'_t,P'_t,t)G_{P_{\ell'}P_{\ell''}}(t,s)
+ {\dD}^2_{X_{\ell'}P_\ell}u(X'_t,P'_t,t)G_{X_{\ell'}P_{\ell''}}(t,s)\Big)}_{=:U_{\ell,\ell',\ell''}(X_t,P_t, G,t,s)}\times\\
&\qquad\times
\langle {\rm Re}(e^{-{\rm i}s\sqrt{\bar V''}/\sqrt{m}} \gamma_k\nu_k'),
\bar V''\partial_{X_{\ell''}} a\rangle{\rm d} s\Big]\\
\end{split}
\end{equation}
so that  by \eqref{f_def} 
\begin{equation}\label{dupG}
\begin{split}
&\mathbb E[\langle {\rm Re}(e^{-{\rm i}t\sqrt{\bar V''}/\sqrt{m}} \varphi_0),
\bar V''\nabla a\cdot \nabla_P u(X_t,P_t,t)\rangle ] \\
&=
\sum_{k,\ell,\ell',\ell''}
{\rm Re}(\underbrace{e^{{\rm i} t\omega_k m^{-1/2}} \frac{T^{1/2}\beta_{k,\ell}}{\omega_k n^{1/2}}}_{=:\epsilon_{k,\ell}(t)}) \int_0^t  \mathbb E[U_{\ell,\ell',\ell''}(X'_t,P'_t,G,t,s)
|\xi_k^r|^2]
{\rm Re}\big(\epsilon_{k,\ell''}(s)\big){\rm d} s\\
&\quad
+\sum_{k,\ell,\ell',\ell''}
{\rm Im}\big(\epsilon_{k,\ell}(t)\big) \int_0^t  \mathbb E[
U_{\ell,\ell',\ell''}(X'_t,P'_t,G,t,s)|\xi_k^i|^2]
{\rm Im}\big(\epsilon_{k,\ell''}(s)\big){\rm d} s\\
&\quad-\sum_{k,\ell,\ell',\ell''}
{\rm Im}\big(\epsilon_{k,\ell}(t)\big) \int_0^t  \mathbb E[
U_{\ell,\ell',\ell''}(X'_t,P'_t,G,t,s)\xi_k^r\xi_k^i]
{\rm Re}\big(\epsilon_{k,\ell''}(s)\big){\rm d} s\\
&\quad-\sum_{k,\ell,\ell',\ell''}
{\rm Re}\big(\epsilon_{k,\ell}(t)\big)\int_0^t  \mathbb E[
U_{\ell,\ell',\ell''}(X'_t,P'_t,G,t,s)\xi_k^r\xi_k^i]
{\rm Im}\big(\epsilon_{k,\ell''}(s)\big){\rm d} s\, .\\
%
\end{split}
\end{equation}
To determine the expected value $\mathbb E[
U_{\ell,\ell',\ell''}(X'_t,P'_t,G,t,s)\xi_k^r\xi_k^i]$  we use 
 $Z_2=(X_2,P_2)$ in \eqref{perturb} based on $\varphi_2$
 and the Green's function 
 \[
 G_2(t,s) = 
 \begin{bmatrix}
 G_{2,XX}(t,s) & G_{2,XP}(t,s)\\
 G_{2,PX}(t,s) & G_{2,PP}(t,s)
 \end{bmatrix} \in \rset^{2N \times 2N}
 \]
 based on $X_2$ that is defined by
\begin{equation*}
\begin{split}
\partial_t G_{2,X y}(t,s) &=G_{2,P y}(t,s)\, ,\\
\partial_t G_{2,P y}(t,s) 
&= -{\dD}^2\lambda\big(X_2(t)) G_{2,X y} \\
&\quad -\int_s^t\langle \bar V''\cos\big(\frac{(t-r)\bar V''^{1/2}}{m^{1/2}}\big)\nabla a\cdot G_{2,P y}(r,s),\nabla a\rangle {\rm d}r\, ,\ t>s,\ y=X,P,\\
G_2(s,s) &= 
\left[\begin{array}{cc}
{\rm I} & 0\\
0 & {\rm I}\end{array}
\right] \, .
\end{split}
\end{equation*}
The difference of the two Green's functions satisfy by   \eqref{perturb} the perturbation representation
\[
\begin{split}
&G(t,s)-G_2(t,s)\\
&= \int_s^t G(t,r) 
\left[\begin{array}{cc}
0 & 0\\
{\dD}^2\lambda\big(X_2(r)\big)-
\int_0^1{\dD}^2\lambda\big(\tau X_1(r)+(1-\tau)X_2(r)\big){\rm d}\tau & 0
\end{array}
\right] G_2(r,s){\rm d} r\\
&= \int_s^t G(t,r) 
\left[\begin{array}{cc}
0 & 0\\
\int_0^1\int_0^1{\dD}^3\lambda\big(\tau X_2(r)+\tau\sigma \Delta X(r) \big) \tau \Delta X(r) {\rm d}\tau {\rm d}\sigma& 0
\end{array}
\right] G_2(r,s){\rm d} r
\, .
\end{split}
\]
The processes $Z_2$ and $G_2$ do not depend on $\xi_k$.
The expected value can therefore be evaluated using the Jacobian $U'={\dD}U$, with respect to $Z$ and $G$, and the small perturbation $\Delta Z$ in the path from \eqref{DZ} as
\[
\begin{split}
&\mathbb E[
U_{\ell,\ell',\ell''}(X'_t,P'_t,G,t,s)\xi_k^r\xi_k^i]\\
&=\mathbb E[
U_{\ell,\ell',\ell''}(Z_2(t),G_2,t,s)\xi_k^r\xi_k^i]\\
&\quad +\mathbb E[ \int_0^1\int_0^1
U'_{\ell,\ell',\ell''}\big( Z_2(t)+\sigma\tau (Z_1(t)-Z_2(t)),G_2+\tau (G-G_2),t,s\big){\rm d}\tau\times \\
&\qquad\times \xi_k^r\xi_k^i 
\left[\begin{array}{c}
 \sigma \big(Z_1(t)-Z_2(t)\big)\\
 G-G_2\\
 \end{array}
 \right]{\rm d}\sigma]\\
 &=\mathbb E[ U_{\ell,\ell',\ell''}(Z_2(t),t,s)]\underbrace{\mathbb E[\xi_k^r\xi_k^i]}_{=0}
+\mathcal O(\frac{1}{n^{1/2}\eta_n}) 
 \end{split}
\]
with analogous splittings for 
\[\mbox{
$U_{\ell,\ell',\ell''}(X'_t,P'_t,G,t,s)\xi_k^i\xi_k^i$ and 
$U_{\ell,\ell',\ell''}(X'_t,P'_t,G,t,s)\xi_k^r\xi_k^r$,}
\]based on $\mathbb E[\xi_k^r\xi_k^r]=\mathbb E[\xi_k^i\xi_k^i]=1$. Dominated convergence, using the assumption $n^{1/2}\eta_n\to\infty$ in \eqref{xi_n} 
as $n \to \infty$, implies therefore
\[
\begin{split}
&\lim_{\bar n\to\infty}\mathbb E[\langle {\rm Re}(e^{-{\rm i}t\sqrt{\bar V''}/\sqrt{m}} \varphi_0),
\bar V''\nabla a\cdot \nabla_P u(X_t,P_t,t)\rangle ] \\
&=
\lim_{\bar n\to\infty}\sum_{k,\ell,\ell',\ell''}
{\rm Re}\big(\epsilon_{k,\ell}(t)\big) \int_0^t  \mathbb E[
U_{\ell,\ell',\ell''}(X_t,P_t,G,t,s)]
{\rm Re}\big(\epsilon_{k,\ell''}(s)\big){\rm d} s\\
&\quad+\lim_{\bar n\to\infty}\sum_{k,\ell,\ell',\ell''}
{\rm Im}\big(\epsilon_{k,\ell}(t)\big) \int_0^t  \mathbb E[
U_{\ell,\ell',\ell''}(X_t,P_t,G,t,s)]
{\rm Im}\big(\epsilon_{k,\ell''}(s)\big){\rm d} s\\
&=
\lim_{\bar n\to\infty}\sum_{k,\ell,\ell',\ell''}\int_0^t
{\rm Re}\big(\epsilon_{k,\ell}^*(t)\epsilon_{k,\ell''}(s)\big)   \mathbb E[
U_{\ell,\ell',\ell''}(X_t,P_t,G,t,s)]{\rm d} s\, ,\\
\end{split}
\]
which shows that $(X_t,P_t)$ in the limit becomes independent of the small perturbation  caused
by $\langle \gamma_k\nu_k',\bar V''\nabla a\rangle$.
We have 
\[
\begin{split}
\sum_k{\rm Re}\big(\epsilon_{k,\ell}^*(t)\epsilon_{k,\ell''}(s)\big) &=
T{\rm Re} (\sum_k\langle\nu_k', e^{{\rm i}(t-s)\bar V''/\sqrt m}\partial_{X_\ell}a\rangle\langle \nu_k',\bar V''\partial_{X_{\ell''}}a\rangle)\\
&=T\langle \cos\big((t-s)\bar V''/\sqrt m\big)\partial_{X_\ell}a, \bar V''\partial_{X_{\ell''}}a\rangle\, ,
\end{split}
\]
and we obtain then as in the proof of \eqref{fluc_diss}
\[
\begin{split}
& \lim_{\bar n\to\infty}
\sum_{\ell,\ell',\ell''}\mathbb E\Big[\sum_k
{\rm Re}\langle  S_{t0}(\gamma_k^r+{\rm i}\gamma_k^i)\nu_k', \bar V''\partial_{X_\ell} a\rangle
\times\\
&\quad\times\Big({\dD}^2_{P_{\ell'}P_\ell} u \int_0^t G_{P_{\ell'}P_{\ell''}}(t,s) {\rm Re}\langle S_{s0}
(\gamma_k^r+{\rm i}\gamma_k^i)
\nu_k', \bar V''\partial_{\ell''} a \rangle
{\rm d} s\\
&\quad+{\dD}^2_{X_{\ell'}P_\ell} u \int_0^t G_{X_{\ell'}P_{\ell''}}(t,s) {\rm Re}\langle
S_{s0}(\gamma_k^r+{\rm i}\gamma_k^i)\nu_k' , \bar V''\partial_{\ell''} a \rangle
{\rm d} s\Big)\Big]\\
&= T\lim_{\bar n\to\infty} \sum_{\ell,\ell',\ell''}\mathbb E[\int_0^t \big({\dD}^2_{P_{\ell'}P_\ell} u G_{P_{\ell'}P_{\ell''}}(t,s) + {\dD}^2_{X_{\ell'}P_{\ell}} u G_{X_{\ell'}P_{\ell''}}(t,s)
\big)\times\\
&\qquad \times\langle
\cos\big((t-s)\sqrt{\bar V''}/\sqrt{m}\big) \partial_{X_{\ell}} a, \bar V'' \partial_{X_{\ell''}} a\rangle {\rm d}s]\\
&= m^{1/2}{T}\sum_{\ell,\ell',\ell''}
\mathbb E[\int_0^{t/\sqrt{m}} \Big({\dD}^2_{P_{\ell'}P_\ell} u(X_t,P_t,t) G_{P_{\ell'}P_{\ell''}}(t,t-\sqrt{m}\tau) \\
&\qquad + {\dD}^2_{X_{\ell'}P_{\ell}} u(X_t,P_t,t) G_{X_{\ell'}P_{\ell''}}(t,t-\sqrt{m}\tau)\Big)
\langle \cos(\tau\sqrt{\bar V''}) \partial_{X_{\ell}} a, \bar V'' \partial_{X_{\ell''}} a\rangle{\rm d}\tau]\, .\\
\end{split}
\]
Lemma  \ref{fric_lem} with $G(t,t-\sqrt m\tau)=G(t-\sqrt m\tau,t-\sqrt m\tau) + \int_{t-\sqrt m\tau}^t\partial_r G(r,t-\sqrt m\tau) {\rm d} r$ replacing 
\[\big(X(t-\sqrt m \tau),P(t-\sqrt m \tau)\big)=\big((X(t)P(t)\big) - \int_{t-\sqrt m\tau}^t \big(\dot X(s),\dot P(s)\big){\rm d} s\, ,\]
using 
also \eqref{g_ekv22}, $G_{PP}(t-\sqrt m\tau,t-\sqrt m\tau)={\rm I}$ and $G_{XP}(t-\sqrt m\tau,t-\sqrt m\tau)=0$, verifies \eqref{b}:
\[
\begin{split}
& m^{1/2}{T}\sum_{\ell,\ell',\ell''}
\mathbb E[\int_0^{t/\sqrt{m}} \Big({\dD}^2_{P_{\ell'}P_\ell} u(X_t,P_t,t) G_{P_{\ell'}P_{\ell''}}(t,t-\sqrt{m}\tau) \\
&\qquad + {\dD}^2_{X_{\ell'}P_{\ell}} u(X_t,P_t,t) G_{X_{\ell'}P_{\ell''}}(t,t-\sqrt{m}\tau)\Big)
\langle \cos(\tau\sqrt{\bar V''}) \partial_{X_{\ell}} a, \bar V'' \partial_{X_{\ell''}} a\rangle{\rm d}\tau]\\
&= m^{1/2}T\mathbb E[ {\dD}^2_{PP} u(X_t,P_t,t) : \kappa] +\mathcal O(m\log m^{-1})\, .
\end{split}
\]

%
\end{proof}

\section{Analysis of the generalized Langevin equation for $\chi\ll 1$}\label{chi_sec}
In this section we study a time scale separation for system particles and fast heat bath particles 
due to a stiff heat bath obtained by changing the scaling to
\begin{equation}\label{chi_cond}
\begin{split}
\bar V'' &= \chi^{-1}\tilde V''\, ,\\
a(X) &=\chi^\delta \tilde a(X)\, ,\\
\varphi_0&=\sum_k \gamma_k\nu_k'=\chi^{1/2}\sum_k\tilde \gamma_k\nu_k'=\chi^{1/2}\tilde\varphi_0\, ,\\
\tilde \gamma_k &\sim N(0,T/\tilde\omega_k^2)\, ,\\
\tilde\omega_k^2 &\mbox{ are the eigenvalues of $\tilde V''$}\, ,\\
\end{split}
\end{equation}
and assume that 
\begin{equation}\label{chi_cond2}
\mbox{$m$ and the elements of $\tilde V''$ and $ \tilde a$ are of size one while $\chi\ll 1$.}
\end{equation}
The nonlinear generalized Langevin equation then becomes
\begin{equation*}\label{4.5}
\begin{split}
\ddot X_t&= -\nabla \lambda(X_t) - \chi^{2\delta-1}
\int_0^t \langle\tilde V'' \cos\big(\frac{(t-s) \tilde V''^{1/2}}{(\chi m)^{1/2}}\big) \dot{ \tilde a}(X_s), \nabla \tilde 
a(X_t)\rangle{\rm d} s\\
&\quad + \chi^{\delta -1/2}
{\rm Re}\big\langle \tilde V'' e^{-{\rm i} \tilde V''^{1/2} (\chi m)^{-1/2}t}\tilde \varphi_0, \nabla \tilde a(X_t)\big\rangle\, .
\end{split}
\end{equation*}
The analysis in Section \ref{sec3} can be applied by replacing $m$ by $\chi m$; $\bar V''$ by $\tilde V''$; and 
$a$ by $\tilde a$ and we obtain
\begin{theorem}\label{thm2} 
Let $m_0:=\chi^{2\delta-1/2} m^{1/2}$ with $\delta>1/4$ and assume that \eqref{chi_cond}-\eqref{chi_cond2} hold together with
the assumptions in Theorem \ref{thm1}, then
the Langevin dynamics 
\begin{equation}\label{Ito_langevin3}
\begin{split}
{\rm d}X_L(t) &= P_L(t){\rm d}t\\
{\rm d}P_L(t) &= -\nabla \lambda\big(X_L(t)\big){\rm d}t - m_0\kappa P_L(t){\rm d}t  + (2 m_0\kappa T)^{1/2}\ {\rm d} W(t)\, ,\\
X_L(0) &=X(0)\, ,\\
P_L(0) &=P(0)\, \\
\end{split}
\end{equation}
 approximates the Hamiltonian dynamics \eqref{1.2} and \eqref{1.5},
with the error estimate 
\[\begin{split}
&\Big|\mathbb E[ g(X_{t_*},P_{t_*})\ |\ X_0,P_0]
-\mathbb E[ g\big(X_L(t_*,P_L(t_*)\big)\ |\ \big(X_L(0),P_L(0)\big)=(X_0,P_0)] \Big|\\
&=
\mathcal O(m_0\chi^{1/2}\log\chi^{-1})
=\mathcal O(\chi^{2\delta}\log\chi^{-1})\, , 
\end{split}\]
where the friction matrix is determined by the force 
$\tilde F_{\ell j}=\lim_{n\to\infty} (\tilde V''\partial_{X_\ell}\tilde a)(j)$ as
\begin{equation}\label{kap2}
\kappa_{\ell \ell'}= \frac{1}{4\pi c^3}\big(\sum_{j\in E_\infty} \tilde F_{\ell j}\big)
\big(\sum_{j\in E_\infty} \tilde F_{\ell' j}\big)\, .
\end{equation}
\end{theorem}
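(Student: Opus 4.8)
The plan is to reduce Theorem \ref{thm2} to Section \ref{sec3} by the change of variables \eqref{chi_cond}. Substituting $\bar V''=\chi^{-1}\tilde V''$, $a=\chi^\delta\tilde a$, $\varphi_0=\chi^{1/2}\tilde\varphi_0$ into \eqref{star} and \eqref{1.5}, and using $\bar V''^{1/2}m^{-1/2}=\tilde V''^{1/2}(\chi m)^{-1/2}$, produces exactly the rescaled generalized Langevin equation displayed before the statement, in which the dissipation kernel carries the prefactor $\chi^{2\delta-1}$ and the fluctuation term the prefactor $\chi^{\delta-1/2}$. The heat bath \eqref{lattice_def}, the eigenstructure \eqref{hb_model} and the localization assumptions \eqref{F_cond}, \eqref{beta_assum} carry over unchanged to $\tilde V''$, $\tilde a$ (the force $F'_{\ell,j}=(\bar V''\partial_{X^\ell}a)_j$ is simply replaced by $\tilde F_{\ell,j}=(\tilde V''\partial_{X^\ell}\tilde a)_j=\chi^{1-\delta}F'_{\ell,j}$), so that Lemma \ref{fric_lem} applies with $(m,\bar V'',a)\mapsto(\chi m,\tilde V'',\tilde a)$ and with $\chi m\ll1$ playing the role of the small mass ratio.

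I would then run the three ingredients of Section \ref{sec3} in the rescaled variables. First, Lemma \ref{fric_lem} (with the Fourier inversion \eqref{f_kappa} now written in terms of $\tilde F$) shows that as $n\to\infty$ the dissipation term equals $\chi^{2\delta-1}(\chi m)^{1/2}\kappa\dot X_t+\chi^{2\delta-1}\mathcal O(\chi m\log(\chi m)^{-1})$, and since $\chi^{2\delta-1}(\chi m)^{1/2}=m_0$ this is precisely the Langevin friction $m_0\kappa\dot X_t$ up to $\mathcal O(m_0\chi^{1/2}\log\chi^{-1})$, with $\kappa$ as in \eqref{kap2}. Second, the fluctuation--dissipation identity \eqref{fluc_diss} applied to $\tilde\zeta$ and multiplied by $\chi^{2\delta-1}$ gives that the covariance of the fluctuation term is $\chi^{2\delta-1}T\tilde K_\infty((t-s)(\chi m)^{-1/2})$, whose time integral is $2Tm_0\kappa$, matching the diffusion coefficient $(2m_0\kappa T)^{1/2}$ of \eqref{Ito_langevin3}; the a priori bounds of Lemma \ref{expected_values_lemma} for $\sup_{s\le t}\mathbb E[|X_s|^2+|P_s|^2]$ persist because the time integral $\int_0^t\zeta_s\,\mathrm{d}s$ of the (fast, and after rescaling large-amplitude) fluctuation term is still $\mathcal O(\chi^\delta m^{1/2})\to0$. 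Third, I would repeat the Kolmogorov backward computation \eqref{u_evol} with friction $m_0\kappa$: the dissipation term cancels $m_0\kappa P_t\cdot\nabla_P u$, and the argument of Steps 1--3 in the proof of \eqref{b}, carried out with $\tilde\varphi_0,\tilde V'',\tilde a,\chi m$, shows the fluctuation term paired with $\nabla_P u$ equals $m_0T\,\kappa:{\dD}^2_{PP}u+\mathcal O(m_0\chi^{1/2}\log\chi^{-1})$ and so cancels the $-m_0T\,\kappa:{\dD}^2_{PP}u$ from the quadratic variation. The bookkeeping closes because the drift prefactor is $\chi^{2\delta-1}$ while the noise enters $u$ only through its covariance, i.e. with the square $(\chi^{\delta-1/2})^2=\chi^{2\delta-1}$, so that both produce the single power $m_0=\chi^{2\delta-1}(\chi m)^{1/2}$; integrating over $[0,t_*]$ yields the error $\mathcal O(m_0\chi^{1/2}\log\chi^{-1})=\mathcal O(\chi^{2\delta}\log\chi^{-1})$.

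The step I expect to be the main obstacle is making the error estimate in the last ingredient uniform in $\chi$. In contrast to the case $m\ll1$, the pointwise second moment $\mathbb E[|\ddot X_s|^2]=\mathbb E[|\dot P_s|^2]$ of the rescaled dynamics is of size $\chi^{2\delta-1}$ and hence blows up as $\chi\to0^+$ whenever $\delta<1/2$, so the quantities in Section \ref{sec3} that were controlled by $\sup_s(\mathbb E[|\ddot X_s|^2])^{1/2}$ — the term $Cm^{1/2}\log\tau_*\sup_s(\mathbb E[|\ddot X_s|^2])^{1/2}$ in the proof of Lemma \ref{fric_lem} and the analogous terms feeding \eqref{a}--\eqref{b} — cannot be used with a $\chi$-independent constant. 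The remedy is the mechanism that already makes \eqref{b} work: the fast fluctuation is always integrated in time, through $\int_{t-\tau\sqrt{\chi m}}^t\zeta_s\,\mathrm{d}s$, and it enters the estimates only via its correlation with the bounded functions $\nabla_P u$, ${\dD}^2 u$, $h$, so that the naive Cauchy--Schwarz bound $\|h\|_\infty\|\ddot X\|_{L^2}$ must be replaced by the finer estimate from expanding the path dependence on each mode $\gamma_k\nu_k'$ as in \eqref{expand}--\eqref{DZ}; this recovers an extra power of $\chi^{\delta-1/2}$ and keeps every error contribution at order $m_0\chi^{1/2}\log\chi^{-1}$. Checking that this finer estimate closes throughout the range $\delta>1/4$ (the range in which $m_0\to0$) is the technical heart of the proof.
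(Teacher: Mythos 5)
Your proposal follows the paper's own (one-sentence) proof: apply the Section~\ref{sec3} analysis with the substitutions $m\mapsto\chi m$, $\bar V''\mapsto\tilde V''$, $a\mapsto\tilde a$, $\varphi_0\mapsto\tilde\varphi_0$, and carry the prefactors $\chi^{2\delta-1}$ (dissipation) and $\chi^{\delta-1/2}$ (fluctuation) through the bookkeeping, which indeed recovers $m_0=\chi^{2\delta-1}(\chi m)^{1/2}$ and the stated $\mathcal O(\chi^{2\delta}\log\chi^{-1})$ error. Your observation that $\sup_s\mathbb E[|\ddot X_s|^2]$ degrades to $\mathcal O(\chi^{2\delta-1})$ after rescaling, so that for $1/4<\delta<1/2$ the crude Cauchy--Schwarz step in the proof of Lemma~\ref{fric_lem} must be replaced by the covariance-decay, mode-by-mode estimate as in the proof of \eqref{b}, is correct and is a subtlety the paper does not make explicit; the only small slip is that $\int_0^t\zeta_s\,\mathrm{d}s$ has $L^2$ size $(m_0 t)^{1/2}=\chi^{\delta-1/4}m^{1/4}t^{1/2}$ rather than $\chi^{\delta}m^{1/2}$, which however still vanishes precisely when $\delta>1/4$, so your conclusion stands.
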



\section{Molecular dynamics approximation of a quantum system}\label{sec2}
The purpose of this section is to present a molecular dynamics approximation for
 observables of a quantum particle system consisting of nuclei and electrons coupled to a heat bath.
  The observables may include correlations in time.
 The first subsection provides background to quantum observables approximated by molecular dynamics in the canonical ensemble. The next subsection combines these quantum approximation results of with the 
 classical Langevin approximation in Theorems \ref{thm1} and \ref{thm2}.
 
 \subsection{Canonical quantum observables approximated by molecular dynamics}

 The quantum formulation is based on  wave functions $\Phi:\rset^N\times\rset^n\to\mathbb C^d$
 and the Hamiltonian
 \[
-\frac{M_s^{-1}}{2}{\rm I}\Delta_X -\frac{M_b^{-1}}{2}{\rm I}\Delta_x + V(X)+ V_b(x,X)\, ,
 \]
 where $X\in\rset^N$ and $x\in\rset^n$ are the nuclei coordinates of the system and heat bath positions, respectively,
 and $M_s$  and $M_b$ are the diagonal matrices  of the mass of the system nuclei and heat bath nuclei, respectively,
 measured in units of the electron mass.
 The functions $V:\rset^N\to \mathbb C^{d^2}$ and $V_b:\rset^n\times\rset^N\to  \mathbb C^{d^2}$
 are finite difference approximations of the electron kinetic energy and nuclei--nuclei, nuclei--electron and electron--electron interactions, related to the system and the heat bath. The matrix ${\rm I}$ is the identity on
 $\mathbb C^d$. This simplification to replace the Laplacians for the electron kinetic energy
 by difference approximations makes it easier to derive the classical limit. Another simplification is to change coordinates $M^{1/2}\bar X=M_s^{1/2}X$ and $M^{1/2}\bar x=M_b^{1/2}x$ and let $\tilde x=(\bar X,\bar x)\in\rset^{N+n}$,
 where $M\gg 1$ is a reference nuclei--electron mass ratio.
 In these coordinates the Hamiltonian takes the form
 \[
 \hat{H}= -\frac{1}{2M}{\rm I} \Delta_{\tilde x}+ v({\tilde x})\, ,
 \]
 where $v({\tilde x}):= V(M_s^{-1/2}M^{1/2}\bar X) + V_b(M_b^{-1/2}M^{1/2}\bar x,M_s^{-1/2}M^{1/2}\bar X)$.
 We will use the eigenvalues $\lambda_k({\tilde x})\in\rset$ and
 eigenvectors $\psi_k({\tilde x})$ of the Hermitian matrix $v({\tilde x})$ defined by
 \[
 v({\tilde x})\psi_k({\tilde x}) = \lambda_k({\tilde x})\psi_k({\tilde x})\, .
 \]
%
 We assume that the eigenvalues 
satisfy
\begin{equation}\label{lam_cond}
\begin{split}
&\lambda_1({\tilde x})<\lambda_2({\tilde x})<\ldots<\lambda_d({\tilde x})\, ,\\
&\lambda_1({\tilde x})\rightarrow\infty \mbox{ as } |{\tilde x}|\rightarrow \infty\, .
\end{split}
\end{equation}
The first assumption is in order to have differentiable eigenvectors and the second condition implies
 that the spectrum of $\hat{ H}$ is discrete, see \cite{dell}.

The aim here is to study canonical quantum observables, including correlations in time, namely
\[
{\rm trace}( \hat { A}_\tau\hat C_0)= 
\sum_{n=1}^\infty ( \Phi_n, \hat { A}_\tau\hat C_0 \Phi_n)\, ,
\]
where $\{\Phi_n\}_{n=1}^\infty$ is a normalized basis of $L^2({\rm d}{\tilde x})$, e.g. the set if normalized eigenfunctions to
$\hat{H}$ and $( f,g):=\int_{\rset^{N+n}} f^*({\tilde x})g({\tilde x}){\rm d}{\tilde x}$. An operator
$\hat B$ is the Weyl quantization  that maps $L^2(\rset^{N+n})$ to  $[L^2(\rset^{N+n})]^d$
and is defined, from a $d\times d$ matrix valued symbol $B:\rset^{N+n}\to \mathbb C^{d^2}$ in the Schwartz class, by
\[
\hat{B}\Phi({\tilde x})= (\frac{M^{1/2}}{2\pi})^{N+n}\int_{\rset^{2(N+n)}} e^{{\rm i}M^{1/2} ({\tilde x}-y)\cdot {\tilde p}}B(\frac{{\tilde x}+y}{2},{\tilde p}) \Phi(y){\rm d}{\tilde p}{\rm d} y\, .
\]
For instance, we have $\widehat{\frac{|{\tilde p}|^2}{2} +v({\tilde x})}=\hat{H}$.
 The time dependent operator $\hat B_\tau$  is defined by
\begin{equation}\label{a_t_def}
\hat B_\tau:=e^{{\rm i}\tau M^{1/2}\hat{ H}}\hat{ B}e^{-{\rm i}\tau M^{1/2}\hat{ H}}\, , \tau\in\rset,
\end{equation}
which implies the von Neumann-Heisenberg equation
\begin{equation}\label{vNH}
\frac{{\rm d}}{{\rm d}\tau} \hat B_\tau = {\rm i} M^{1/2}[\hat H,\hat B_\tau]
\end{equation}
where $[\hat H,\hat B_\tau]:=\hat H\hat B_\tau-\hat B_\tau\hat H$ is the commutator. 
The example of the observable for the diffusion constant \[
\frac{1}{6\tau}\frac{3}{N+n}\sum_{k=1}^{(N+n)/3} |{\tilde x}_k(\tau)-{\tilde x}_k(0)|^2=\frac{1}{2(N+n)\tau}\big(
|{\tilde x}({\tau})|^2 +|{\tilde x}(0)|^2-2{\tilde x}(\tau)\cdot {\tilde x}(0)\big)\]
uses the time-correlation $\hat {\tilde x}(\tau)\cdot \hat {\tilde x}(0)$ where  $\hat A_\tau=\hat {\tilde x}_\tau
\rm I$ and $\hat C_0=\hat {\tilde x}_0\rm I$ and
\[
\hat {\tilde x}_\tau\cdot \hat {\tilde x}_0=\sum_{k=1}^{(N+n)/3}\sum_{j=1}^3e^{{\rm i}\tau M^{1/2}\hat{ H}}\hat {\tilde x}_{k_j} e^{-{\rm i}
\tau M^{1/2}\hat{ H}}  \hat {\tilde x}_{k_j}\, .\]

 A main tool to determine the classical limit is to diagonalize \eqref{vNH},
 which is based on the following composition of Weyl quantizations:
the symbol $C$ for the product of two Weyl operators $\hat A\hat B=\hat C$
is determined by
\begin{equation}\label{eq:composition}
C({\tilde x},{\tilde p}) = e^{\frac{i}{2M^{1/2}} (\nabla_{ {\tilde x}'}\cdot\nabla_{{\tilde p}}-\nabla_{ {\tilde x}}\cdot \nabla_{ {\tilde p}'})}A( {\tilde x}, {\tilde p})
B( {\tilde x}', {\tilde p}')\Big|_{
{\tiny \begin{array}{c}
 {\tilde x}= {\tilde x}'\\
 {\tilde p}= {\tilde p}'
\end{array}
}} =:(A\# B)({\tilde x},{\tilde p})\, ,
\end{equation}
see \cite{zworski}. 
 Assume that $\Psi:\rset^{N+n}\rightarrow \mathbb C^{d^2}$ and $\Psi({\tilde x})$ is any unitary matrix  with the Hermitian transpose
 $\Psi^*({\tilde x})$ and define
$ {\bar A}:\rset^{N+n}\times [0,\infty)\rightarrow \mathbb C^{d^2}$ by
\[
\hat A_{\tau} = \hat \Psi({\tilde x}) \hat{\bar A}_{\tau}\hat\Psi^*({\tilde x})
\]
so that
\[
\hat{\bar A}_{\tau} = \hat \Psi^*({\tilde x}) \hat{ A}_{\tau}\hat\Psi({\tilde x})\, .
\]
Then 
\[
[\hat H,\hat A_{\tau}]= \hat\Psi[\hat\Psi^*\hat H\hat\Psi, \hat{\bar A}_{\tau}]\hat\Psi^*
\]
and consequently
\[
\partial_{\tau}\hat{\bar A}_{\tau}  = {\rm i}M^{1/2}[\hat\Psi^*\hat H\hat\Psi, \hat{\bar A}_{\tau}]\, .
\]
The composition rule \eqref{eq:composition} implies $\hat\Psi^*\hat H\hat\Psi=(\Psi^*\#H\#\Psi)^{\widehat{}}$
and $\bar A_\tau=\Psi^*\#A_\tau\#\Psi$.
The next step is to determine $\Psi$ so that 
\[
\bar H:=\Psi^*\#H\#\Psi
\]
is  almost diagonal. Having $\bar H$ diagonal implies that $\hat{\bar H}$ is diagonal
and then $\hat{\bar A}$ remains diagonal if it initially were diagonal, since then
\[
\begin{split}
\frac{{\rm d}}{{\rm d}{\tau}} \hat{\bar A}_{jk}({\tau})
& = iM^{1/2}\big(\hat{\bar H}_{jj}\hat{\bar A}_{jk}({\tau}) -
\hat{\bar A}_{jk}({\tau})\hat{\bar H}_{kk}\big)=0\, , \quad \mbox{ for } j\ne k.\\
\end{split}
\]

The composition rule \eqref{eq:composition} with 
\[
H({\tilde x},{\tilde p})=\frac{|{\tilde p}|^2}{2} {\rm I} + v({\tilde x})
\]
implies that
\[
\begin{split}
\bar H&= \Psi^*\# H\# \Psi \\
&= \frac{|{\tilde p}|^2}{2} {\rm I} + \Psi^*v\Psi + \frac{1}{4M} \nabla\Psi^*\cdot \nabla\Psi\\
&=\Psi^*(\frac{|{\tilde p}|^2}{2} {\rm I} + v + \frac{1}{4M} \Psi\nabla\Psi^*\cdot \nabla\Psi\Psi^*)\Psi\, ,
\end{split}
\]
as verified in~\cite[Lemma 3.1]{KSS}.
Therefore, the aim is to choose the unitary matrix $\Psi$  so that it becomes an approximate solution to the nonlinear eigenvalue problem
\begin{equation}\label{eigen_nonlin}
\big(v + \frac{1}{4M} \Psi\nabla\Psi^*\cdot \nabla\Psi\Psi^*\big)\Psi= \Psi\bar\Lambda
\end{equation}
in the sense that
\begin{equation}\label{eigen_nonlin2}
\big(v + \frac{1}{4M} \Psi\nabla\Psi^*\cdot \nabla\Psi\Psi^*\big)\Psi= \Psi\bar\Lambda +\mathcal O(M^{-2})
\end{equation}
where $\bar\Lambda:\rset^{N+n}\to \mathbb C^{d\times d}$ is diagonal, that is
\[
\bar\Lambda_{jk}({\tilde x})=\left\{\begin{array}{cc}
0 & j\ne k\\
\bar\lambda_j({\tilde x}) & j=k
\end{array}\right. \, . 
\]
Such a solution $\Psi$ then implies
\begin{equation}\label{bar_H}
\bar H({\tilde x},{\tilde p})= \frac{|{\tilde p}|^2}{2} {\rm I} +\bar\Lambda({\tilde x}) + r_0({\tilde x})\, ,
\end{equation}
where the remainder satisfies $\|r_0\|_{L^{\infty}(R^{N+n})}=\mathcal O(M^{-2})$.
A solution, $\Psi$, to this nonlinear eigenvalue problem  is an $\mathcal O(M^{-1})$ perturbation of the eigenvectors to $v({\tilde x})$
provided the eigenvalues do not cross and $M$ is sufficiently large. 
The work \cite[(3.18)]{KSS} shows that \eqref{eigen_nonlin2} has a solution $\Psi$, 
if $v$ is twice differentiable, the eigenvalues of $v$ are distinct 
and $M$ is sufficiently large.

The canonical ensemble is typically based on ${\rm trace}(\hat Ce^{-\hat{\bar H}/T})/ 
{\rm trace}(e^{-\hat{\bar H}/T})$. We will instead use the related
${\rm trace}(\hat C \widehat{e^{-\bar H/T}})/ 
{\rm trace}(\widehat{e^{- \bar H/T}})$.
%
If the density operators $e^{-\hat{\bar H}/T}$ and $\widehat{e^{-\bar H/T}}$ would differ only little
it would not matter which one we use as a reference. 
Since we do not know if this difference is small in the case of a large
number of particles, we may ask which density operator to use.  The density operator $\hat\rho_q=e^{-\hat{\bar H}/T}$
is a time-independent solution to the quantum Liouville-von Neumann equation
\[
\partial_t \hat\rho_t= iM^{1/2}[\hat\rho_t, \hat{\bar H}]
\]
while the classical Gibbs density $e^{-\bar H/T}$ is a time-independent solution to the classical Liouville  equation
\[
\partial_t \bar\rho_t= -\{\bar\rho_t, \bar H\}\, ,
\]
with the Poisson bracket  in the right hand side.
The corresponding density matrix symbol $\rho_q$ is not a time-independent solution to the classical Liouville equation, since \[0={\rm i}
M^{1/2}(\rho_q\#\bar H-\bar H\#\rho_q)\ne \{\rho_q,\bar H\}\, ,\]
and the classical Gibbs density
is not a time-independent solution to the quantum Liouville-von Neumann equation, since
${\rm i}M^{1/2}(e^{-\bar H/T}\#\bar H-\bar H\#e^{-\bar H/T})\ne \{ e^{-\bar H/T},\bar H\}=0$.
However, it is shown in \cite{KSS} that a solution to
 the quantum Liouville equation $\hat\rho_t$
 with initial data $\rho_0=e^{-\bar H/T}$ 
 generates only a small time dependent perturbation on observables up to time $t<M$,
 which motivates our use of $\widehat{e^{-\bar H/T}}$.

The following result for approximating non equilibrium quantum observables
by classical molecular dynamics observables is proved in \cite{KSS}.
\begin{theorem}\label{gibbs_corr_thm_analytic}
 Assume that $v$ satisfies \eqref{lam_cond}, the $d\times d$ matrices $\bA\equiv\bA_0$ and $\bB$ are diagonal, the $d\times d$ matrix valued Hamiltonian $ H$ has distinct eigenvalues, %
and that there is a constant $C$ such that
 \begin{equation*}\label{R77}
\begin{split}
\sum_{|\alpha|\le 2}\|\partial^\alpha_{\tilde x} \psi_k\|_{L^\infty(\rset^{N})} &\le C\, ,\quad k=1,\ldots,d\, ,\\
\max_i\sum_{|\alpha|\le 3}\|\partial^\alpha_{\tilde x} \partial_{{\tilde x}_i}\bar\lambda_j\|_{L^\infty(\rset^{N+n})} &\le C\, ,\\
\sum_{|\alpha|\le 3}
\|\partial_{\tilde z}^\alpha\bA_{jj}\|_{L^2(\rset^{2(N+n)})}  &\le C\, ,\\
\|\bB({\tilde z})e^{-\bar H({\tilde z})/T}\|_{L^2(\rset^{2(N+n)})} &\le C\, ,\\
\end{split}
\end{equation*} 
then there is a constant $C'$, depending on $C$, such that the canonical ensemble average satisfies
\begin{equation*}\label{G_corr_unif1}
\begin{split}
|\frac{{\rm trace}\big(\hat{ A}_\tau \hat\Psi{ ({\bB}e^{-{\bar H}/T})^{\widehat{}}} \, \hat\Psi^*\big)}{
{\rm trace}(\hat\Psi\widehat{e^{-{\bar H}/T}}\hat\Psi^*)}
-
  \sum_{j=1}^d \int_{\rset^{2(N+n)}}
\frac{ \bA_{jj}({\tilde z}^j_\tau ({\tilde z}_0))\bB_{jj}({\tilde z}_0)  e^{-\bar{H}_{jj}({\tilde z}_0)/T}}{\sum_{k=1}^d\int_{\rset^{2(N+n)}}e^{-\bar{H}_{kk}({\tilde z})/T} \Rd {\tilde z}} \Rd {\tilde z}_0|
\le \frac{C'}{M}\, ,
\end{split}
\end{equation*}
where ${\tilde z}^j_{\tau}=({\tilde x}_{\tau},{\tilde p}_{\tau})$ is the solution to the Hamiltonian system
\begin{equation}\label{HS}
\begin{split}
\dot {\tilde x}_{\tau} &= {\tilde p}_{\tau}\\
\dot {\tilde p}_{\tau} &= -\nabla\bar \lambda_j({\tilde x}_{\tau}), \quad {\tau}>0\, ,
\end{split}
\end{equation}
based on the Hamiltonian $\bar H_{jj}({\tilde x},{\tilde p})=|{\tilde p}|^2/2 + \bar\lambda_j({\tilde x})$,
with initial data $({\tilde x}_0,{\tilde p}_0)=\tilde z_0\in\rset^{2(N+n)}$.
\end{theorem}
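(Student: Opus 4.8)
The plan is to reduce the matrix‑valued, oscillatory trace to a finite collection of \emph{scalar} semiclassical problems, one per electronic level, and then to combine a quantitative Egorov theorem with the semiclassical trace formula; the full bookkeeping is the content of \cite{KSS}, and I only describe the structure. \textbf{Step 1: pass to the diagonal representation.} Using the (pointwise) unitary matrix $\Psi=\Psi(\tilde x)$ solving the approximate nonlinear eigenvalue problem \eqref{eigen_nonlin2}, so that $\bar H=\Psi^\ast\#H\#\Psi$ has the almost diagonal form \eqref{bar_H} with $\|r_0\|_{L^\infty}=\mathcal O(M^{-2})$, I would write $\hat A_\tau=\hat\Psi\hat{\bar A}_\tau\hat\Psi^\ast$ with $\bar A_\tau=\Psi^\ast\#A_\tau\#\Psi$ and invoke cyclicity of the trace. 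Because $\Psi$ depends on $\tilde x$ only, the order‑$M^{-1/2}$ (Poisson bracket) term in the composition \eqref{eq:composition} vanishes, so $\Psi^\ast\#\Psi=\mathrm I+\mathcal O(M^{-1})$ and $\Psi\#\Psi^\ast=\mathrm I+\mathcal O(M^{-1})$; hence $\hat\Psi$ is unitary up to $\mathcal O(M^{-1})$ and
\[
\frac{{\rm trace}\big(\hat A_\tau\,\hat\Psi(\widehat{\bB e^{-\bar H/T}})\,\hat\Psi^\ast\big)}{{\rm trace}\big(\hat\Psi\,\widehat{e^{-\bar H/T}}\,\hat\Psi^\ast\big)}
=\frac{{\rm trace}\big(\hat{\bar A}_\tau\,\widehat{\bB e^{-\bar H/T}}\big)}{{\rm trace}\big(\widehat{e^{-\bar H/T}}\big)}+\mathcal O(M^{-1})\, ,
\]
the remainder controlled by the $L^2$‑bounds on $\bA_{jj}$ and $\bB e^{-\bar H/T}$.

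\textbf{Step 2: adiabatic decoupling (the main obstacle).} Since $\widehat{\bB e^{-\bar H/T}}$ is diagonal and $\hat{\bar H}$ is diagonal up to $\mathcal O(M^{-2})$, it remains to show that the off‑diagonal entries of $\hat{\bar A}_\tau$ contribute only $\mathcal O(M^{-1})$ to the trace on any bounded $\tau$‑interval. From \eqref{vNH} in the barred variables, $\partial_\tau\hat{\bar A}_{jk}={\rm i}M^{1/2}\big(\hat{\bar H}_{jj}\hat{\bar A}_{jk}-\hat{\bar A}_{jk}\hat{\bar H}_{kk}\big)$ plus coupling terms of order $M^{1/2}\cdot(r_0+\text{initial off-diagonal part})$; the assumed distinctness of the eigenvalues of $v$, hence a uniform gap $|\bar\lambda_j-\bar\lambda_k|\ge c>0$, lets one conjugate away the diagonal generators and integrate by parts in $\tau$, each integration producing a factor $M^{-1/2}$ from the rapidly oscillating phase built from $\bar\lambda_j-\bar\lambda_k$. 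Making this non‑stationary‑phase estimate quantitative and uniform is the delicate point, and it is precisely where the uniform bound on the third $\tilde x$‑derivatives of $\bar\lambda_j$ enters; the large prefactor $M^{1/2}$ in the Heisenberg equation \eqref{vNH} is what makes the step nontrivial.

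\textbf{Step 3: scalar Egorov on each level.} The surviving diagonal block evolves by the scalar Weyl operator $\hat{\bar H}_{jj}=\widehat{|\tilde p|^2/2+\bar\lambda_j}$, and a quantitative Egorov theorem with semiclassical parameter $\hbar=M^{-1/2}$ gives, on bounded time intervals,
\[
e^{{\rm i}\tau M^{1/2}\hat{\bar H}_{jj}}\,\widehat{\bA_{jj}}\,e^{-{\rm i}\tau M^{1/2}\hat{\bar H}_{jj}}=\widehat{\bA_{jj}\circ\Phi^j_\tau}+\mathcal O(M^{-1})\, ,
\]
where $\Phi^j_\tau$ is the Hamiltonian flow \eqref{HS} of $\bar H_{jj}=|\tilde p|^2/2+\bar\lambda_j$; the remainder is $\mathcal O(M^{-1})$ rather than $\mathcal O(M^{-1/2})$ because the Moyal bracket agrees with the Poisson bracket through $\mathcal O(\hbar^2)$, and the $W^{3,\infty}$‑control of $\bar\lambda_j$ together with the $H^3$‑type control of $\bA_{jj}$ bounds it uniformly in $\tau$; the $\mathcal O(M^{-2})$ difference between $\bar H_{jj}$ and $|\tilde p|^2/2+\bar\lambda_j$ in the Gibbs weight is absorbed into the error.

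\textbf{Step 4: trace as a phase‑space integral and conclusion.} Finally I would expand products of Weyl operators with \eqref{eq:composition} and use ${\rm trace}(\hat F\hat G)=\big(M^{1/2}/2\pi\big)^{N+n}\int {\rm tr}_{\mathbb C^d}\big(F(\tilde z)G(\tilde z)\big)\,\Rd\tilde z+\mathcal O(M^{-1})$, the order‑$M^{-1/2}$ term being a phase‑space divergence that integrates to zero, to obtain for each $j$
\[
{\rm trace}\big(\widehat{\bA_{jj}\circ\Phi^j_\tau}\;\widehat{(\bB e^{-\bar H/T})_{jj}}\big)=\Big(\tfrac{M^{1/2}}{2\pi}\Big)^{N+n}\!\!\int (\bA_{jj}\circ\Phi^j_\tau)(\tilde z_0)\,\bB_{jj}(\tilde z_0)\,e^{-\bar H_{jj}(\tilde z_0)/T}\,\Rd\tilde z_0+\mathcal O(M^{-1})\, ,
\]
and ${\rm trace}\big(\widehat{e^{-\bar H/T}}\big)=\big(M^{1/2}/2\pi\big)^{N+n}\sum_k\int e^{-\bar H_{kk}(\tilde z)/T}\,\Rd\tilde z$. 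The prefactors $\big(M^{1/2}/2\pi\big)^{N+n}$ cancel in the ratio; identifying $\tilde z^j_\tau(\tilde z_0)=\Phi^j_\tau(\tilde z_0)$ and summing over $j$ yields the stated formula with total error $\mathcal O(1/M)$, all traces being finite by the assumed $L^2$‑bounds. The constant $C'$ is tracked through Steps 1--4 and depends only on $C$.
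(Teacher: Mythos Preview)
The paper does not contain its own proof of this theorem: immediately before the statement it writes ``The following result \ldots\ is proved in \cite{KSS}'', and no argument is given after the statement. The material preceding the theorem in Section~\ref{sec2} (the diagonalization $\bar H=\Psi^\ast\#H\#\Psi$, the approximate nonlinear eigenvalue problem \eqref{eigen_nonlin2}, and the almost diagonal form \eqref{bar_H}) is setup for the cited result rather than a proof.

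Your sketch is consistent with that setup and with the structure of the argument in \cite{KSS}, and you yourself flag that ``the full bookkeeping is the content of \cite{KSS}''. In particular, your Step~1 matches the diagonalization the paper actually carries out before stating the theorem; your Step~2 matches the paper's observation that a diagonal $\hat{\bar H}$ keeps $\hat{\bar A}$ diagonal, together with the remark that \eqref{bar_H} holds up to $\mathcal O(M^{-2})$; and your Steps~3--4 (Egorov plus the semiclassical trace formula) are the standard ingredients used in \cite{KSS}. So there is nothing to compare against in this paper beyond noting that your outline agrees with the preparatory discussion and the cited reference.

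One small caution on Step~2: the paper's preparatory discussion shows that \emph{exact} diagonality of $\hat{\bar H}$ would preserve diagonality of $\hat{\bar A}_\tau$, but the actual control of the off-diagonal growth driven by the $\mathcal O(M^{-2})$ remainder $r_0$ over times of order one, with the large prefactor $M^{1/2}$ in \eqref{vNH}, is not addressed here and is precisely what is deferred to \cite{KSS}; your description of this as ``the main obstacle'' is accurate.
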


\subsection{Langevin dynamics derived from quantum mechanics}
Assume that the potential $V(X)$ has the eigenvalues $\lambda_j(X)$ and eigenvectors $\psi_j(X), \ j=1,\ldots, d$. The eigenvalues of the
potential $v(\tilde x(X,x))=V(X)+V_b(x,X)$ 
will to leading order in $\|V_b(\cdot,X)\|$ be given by
$\lambda_j(X)+ \psi_j^*V_b(x,X)\psi_j$.
We assume now that all coupling potentials $\psi_j^*V_b(x,X)\psi_j$
satisfy the weak coupling assumptions
\eqref{2} and \eqref{3}, namely \[\min_x \psi_j^*(X)V_b(x,X)\psi_j(X)=\psi_j^*(X)V_b(a_j(X),X)\psi_j(X)=0\]
so that
\[
\bar\lambda_j\big(\tilde x(x,X)\big)=\lambda_j(X)+\langle x-a_j(X),\bar V_j''\big(x-a_j(X)\big)\rangle
\]
where $\bar V''_j=\psi_j^* {V_b}_{xx}''\psi_j$ is a constant matrix as in \eqref{3} and each coupling $\psi_j^*V_b\psi_j$ for $j=1,\ldots, d$ yields one equilibrium position $a_j(X)$ and one coupling matrix $\bar V_j''$. 
The eigenvalues of the Hamiltonian symbol
\[
(\frac{M}{2} P\cdot M_s^{-1}P + \frac{M}{2} p\cdot M_b^{-1}p)\, {\rm I} + V(X)+V_b(x,X)
\]
are then
\[
\begin{split}
\bar H_{jj}(X,P,x,p) &=\frac{M}{2} P\cdot M_s^{-1}P + \frac{M}{2} p\cdot M_b^{-1}p
+\bar\lambda_j\big(\tilde x(x,X)\big)\\
&=\frac{M}{2} P\cdot M_s^{-1}P + \frac{M}{2} p\cdot M_b^{-1}p
+\lambda_j(X)+\langle x-a_j(X),\bar V_j''\big(x-a_j(X)\big)\rangle\, .
\end{split}
\]
Let $m$ 
be the mass ratio for a reference heat bath nuclei to a reference system nuclei.
Theorems \ref{thm1} and \ref{thm2} show the classical dynamics provided by the
 Hamiltonians $\bar H_{jj}$ are accurately approximated by the Langevin dynamics
\begin{equation}\label{qm_langevin}
\begin{split}
    {\rm d} X_t&= M M_s^{-1} P_t{\rm d}t\, ,\\
     {\rm d} P_t&= -\nabla\lambda_j(X_t) {\rm d}t - m^{1/2}\kappa^j M_s^{-1}P_t{\rm d}t + \sqrt{2m^{1/2}\kappa^j T}{\rm d}W_t\, ,\\
\end{split}
\end{equation}
where $\kappa^j_{\ell\ell'} =\frac{1}{4\pi c_j^3} \langle \bar V_j''\partial_{X_\ell}a_j,\bar V_j''\partial_{X_\ell'}a_j\rangle$, for $m\ll 1$. The next step is to relate this approximation by Langevin dynamics also to quantum observables in the canonical ensemble. In particular we need to obtain the Gibbs distribution of the heat bath particles from the quantum observables in the canonical ensemble.

Define the probability, $q_j$, to be in electron state $j$  as
\begin{equation}\label{q_def}
q_j:=\frac{\int_{\mathbb R^{2(N+n)}}e^{-\bar H_{jj}({\tilde z})/T} \Rd {\tilde z}}{
\sum_{k=1}^d\int_{\mathbb R^{2(N+n)}}e^{-\bar H_{kk}({\tilde z}')/T} \Rd {\tilde z}'}\, ,\quad j=1,\ldots, d\, ,
\end{equation}
then  the molecular dynamics observable becomes a sum of observables in the different electron states
with initial Gibbs distribution, namely
\[
\begin{split}
  &\sum_{j=1}^d \int_{\rset^{2(N+n)}}
\frac{ \bA_{jj}({\tilde z}^j_\tau ({\tilde z}_0))\bB_{jj}({\tilde z}_0)  e^{-\bar{H}_{jj}({\tilde z}_0)/T}}{\sum_{k=1}^d\int_{\rset^{2(N+n)}}e^{-\bar{H}_{kk}({\tilde z})/T} \Rd {\tilde z}} \Rd {\tilde z}_0\\
&=
\sum_{j=1}^d
q_j \int_{\rset^{2(N+n)}}
\frac{ \bA_{jj}({\tilde z}^j_\tau ({\tilde z}_0))\bB_{jj}({\tilde z}_0)  e^{-\bar{H}_{jj}({\tilde z}_0)/T}}{\int_{\rset^{2(N+n)}}e^{-\bar{H}_{jj}({\tilde z})/T} \Rd {\tilde z}} \Rd {\tilde z}_0\, .
\end{split}
\]
For instance if only the ground state matters we have $q_1=1$ and $q_k=0, \ k=2,3,4,\ldots, d$.

We simplify by letting the system nuclei have the same mass $M$ and the heat bath nuclei the same mass $mM$. Then the diagonalized Hamiltonian can by  \eqref{star} be written as
\[
\bar H_{jj}({\tilde z})=\frac{|P|^2}{2} + \lambda_j(X) + \frac{|p|^2}{2m} +\langle(x-a_j(X),\bar V_j''(x-a_j(X)\rangle\, ,
\]
where ${\tilde z}=(X,P,x,p)\in\rset^N\times\rset^N\times\rset^n\times\rset^n$.
Assume  that the observables $\bar A_{jj}$ and $\bar B_{jj}$  
only depend on the system coordinates $X$ and $P$,
then the classical molecular dynamics approximation of the canonical quantum observables in Theorem \ref{gibbs_corr_thm_analytic} satisfies 
\[
\begin{split}
&\int_{\rset^{2(N+n)}} q_j\bA_{jj}\big(X^j_\tau ({\tilde z}_0),P^j_\tau ({\tilde z}_0)\big)\bB_{jj}(X_0,P_0) 
\frac{e^{-\bar{H}_{jj}({\tilde z}_0)/T}}{\int_{\rset^{2(N+n)}}e^{-\bar{H}_{jj}({\tilde z})/T} \Rd {\tilde z}} \Rd {\tilde z}_0\\
&=\int_{\rset^{2n}}
\int_{\rset^{2N}} q_j\bA_{jj}\big(X^j_\tau ({\tilde z}_0),P^j_\tau ({\tilde z}_0)\big)\bB_{jj}(X_0,P_0) \times\\
&\qquad\times\frac{e^{-(\frac{|P_0|^2}{2} + \lambda_j(X_0))/T } }{  \int_{\rset^{2N}}
e^{-(\frac{|P|^2}{2} + \lambda_j(X))/T} \Rd X\Rd P}
\frac{e^{-(\frac{|p|^2}{2m} + \langle x-a_j(X_0),\bar V_j''(x-a_j(X_0))/T}}{\int_{\rset^{2n}}
e^{-(\frac{|p|^2}{2m} + \langle x-a_j(X_0),\bar V_j''(x-a_k(X_0))/T} \Rd x\Rd p} \Rd X_0\Rd P_0\Rd x\Rd p\\
&=\mathbb E[
\int_{\rset^{2N}}q_j \bA_{jj}\big(X^j_\tau ({\tilde z}_0),P^j_\tau ({\tilde z}_0)\big)\bB_{jj}(X_0,P_0) 
\frac{e^{-(\frac{|P_0|^2}{2} + \lambda_j(X_0))/T } }{\int_{\rset^{2N}}
e^{-(\frac{|P|^2}{2} + \lambda_j(X))/T} \Rd X\Rd P}
\Rd X_0\Rd P_0]\, ,\\
\end{split}
\]
where the expected value is with respect to the Gibbs measure 
\begin{equation}\label{gibbs_sampling}
\frac{e^{-(\frac{|p|^2}{2m} + \langle x-a_j(X_0),\bar V_j''(x-a_j(X_0))/T}}{\int_{\rset^{2n}}
e^{-(\frac{|p|^2}{2m} + \langle x-a_j(X_0),\bar V_j''(x-a_k(X_0))/T} \Rd x\Rd p}
\end{equation}
of the heat bath coordinates
conditioned on the initial system coordinates.
We note that the  Gibbs measure \eqref{gibbs_sampling}
is the invariant measure used to sample the initial heat bath configurations in theorems \ref{thm1} and \ref{thm2}. Therefore the combintion of 
theorems \ref{thm1}, \ref{thm2}  and \ref{gibbs_corr_thm_analytic} show that 
canonical observables for a quantum system coupled to a heat bath
can be accurately approximated by
 Langevin dynamics, with the friction coefficient determined by \eqref{kap1} and \eqref{kap2}, including several electron surfaces $\lambda_j, \ j=1,\ldots, d$.
\begin{theorem}\label{kombin}
Suppose that  
the assumptions in Theorems \ref{gibbs_corr_thm_analytic} and  (\ref{thm1} or \ref{thm2})  hold and the observables $\bar A_{jj}(\cdot)$ and $\bar B_{jj}(\cdot)$ depend only on the system coordinates $(X_0,P_0)$,
then
\[
\begin{split}
&\frac{{\rm trace}\big(\hat{ A}_\tau \hat\Psi{ \widehat{({\bB}e^{-{\bar H}/T}})} \hat\Psi^*\big)}{
{\rm trace}(\hat\Psi\widehat{e^{-{\bar H}/T}}\hat\Psi^*)}
=\frac{
{\rm trace}
\Big(\big(\bar A _\tau(X,P)\big)^{\widehat{}}\, 
{ \big({\bar B (X,P)}e^{-{\bar H}/T}}\big)^{\widehat{}}\, 
\Big)
}{
{\rm trace}(\widehat{e^{-{\bar H}/T}})}\\
&=  \sum_{j=1}^d 
\mathbb E[ q_j
\int_{\rset^{2N}} \bA_{jj}\big( X^j_\tau (X_0,P_0), P^j_\tau (X_0,P_0)\big)\bB_{jj}(X_0,P_0) 
\times\\&\qquad\times
\frac{e^{-(\frac{|P_0|^2}{2} + \lambda_j(X_0))/T } }{\int_{\rset^{2N}}
e^{-(\frac{|P|^2}{2} + \lambda_j(X))/T} \Rd X\Rd P}
\Rd X_0\Rd P_0]\\
%
&\quad +\mathcal O(\bar M^{-1}+\bar\chi)\, ,
\end{split}
\]
where 
\[
\begin{split}
&\mbox{$(\bar\chi,\bar m,\bar M):=(m\log m^{-1},m^{1/2}, mM)$  for $m\to 0^+$ in Theorem \ref{thm1} or}\\ 
&\mbox{$(\bar\chi,\bar m,\bar M):= (\chi^{2\delta}\log\chi^{-1},\chi^{2\delta-1/2}m^{1/2},M)\big)$  for $\chi\to 0^+$ in Theorem \ref{thm2},}
\end{split}
\]
and $( X^j_t, P^j_t)$, for $t>0$, is the solution to the Langevin equation
\begin{equation}\label{thm_lang}
\begin{split}
{\rm d} X^j_t &=  P^j_t{\rm d}t\\
{\rm d} P_t^j &= -\nabla \lambda_j\big( X^j_t\big){\rm d}t - \bar m\kappa^j  P^j_t{\rm d}t  + (2 \bar m\kappa^j T)^{1/2}\ {\rm d} W_t\, ,\\
\end{split}
\end{equation}
with initial data $( X^j_0, P^j_0)=(X_0,P_0)$, 
\[
\kappa^j_{\ell\ell'} =\frac{1}{4\pi c_j^3} \langle \bar V_j''\partial_{X_\ell}a_j,\bar V_j''\partial_{X_{\ell'}}a_j\rangle
\]
and $c_j$ set by the density of heat bath states at zero frequency in \eqref{dof}.
The probability $q_j$ to be in electron state $j$ is determined by \eqref{q_def} and
the expected value is with respect to the  Wiener process $W$, with $N$ independent components.
%
\end{theorem}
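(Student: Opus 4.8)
The plan is to prove Theorem~\ref{kombin} by chaining the three approximation results already at our disposal: Theorem~\ref{gibbs_corr_thm_analytic} reduces the canonical quantum trace to a finite sum of classical molecular-dynamics averages over the diagonalized Hamiltonians $\bar H_{jj}$, and then Theorems~\ref{thm1} and~\ref{thm2} replace, surface by surface, the coupled system--heat-bath Hamiltonian flow by the pure Langevin flow \eqref{thm_lang}. The first step contributes the error $\mathcal O(\bar M^{-1})$ and the second the error $\mathcal O(\bar\chi)$, so the stated estimate follows by adding them.

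First I would verify that the weak-coupling hypotheses \eqref{2}, \eqref{3}, imposed here on every coupling potential $\psi_j^*V_b\psi_j$, put each diagonal block of the Hamiltonian symbol into the Zwanzig form $\bar H_{jj}(X,P,x,p)=|P|^2/2+\lambda_j(X)+|p|^2/(2m)+\langle x-a_j(X),\bar V_j''(x-a_j(X))\rangle$ of \eqref{star}, with $\bar V_j''=\psi_j^*{V_b}_{xx}''\psi_j$ constant and positive definite. One then checks that the regularity assumptions of Theorem~\ref{gibbs_corr_thm_analytic} on $\psi_k$, on $\bar\lambda_j$, and on the diagonal symbols $\bar A_{jj}$, $\bar B_{jj}$ are implied by the standing hypotheses (smoothness of $V$ and $V_b$, distinctness of eigenvalues, and the condition \eqref{lam_cond} guaranteeing a discrete spectrum); applying that theorem with its reference nuclei--electron mass ratio taken equal to the smallest such ratio $\bar M$ (namely $mM$ when the heat bath is light and $M$ when it is stiff), and using unitarity of $\hat\Psi$, cyclicity of the trace, and the near-diagonality \eqref{bar_H}, yields the first equality in the statement together with
\[
\frac{{\rm trace}\big((\bar A_\tau)^{\widehat{}}\,(\bar B e^{-\bar H/T})^{\widehat{}}\big)}{{\rm trace}(\widehat{e^{-\bar H/T}})}=\sum_{j=1}^d\int_{\rset^{2(N+n)}}\frac{\bar A_{jj}(\tilde z_\tau^j(\tilde z_0))\,\bar B_{jj}(\tilde z_0)\,e^{-\bar H_{jj}(\tilde z_0)/T}}{\sum_{k=1}^d\int_{\rset^{2(N+n)}}e^{-\bar H_{kk}/T}}\,{\rm d}\tilde z_0+\mathcal O(\bar M^{-1})\, ,
\]
where $\tilde z_\tau^j$ is the flow of $\bar H_{jj}$, i.e. the coupled dynamics \eqref{1.2} and \eqref{1.5} with system potential $\lambda_j$ and heat-bath matrix $\bar V_j''$.

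Next I would marginalize out the heat bath. Since $\bar A_{jj}$ and $\bar B_{jj}$ depend only on $(X,P)$ and $e^{-\bar H_{jj}/T}$ factorizes, the $(x,p)$-integration is precisely the conditional heat-bath Gibbs measure \eqref{gibbs_sampling}, which is exactly the initial distribution from which the bath variables are sampled in Theorems~\ref{thm1} and~\ref{thm2} (cf. \eqref{phi0}); pulling out $q_j$ from \eqref{q_def}, the $j$-th term becomes $q_j\,\mathbb E\big[\int_{\rset^{2N}}\bar A_{jj}(X_\tau^j,P_\tau^j)\bar B_{jj}(X_0,P_0)\,\tfrac{e^{-(|P_0|^2/2+\lambda_j(X_0))/T}}{\int e^{-(|P|^2/2+\lambda_j(X))/T}}\,{\rm d}X_0{\rm d}P_0\big]$, the outer expectation being over the Gibbs-distributed bath. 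For each fixed $(X_0,P_0)$ I would apply Theorem~\ref{thm1} (or Theorem~\ref{thm2}) with observable $g=\bar A_{jj}$ and final time $t_*=\tau$ to replace $\mathbb E[\bar A_{jj}(X_\tau^j,P_\tau^j)\mid X_0,P_0]$ by the corresponding expectation over the Langevin flow \eqref{thm_lang}, up to $\mathcal O(\bar\chi)$; multiplying by the bounded weight $\bar B_{jj}(X_0,P_0)$ and integrating against the system Gibbs density (which has finite mass by \eqref{lam_cond}) preserves this bound. Summing over $j=1,\dots,d$ and collecting the two errors gives the claimed $\mathcal O(\bar M^{-1}+\bar\chi)$, with $(\bar\chi,\bar m,\bar M)$ as tabulated in the two regimes.

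The main obstacle is the bookkeeping at the seam between the two black boxes. Theorem~\ref{thm1} is stated for a single electron surface, for a smooth bounded observable, and for equal deterministic initial data, whereas here the initial system coordinates are themselves Gibbs distributed and $\bar A_{jj}$, $\bar B_{jj}$ lie in the Schwartz class used by Theorem~\ref{gibbs_corr_thm_analytic}; one must therefore confirm that the $\mathcal O(\bar\chi)$ error of Theorem~\ref{thm1} is uniform in (or at least integrable against the Gibbs weight in) the initial data $(X_0,P_0)$, which in turn requires that the a priori moment bounds of Lemma~\ref{expected_values_lemma} hold with constants uniform over the relevant initial data. A secondary point is to check that the weak-coupling decomposition of $\bar\lambda_j$ is compatible with the distinct-eigenvalue and smoothness hypotheses of Theorem~\ref{gibbs_corr_thm_analytic} — in particular that the perturbed surfaces $\lambda_j(X)+\langle x-a_j(X),\bar V_j''(x-a_j(X))\rangle$ stay separated uniformly in $x$, which is where positive definiteness of $\bar V_j''$ and the localization hypotheses enter.
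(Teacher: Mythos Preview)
Your proposal is correct and follows essentially the same route as the paper: apply Theorem~\ref{gibbs_corr_thm_analytic} to pass from the quantum trace to the sum of classical molecular-dynamics averages over the surfaces $\bar H_{jj}$, factor the resulting Gibbs density into the system part and the conditional heat-bath part \eqref{gibbs_sampling}, recognize the latter as precisely the initial bath distribution assumed in Theorems~\ref{thm1} and~\ref{thm2}, and then invoke those theorems surface by surface to replace the coupled flow by the Langevin flow \eqref{thm_lang}. The paper's argument is in fact more terse than yours---it presents the factorization computation and then simply states that the combination of the three theorems yields the result---so the uniformity and eigenvalue-separation caveats you raise at the end are genuine technical points that the paper does not address explicitly either.
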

\section{Numerical example}\label{sec_num}
We consider a single heavy particle in $\mathbb{R}^3$,
the nearest neighbour lattice interaction $\bar V''$, cf.~\eqref{lattice_def}, $\lambda(X) = |X|^2/2$, $c=1$
and 
\[
\beta_\ell(\boldsymbol \omega) = \boldsymbol{1}_{|\boldsymbol \omega | \le 1} \prod_{i=1}^3 \pi^{1/2}(4-\omega_i^2)^{1/4}\quad \text{for all} \quad \ell \in \{1,2,3\}\, , 
\]
implying by~\eqref{eq:fOmegaDef} that
\[
f(\boldsymbol \omega, \ell, \ell') = \boldsymbol{1}_{|\boldsymbol \omega | \le 1}  \quad \text{for all } \quad \ell,\ell' \in \{1,2,3\}\, .
\]
Since this contradicts that $\beta$ is twice differentiable, an assumption that was used in the proof of 
Lemma~\ref{fric_lem}, let us demonstrate that said lemma also applies in the current setting 
with jump-discontinuous $\beta$:
\begin{equation}\label{f_kappa2}
\begin{split}
\lim_{m\to 0^+}&\lim_{n\to\infty}
m^{-1/2}\int_0^t \langle \bar V'' \cos\big(\frac{(t-s)\sqrt{\bar V''}}{\sqrt{m}}\big) \dot a(X_s), \partial_{X^\ell} a\rangle {\rm d} s\\
&= \sum_{\ell'} \lim_{m\to 0^+} m^{-1/2}\int_0^t  \int_{\mathbb R^3} \cos\left(\frac{(t-s)\omega}{\sqrt{m}}\right) f(\boldsymbol{\omega},\ell,\ell') 
\frac{{\rm d}\boldsymbol{\omega}}{\omega^2}{\rm d}\tau \dot X_s^{\ell'} {\rm d} s\\
&= \sum_{\ell'} \lim_{m\to 0^+} m^{-1/2} \int_0^t  \int_{0}^1 \int_0^{2\pi} \int_0^\pi \cos\left(\frac{(t-s)\omega}{\sqrt{m}}\right) \sin(\theta) {\rm d}\theta {\rm d}\alpha {\rm d}\omega  \dot X_s^{\ell'} {\rm d} s\\
&=  4 \pi \sum_{\ell'} \lim_{m\to 0^+} \int_0^t  \frac{\sin\left( (t-s)/\sqrt{m} \right)}{(t-s)} \, \dot X_s^{\ell'} {\rm d} s\\
&=  4 \pi \sum_{\ell'} \lim_{m\to 0^+} \int_0^{tm^{-1/2}}  \frac{\sin\left( \tau \right)}{\tau} \, \dot X_{t-m^{1/2}\tau}^{\ell'} {\rm d} \tau \\
&= 2 \pi^2 \sum_{\ell'} \dot X_t^{\ell'}\, ,
\end{split}
\end{equation}
where the last equality 
follows from $\int_0^\infty \frac{\sin\tau}{\tau}{\rm d}\tau=\pi/2$ and $\dot X_t$ being continuous with respect to $t$.
We conclude that 
\begin{equation}\label{kappaNumEx}
\kappa = 2\pi^2 \begin{bmatrix}1 \\ 1 \\1 \end{bmatrix} \begin{bmatrix}1 & 1 &1 \end{bmatrix} \, .
\end{equation}

To prove the convergence~\eqref{convRateH}, observe first that
\[
K_{\infty}^{\ell \ell'}(\tau)= \int_{\mathbb{R}^3} \cos(\tau \omega) f(\boldsymbol \omega) \frac{{\rm d} \omega}{\omega^2}
= 4\pi \frac{\sin(\tau)}{\tau}. 
\]
Introducing the mesh $\tau_k = 2k \pi$ and  
\[
\dot Y^\ell_{t-\tau \sqrt{m}} := \sum_{k=0}^{\lfloor t/(\sqrt{m}2\pi) \rfloor} \boldsymbol{1}_{\tau \in [\tau_k, \tau_{k+1})} \dot X^\ell_{t- \tau_k\sqrt{m}}\, ,
\]
it follows that for any $\tau \in [\tau_k, \tau_{k+1})$,
there exists a random $s \in [\tau_k, \tau_{k+1})$ such that  
\[
|\dot X_{t-\tau \sqrt{m}} - \dot Y_{t-\tau \sqrt{m}}| = |\ddot{X}_{t-s\sqrt{m}}| 2\pi \sqrt{m}.
\]
By the splitting~\eqref{4}, 
\[
\begin{split}
&|\mathbb E[h(X_t,P_t) \int_{\tau_*}^{t/\sqrt{m}}\langle \bar V'' \cos(\tau \bar V''^{1/2}) \dot a(X_{t-\sqrt{m}\tau}),\partial_{X^\ell} a\rangle {\rm d} \tau]|\\
& \le 
|\mathbb E[h(X_t,P_t) \int_{\tau_*}^{t/\sqrt{m}} \sum_{\ell'} K_\infty^{\ell \ell'}(\tau) ( \dot{X}^{\ell'}_{t -\tau\sqrt{m}} - \dot{Y}^{\ell'}_{t -\tau\sqrt{m}}) {\rm d} \tau]|\\
&\qquad + |\mathbb E[h(X_t,P_t) \int_{\tau_*}^{t/\sqrt{m}} \sum_{\ell'} K_\infty^{\ell \ell'}(\tau) \dot{Y}^{\ell'}_{t -\tau\sqrt{m}} {\rm d} \tau]|\\
& \le 
 C \sum_{k=\lfloor \tau_*/\pi \rfloor}^{\lfloor t/(\sqrt{m}\pi) \rfloor} \left(
 \sqrt{m}  \int_{\tau_k}^{\tau_{k+1}} \frac{|\sin(\tau)|}{\tau}  \, {\rm d} \tau 
 + \left| \int_{\tau_k}^{\tau_{k+1}} \frac{\sin(\tau)}{\tau}  \, {\rm d} \tau \right| \right)\\
 &\le C(\sqrt{m}\log m^{-1} + \tau_*^{-1})
\end{split}
\]
where we used $\int_{\tau_k}^{\tau_{k+1}} \frac{\sin(\tau)}{\tau}  \, {\rm d} \tau=-
\int_{2k\pi}^{2(k+1)\pi} \frac{\cos(\tau)}{\tau^2}  \, {\rm d} \tau$,
and 
\[
\begin{split}
&\left| \mathbb E[h(X_t,P_t)\int_0^{\tau_*} 
\sum_{\ell'} K_{\infty}^{\ell \ell'}(\tau) (\dot X_{t - \sqrt{m}\tau }^{\ell'}-\dot X_{t }^{\ell'}) \, {\rm d}\tau]\right|\\
&\le 
\left| \mathbb E[h(X_t,P_t)\int_0^{\tau_*} 
\sum_{\ell'} K_{\infty}^{\ell \ell'}(\tau) (\dot X_{t - \tau m^{1/2}}^{\ell'} - \dot Y_{t - \tau m^{1/2}}^{\ell'}) \, {\rm d}\tau]\right| \\
&\qquad + \left| \mathbb E[h(X_t,P_t)\int_0^{\tau_*} 
\sum_{\ell'} K_{\infty}^{\ell \ell'}(\tau) (\dot Y_{t - \tau m^{1/2}}^{\ell'} - \dot X_{t}^{\ell'}) \, {\rm d}\tau]\right|\\
&\le C  \sqrt{m} \sum_{k=1}^{\lceil \tau_*/\pi \rceil} k^{-1} \\
&\le C\sqrt{m} \log \tau_*\, , 
\end{split}
\] 
and~\eqref{convRateH} follows by the same reasoning as in the proof of Lemma~\ref{fric_lem}.

\subsection{Dynamical systems}
For a given mass ratio $m$, the generalized Langevin equation of the heat bath dynamics takes the form
\[
\begin{split}
\dot{X}(t) &= P(t)\\
\dot{P}(t) &= -\nabla \lambda(X(t)) - \int_0^t K_{\infty}\left(\frac{t-s}{\sqrt{m}} \right) P(s) \, {\rm d}s + \zeta(t)\, ,
\end{split}
\]
where $\zeta(t)$ denotes a mean-zero Gaussian process with $\zeta^1 = \zeta^2 = \zeta^3$ and $\mathbb{E} [ \zeta^1(s) \zeta^1(t)] = T K_{\infty}^{11}( (t-s)/\sqrt{m})$, cf. \eqref{fluc_diss} and \eqref{cov_k}.
The associated Langevin dynamics is
\[
\begin{split}
\dot{X}_L(t) &= P_L(t)\\
\dot{P}_L(t) &= -\nabla \lambda(X_L(t)) - m^{1/2} \kappa P_L(t)  + (2 m^{1/2} \kappa T)^{1/2} \, \dot{W}(t)
\end{split}
\]
with $\kappa$ given by~\eqref{kappaNumEx}. We will compare the dynamical systems numerically 
for the initial data $X_L(0)= X(0) = \xi_X (1,1,1)$ and $P_L(0)= P(0)= \xi_P (1,1,1)$,
where $\xi_X$ and $\xi_P$ are independent identically distributed standard Gaussians that are sampled pathwise.
Due to the initial data and $\nabla \lambda(X) =X$, it holds that $X(t), P(t) \in \text{Span}( (1,1,1))$ for all $t\ge 0$. For this particular example, it therefore suffices to
study the reduced dynamics $(X^1,P^1)$ and $(X_L^1,P_L^1)$ rather than the 
respective 6 dimensional full systems. 
The respective reduced dynamics are equal in distribution to 
\[
\begin{split}
\dot{X}^1(t) &= P^1(t)\\
\dot{P}^1(t) &= -X^1(t) - 12 \pi \sqrt{m} \int_0^t \frac{\sin( (t-s)/\sqrt{m} )}{t-s} P^1(s) \, {\rm d}s + \zeta^1(t),
\end{split}
\]
and 
\begin{equation}\label{eq:exactLanDyn1D}
\begin{split}
\dot{X}_L^1(t) &= P_L^1(t)\\
\dot{P}_L^1(t) &= - X_L^1(t) - 6 \pi^2 \sqrt{m}  P_L^1(t)  + 2 \pi m^{1/4} T^{1/2} \, \dot{W}^1(t)\, .
\end{split}
\end{equation}

\subsection{Numerical integration schemes}
Langevin dynamics (St\"ormer--Verlet/Ornstein--Uhlenbeck~\cite{muller2015}):
\[
\begin{split}
P_{L,n+1/2}^* &= \exp(-  6 \pi^2 \sqrt{m} \Delta t/2) P_{L,n}^1 +    T^{1/2} 
\sqrt{\frac{1-\exp(-12 \pi^2 \sqrt{m} \Delta t/2)}{3  }}\xi_{2n-1} \\
P_{L,n+1/2}^1  &= P_{L,n+1/2}^* - X_{L,n}^1 \frac{\Delta t}{2} \\
X_{L,n+1}^1 &= X_{L,n}^1 + P_{L,n+1/2}^1 \Delta t\\
P_{L,n+1}^*  &= P_{L,n+1/2}^1 - X_{L,n+1}^1 \frac{\Delta t}{2} \\
P_{L,n+1}^1 &= \exp(-  6 \pi^2 \sqrt{m} \Delta t/2) P_{L,n+1}^* +    T^{1/2} 
\sqrt{\frac{1-\exp(-12 \pi^2 \sqrt{m} \Delta t/2)}{3  }}\xi_{2n} \, ,
\end{split}
\]
where $\xi_n$ is a sequence of independent and identically distributed standard normals.
The scheme is motivated from the splitting method with symplectic integration 
of the Hamiltonian system
\[
\dot{X}^1_L(t) = P^1_L(t), \qquad  \dot{P}_L^1(t) = - X_L^1(t)
\]
and exact solution of the Ornstein--Uhlenbeck equation
\[
\dot{P}_L^1(t) = - 6 \pi^2 \sqrt{m}  P_L^1(t) +  2 \pi m^{1/4} T^{1/2} \, \dot{W}^1(t),
\]
cf.~\cite{muller2015}. 

For the heat bath dynamics we construct a splitting scheme which for a uniform 
mesh $t_{k} = k \Delta t$ computes the position at every 
timestep ($X^1(t_0), X^1(t_1),\ldots$) and the momentum at every 
half-timestep ($P^1(t_0),P^1(t_{1/2}), P^1(t_1), \ldots$). 
The damping term's integral is approximated as follows: 
\begin{equation}\label{eq:dampingApprox}
\begin{split}
&\int_0^{t_{n+1}} \frac{\sin((t_{n+1}-s)/\sqrt{m})}{t_{n+1}-s} P^1(s) {\rm d}s\\
&\approx \sum_{k=0}^n {P^1}(t_{k}) \int_{t_k}^{t_{k+1/2}} \frac{\sin((t_{n+1}-s)/\sqrt{m})}{t_{n+1}-s} {\rm d} s\\
&\quad + \sum_{k=0}^n {P^1}(t_{k+1/2}) \int_{t_{k+1/2}}^{t_{k+1}} \frac{\sin((t_{n+1}-s)/\sqrt{m})}{t_{n+1}-s} {\rm d} s\\
&= \sum_{k=0}^n \Big[  P^1(t_{k}) (\Si(t_{n+1-k}/\sqrt{m}) - \Si(t_{n+1/2-k}/\sqrt{m})) \\
& \qquad \qquad + P^1(t_{k+1/2}) (\Si(t_{n+1/2-k}/\sqrt{m}) - \Si(t_{n-k}/\sqrt{m})) \Big]\, ,
\end{split}
\end{equation}
where the last equality follows from 
\[
\int_{a}^{b} \frac{\sin((t_{n+1}-s)/\sqrt{m})}{t_{n+1}-s} {\rm d} s
= \int_{(t_{n+1}-b)/\sqrt{m}}^{(t_{n+1}- a)/\sqrt{m}} \frac{\sin(s)}{s} \, {\rm d}s 
\]
and
\[
\Si(t) := \int_{0}^t \frac{\sin(s)}{s} \, {\rm ds}\, . 
\]
Introducing the notation 
\[
\Delta \Si_{\ell} := \Si(t_{\ell+1/2}/\sqrt{m}) - \Si(t_{\ell}/\sqrt{m})\, ,
\]
the approximation~\eqref{eq:dampingApprox} takes the compact form
\[
\int_0^{t_{n+1}} \frac{\sin((t_{n+1}-s)/\sqrt{m})}{t_{n+1}-s} P^1(s) {\rm d}s
\approx 
\sum_{k=0}^n P^1(t_{k}) \Delta \Si_{n+1/2-k} + P^1(t_{k+1/2}) \Delta\Si_{n-k} \, ,
\]
and, similarly, 
\[
\int_0^{t_{n+1/2}} \frac{\sin((t_{n+1/2}-s)/\sqrt{m})}{t_{n+1/2}-s} P^1(s) {\rm d}s
\approx 
\sum_{k=0}^n P^1(t_{k}) \Delta \Si_{n-k} +  \sum_{k=0}^{n-1} P^1(t_{k+1/2}) \Delta\Si_{n+1/2-k}\, .
\]
We make use of the above approximations of the damping term integral
in the following splitting scheme for the heat bath dynamics: 
\begin{equation}\label{eq:heatBathDynScheme}
\begin{split}
P^1_{n+1/2}  &= P^1_n - \frac{\Delta t}{2}\prt{ X^1_n - \zeta^1(t_n)
+ 12 \pi \sqrt{m}  \prt{\sum_{k=0}^n P^1_k \Delta \Si_{n-k} +  \sum_{k=0}^{n-1} P^1_{k+1/2} \Delta\Si_{n+1/2-k}}}\\
X^1_{n+1} &= X^1_{n} +P^1_{n+1/2} \Delta t\\
P^1_{n+1}  &= P^1_{n+1/2} - \frac{\Delta t}{2}\prt{X^1_{n+1} - \zeta^1(t_{n+1}) 
+ 12 \pi \sqrt{m} \sum_{k=0}^n P^1_k \Delta \Si_{n+1/2-k} + P^1_{k+1/2} \Delta\Si_{n-k}}\, .
\end{split}
\end{equation}
The mean-zero Gaussian vector $ \boldsymbol{\zeta} = (\zeta^1(t_0), \zeta^1(t_1), \ldots, \zeta^1(t_N))$ is sampled 
by computing the square root of the Toeplitz matrix with first row vector 
\[
(T K_\infty(0^+/\sqrt{m}), T K_\infty(t_1/\sqrt{m}), \ldots, T K_\infty(t_N/\sqrt{m}))
\]
and multiplying the square root matrix, say $\sqrt{K}$, with an $(N+1)-$vector $\xi$ of iid standard normals
components: $\boldsymbol{\zeta} = \sqrt{K} \xi$.
See~\cite{kroese2013} for further details on sampling of Gaussian 
processes,~\cite{brunger1984stochastic,skeel2002impulse,leimkuhler2016molecular,mattingly2002ergodicity,bou2010long,abdulle2015long,lelievre2016partial} for 
numerical methods for Langevin dynamics and~\cite{baczewski2013numerical, hall2016uncertainty,kupferman2004fractional} and~\cite[Chapter 8.7]{leimkuhler2016molecular} for an alternative numerical method for
generalized Langevin equations based on a truncated prony series approximation of the kernel $K_\infty$.

\subsection{Observations}
By setting the temperature to $T=3$, the stationary distribution of the exact Langevin dynamics~\eqref{eq:exactLanDyn1D} becomes $N(0,I)$ for any $m>0$, with $I$ denoting the identity matrix in $\mathbb{R}^2$. In order to reduce the computational 
challenges of long time numerical integration, we sample the initial data from this stationary, i.e., $(\xi_X, \xi_P) \sim N(0,I)$, as we assume this yields initial data for the numerical dynamics $(X^1_{L,0}, P^1_{L,0})$ and $(X^1_{0}, P^1_{0})$ that are very close to their 
respective stationary distributions. The numerical computations are performed 
using a timestep $\Delta t=0.005$, $N=4000$ integration steps
relating to the final time $t_N=20$, and $M=2\times10^{5}$ solution realizations
of the respective dynamics. 
Figures~\ref{fig:histX} and~\ref{fig:histP} show good correspondence between the final time marginal distributions of the heat bath dynamics $(X^1_N, P^1_N)$ and the Langevin dynamics $(X^1_{L,N},P^1_{L,N})$ over a range of $m$-values (all marginals being approximately $N(0,1)$-distributed).
Figures~\ref{fig:acX} and~\ref{fig:acP} show that the auto-correlation for both the position and the momentum 
of the heat bath dynamics converges to the corresponding ones for the Langevin dynamics as $m\to 0^+$.
The computations of the auto-correlation functions are made under the assumption that both kinds 
of dynamics are wide-sense stationary. For any $m>0$ this property holds for the Langevin dynamics, 
and in the limit $m\to 0^+$ it also holds for the heat bath dynamics. 
\begin{figure}[H]
\centering
\includegraphics[width=0.8\textwidth]{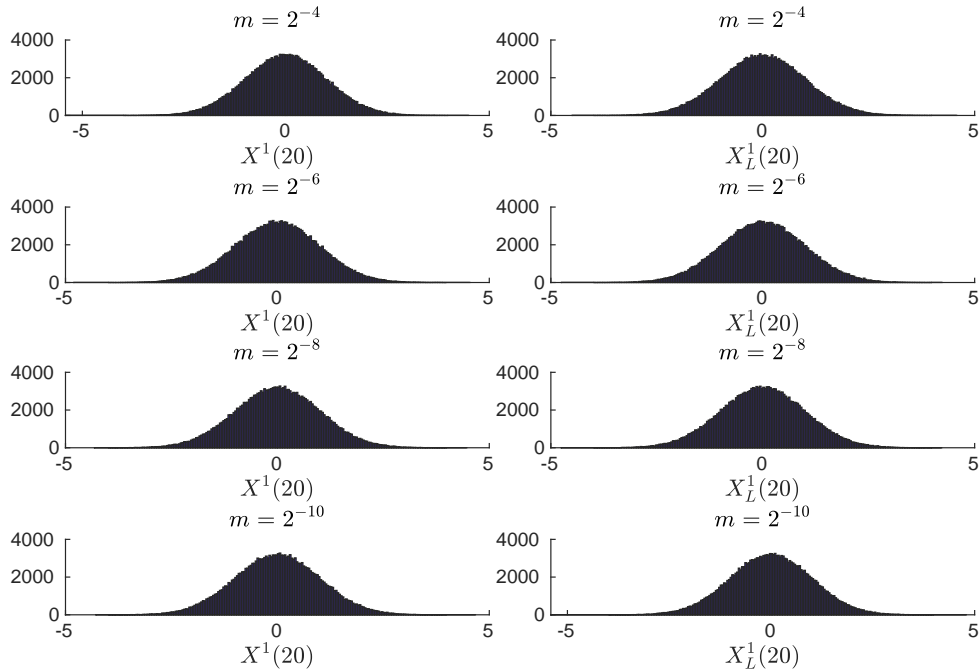}
\caption{(Left column) histogram for the final time position of the heat bath dynamics for a series of $m$-values
and (right column) corresponding histograms for the Langevin dynamics.
 }\label{fig:histX}
\end{figure}

\begin{figure}[H]
\centering
\includegraphics[width=0.8\textwidth]{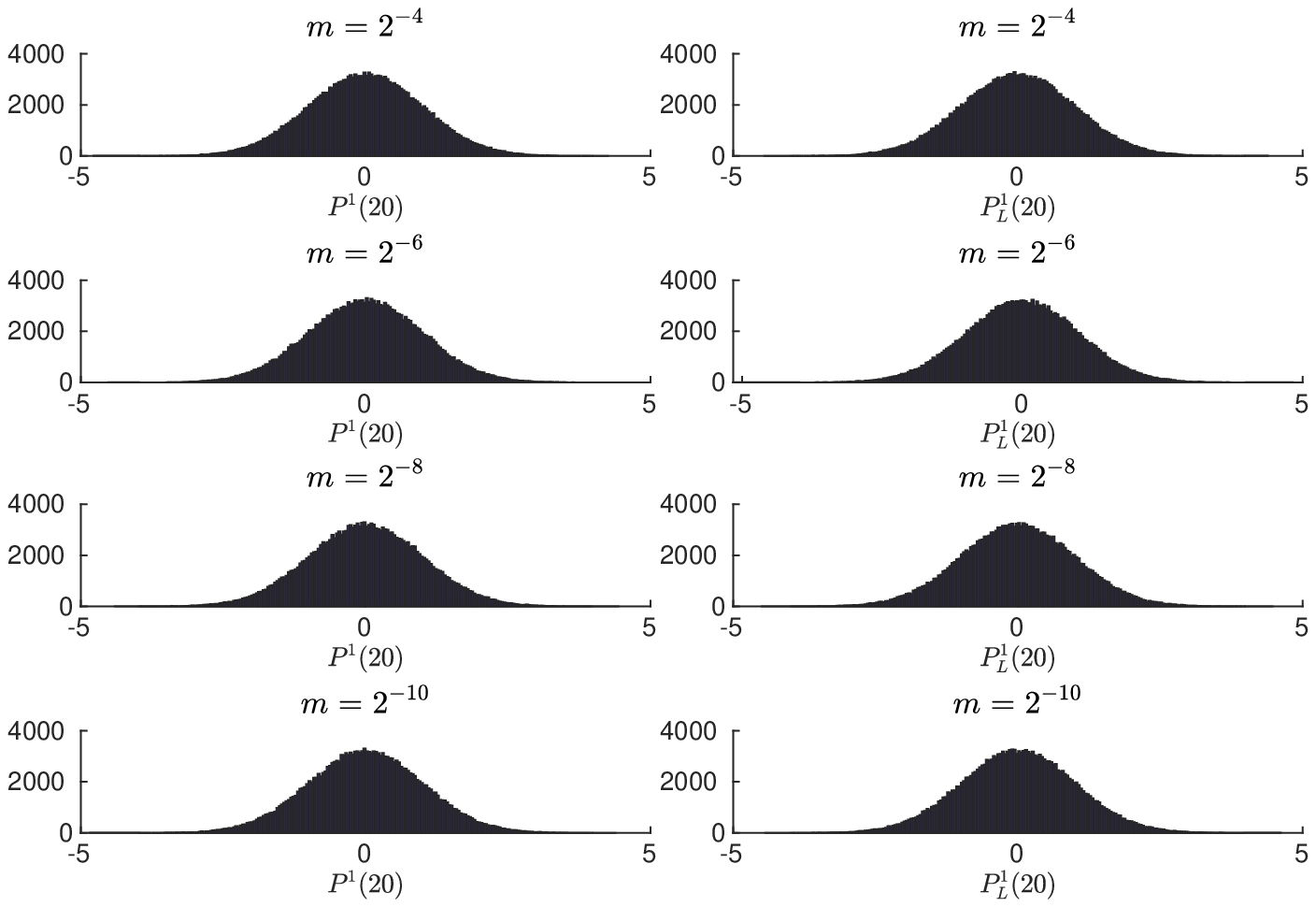}
\caption{
(Left column) histogram for the final time momentum of the heat bath dynamics 
for a series of $m$-values and (right column) corresponding histograms for the Langevin dynamics.
 }\label{fig:histP}
\end{figure}

\begin{figure}[H]
\centering
\includegraphics[width=0.82\textwidth]{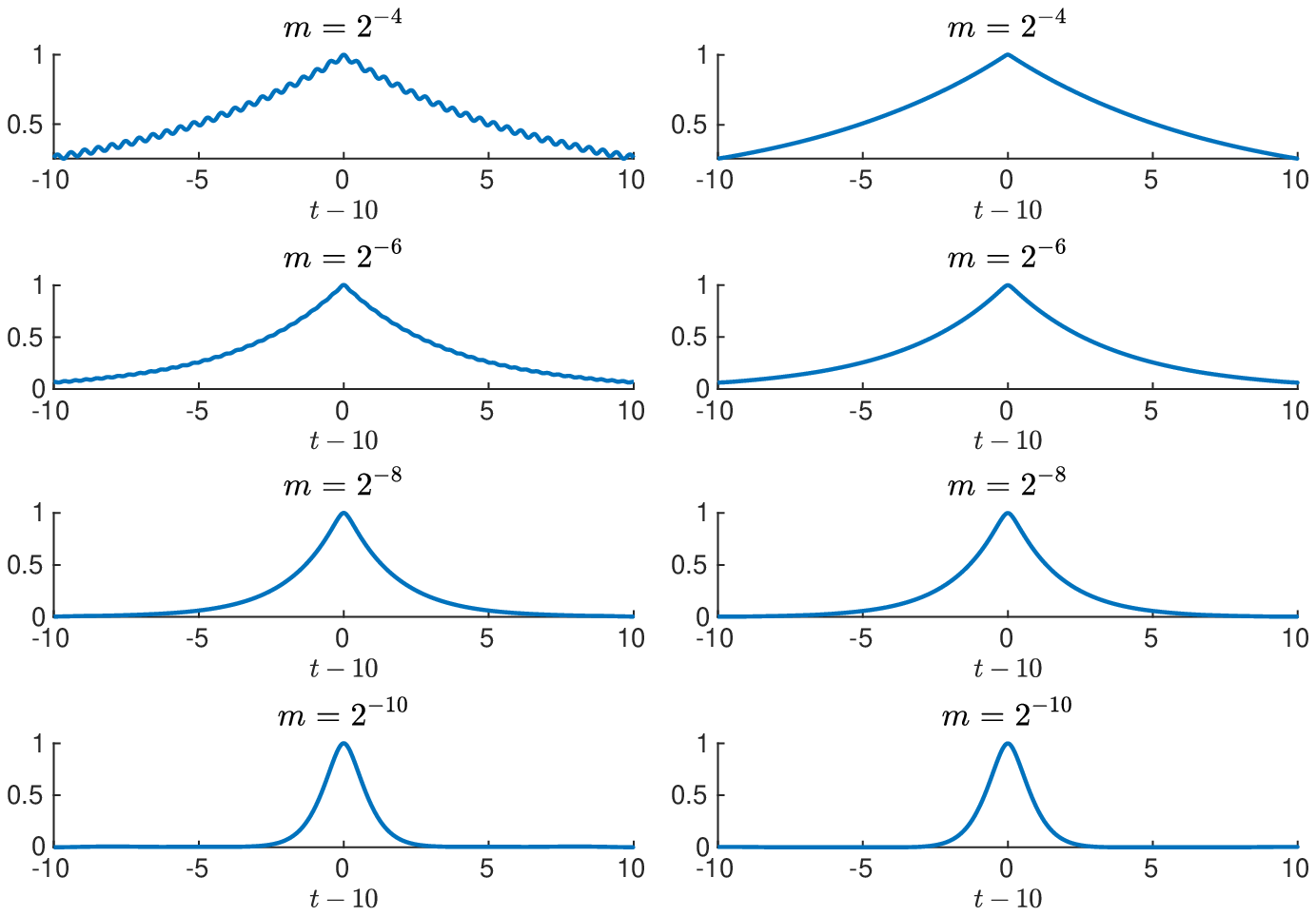}
\caption{
(Left column) heat bath dynamics auto-correlation function $\mathbb E[X^1(t)X^1(10)]$
for a series of $m$-values 
and (right column) corresponding Langevin dynamics auto-correlation functions $\mathbb E[X^1_L(t)X^1_L(10)]$. }\label{fig:acX}
\end{figure}

\begin{figure}[H]
\centering
\includegraphics[width=0.82\textwidth]{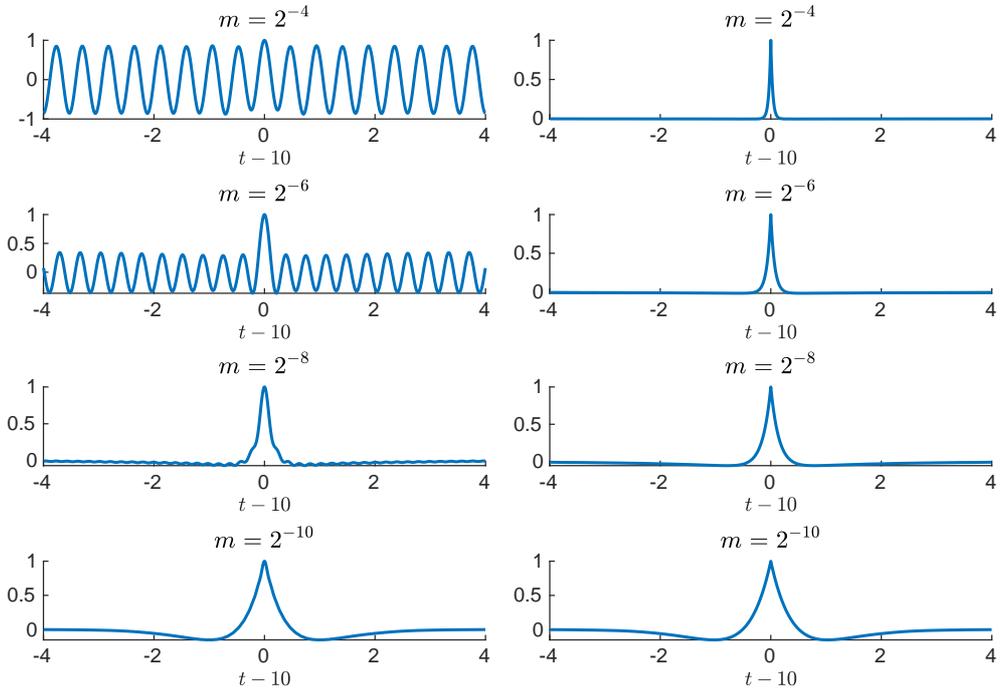}
\caption{
(Left column) heat bath dynamics auto-correlation function $\mathbb E[P^1(t)P^1(10)]$
for a series of $m$-values 
and (right column) corresponding Langevin dynamics auto-correlation functions 
$\mathbb E[P^1_L(t)P^1_L(10)]$. }\label{fig:acP}
\end{figure}

In addition to our above observations, we believe it would be of great interest to obtain numerical 
verification for the heat bath dynamics weak convergence rate $\mathcal{O}(m^{1/2})$ in~\eqref{larger_err}. But, most likely due to computational constraints, we are unable to achieve this currently since even at
the quite computationally demanding level of generating $M=200000$ heat bath dynamics sample paths,
it seems that the sample error dominates errors pertaining to the parameter $m$.

\end{document}